\Crefname{algocf}{Algorithm}{Algorithms}
\crefname{algocfline}{line}{lines}
\Crefname{invariant}{Invariant}{Invariants}
\Crefname{claim}{Claim}{Claims}
\Crefname{subclaim}{Subclaim}{Subclaims}
\newtheorem{theorem}{Theorem}[section]
\newtheorem{lemma}[theorem]{Lemma}
\newtheorem{definition}[theorem]{Definition}
\def\P#1{\mathrm{\mathbb{P}}\left[#1\right]}
\newcommand{\Exp}{\mbox{Exp}}
\newcommand{\E}[1]{\mathrm{\bf E}\left[#1\right]}
\newcommand{\cost}{\text{Cost}}
\newcommand{\A}{A}
\newcommand{\costbar}{\widebar{Cost}}
\newcommand{\guess}{\widehat{W}}
\newcommand{\PoisCost}{\Omega}
\newtheorem{customthm}{Theorem}
\newtheorem{customcor}{Corollary}
\author{Zhouzi Li \thanks{zhouzil@andrew.cmu.edu}\and Keerthana Gurushankar\and Mor Harchol-Balter\and Alan Scheller-Wolf \\ Carnegie Mellon University}
\title{Improving Upon the generalized $c\mu$ rule: a Whittle approach}
\begin{document}

\maketitle


\begin{abstract}

Scheduling a stream of jobs whose holding cost changes over time is a classic and practical problem.  Specifically, each job is associated with a holding cost (penalty), where a job's instantaneous holding cost is some increasing function of its class and current age (the time it has spent in the system since its arrival). The goal is to schedule the jobs to minimize the time-average total holding cost across all jobs.


The seminal paper on this problem, by Van Mieghem in 1995 \cite{van1995dynamic}, introduced the generalized $c\mu$ rule for scheduling jobs.  Since then, this problem has attracted significant interest  but remains challenging due to the absence of a finite-dimensional state space formulation. Consequently, subsequent works focus on more tractable versions of this problem.

This paper returns to the original problem, deriving a heuristic that empirically improves upon the generalized $c \mu$ rule and all existing heuristics.
Our approach is to first translate the holding cost minimization problem to a novel Restless Multi-Armed Bandit (R-MAB) problem with a {\em finite} number of arms. Based on our R-MAB, we derive a novel Whittle Index policy, which is both elegant and intuitive. 

\end{abstract}



\section{Introduction}

Since the seminal paper by Van Mieghem in 1995~\cite{van1995dynamic}, the problem of scheduling jobs with Time-Varying Holding Cost (the TVHC problem) has been an important topic in the operations literature: In a single-server multi-class system with $k$ classes of jobs, 
each job incurs a (time-varying) holding cost for every unit of time it remains in the system, and the goal of the scheduling policy is to minimize the time-average total holding cost across all jobs.  Specifically, define the \textit{age} of a job to be the time it has spent in the system since it arrived. For a job of class $i$, let $c_i(t)$ be the instantaneous holding cost when the job's age is $t$. We allow different classes to have different holding-cost functions (see Figure~\ref{fig:different hold}), but assume these functions are non-decreasing. Note that non-decreasing {\em instantaneous holding cost} is equivalent to convex {\em accumulated holding cost}, which is assumed in~\cite{van1995dynamic} and all its follow-on works. Also, throughout this paper, we assume that job sizes are {\em exponentially} distributed unless otherwise specified, and we assume that the job arrival process is Poisson. Let $\lambda_i$ and $\mu_i$, respectively,  denote the arrival rate and the completion rate of class $i$ jobs.

In the special case when the holding cost of each class is \emph{a constant function} (where $c_i(t)=c_i$), the optimal policy is the famous $c\mu$ rule, which always (preemptively) runs the job with the highest product $c_i \cdot \mu_i$~\cite{cox2020queues,Buyukkoc_Varaiya_Walrand_1985}. However, in the general case, this problem is much more complicated and the optimal policy is not known. 
In \cite{van1995dynamic},
 an analogue of the $c\mu$ rule is proposed, which is now commonly referred to as the \emph{generalized $c\mu$ rule}: The priority of a job is given by the product of its instantaneous holding cost, $c_i(t)$, and its instantaneous failure rate $\mu_i(t)$. While the generalized $c\mu$ rule is defined for general job sizes in \cite{van1995dynamic}, if the job sizes are exponentially distributed, the failure rate for class $i$ is just $\mu_i$. We discuss the generalized $c\mu$ rule in Section~\ref{sec:intro gen}.

Note that both the generalized $c \mu$ rule and the $c \mu$ rule are {\em index policies}: Each job has an index (in both cases the value $c_i(t) \mu_i(t)$) which is a function of only the job's state, and the policy always serves that job with the highest index.   Index policies are both simple and powerful.
This paper aims to solve the TVHC problem within the class of preemptive {\em index policies}, yielding a heuristic solution to the problem in general. We now review the literature, and motivate our approach. 



\subsection{The generalized $c\mu$ rule}
\label{sec:intro gen}

The generalized $c\mu$-rule has been shown to be asymptotically optimal for M/G/1 queues in the diffusion limit regime, where both the arrival and service rates go to infinity (and the total load goes to 1)~\cite{van1995dynamic}. Under the same diffusion limit, its analogues have also been shown to be asymptotically optimal for more complicated settings such as the multi-server setting and systems with abandonment (\cite{mandelbaum2004scheduling,atar2010cmu,  long2020dynamic}).
However, outside of the diffusion limit regime, it is known that the generalized $c\mu$-rule can perform poorly. Here we give an example to illustrate this.
 Consider a system with two classes of jobs: one with deadlines and one without. For deadline-based jobs, the holding cost is zero before their deadline but becomes significantly higher once the deadline is passed, while jobs without deadlines have a constant holding cost (see Figure~\ref{fig: example1}).  Assume that both classes have job sizes following the same $\Exp(\mu)$ distribution. Under the generalized $c\mu$ rule, the system grants priority to deadline-based jobs only \emph{after} they have missed their deadlines. Intuitively, however, it might make more sense to prioritize these jobs \emph{before} their deadline is reached. Simulation results show that by prioritizing the deadline-based jobs before their deadline, the overall holding cost can be substantially reduced in normal-traffic scenarios (see Figure~\ref{fig:1-deadline-plot}). 

Thus, the generalized $c\mu$ rule can be highly suboptimal under normal traffic (even with exponential job sizes), underscoring the need for a better scheduling policy. 

\begin{figure}[h]
    \begin{minipage}[c]{0.5\linewidth}
        \centering
        \includegraphics[width=0.8\textwidth]{./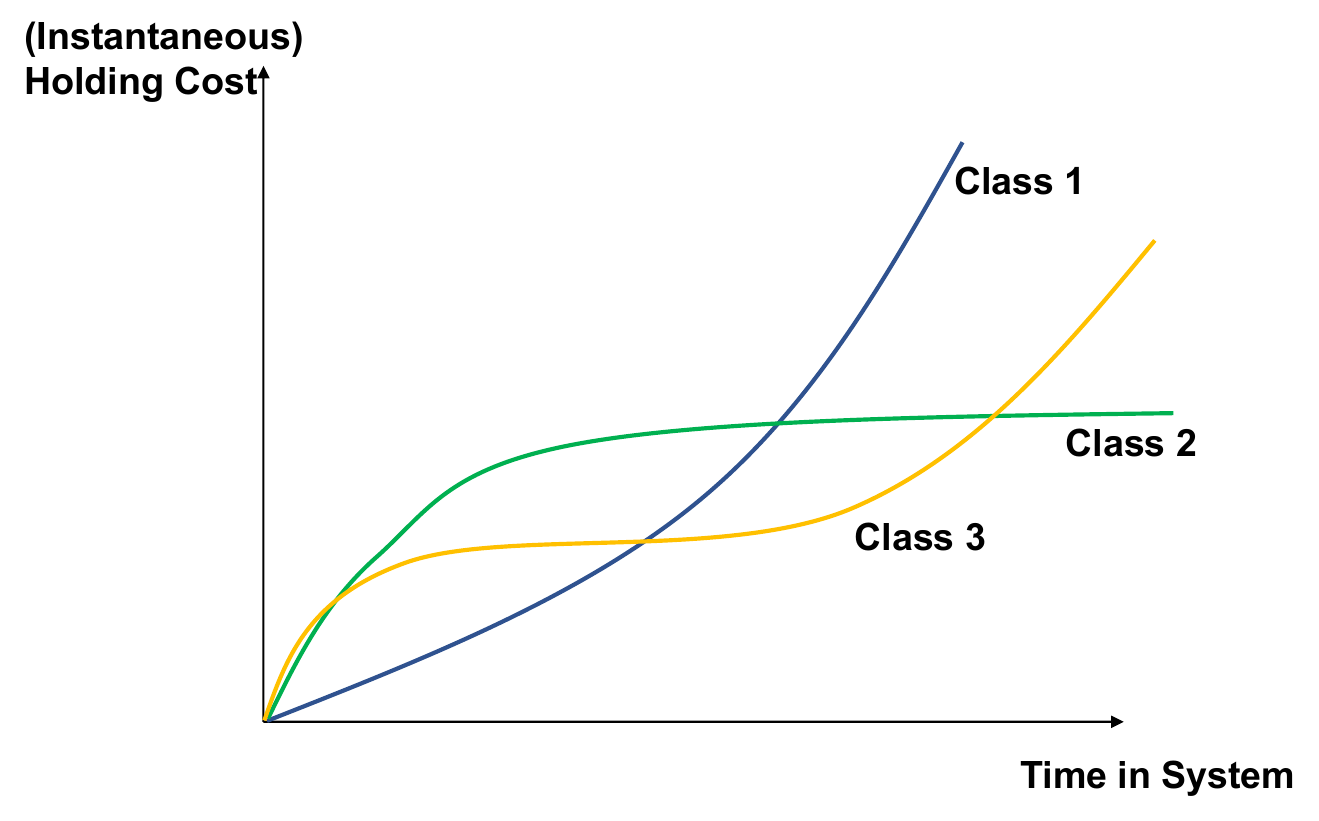}
        \caption{Classes with different holding costs.}
        \label{fig:different hold}
    \end{minipage}\hfill
    \begin{minipage}[c]{0.5\linewidth}
        \centering
        \includegraphics[width=0.8\textwidth]{./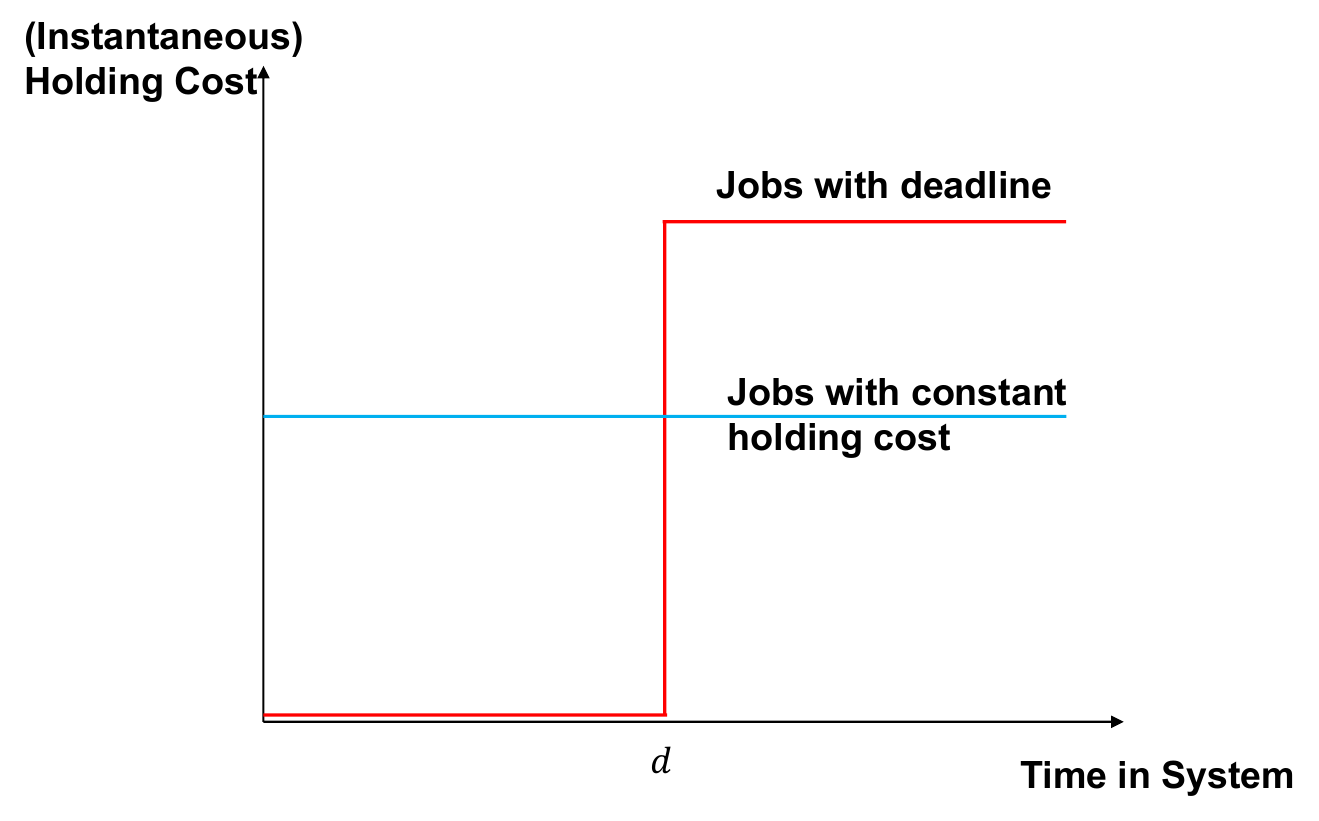}
        \caption{Example: generalized $c\mu$ rule is suboptimal.}
        \label{fig: example1}
    \end{minipage}\hfill
    \end{figure}

\subsection{Prior attempts to improve upon the generalized $c \mu$ rule}

 One of the difficulties in solving the TVHC problem lies in the fact that a job's holding cost depends on its age.  Hence, we need to track the ages of all jobs in the system, which requires an {\em infinite-dimensional} state space. As a result, the existing literature has considered  more tractable versions of the TVHC problem that have a {\em finite-dimensional} state space. Such a finite-dimensional state space allows the authors to represent their problem as a finite-dimensional Markov Decision Problem (MDP), or, more specifically, a Restless Multi-Armed Bandit problem (R-MAB) with a finite number of arms. We provide a brief tutorial on R-MABs in Section~\ref{sec:2.1}.

One way to create a finite-dimensional state space for the TVHC problem is to assume a static setting, where all $n$ jobs are present at time $0$ and there are no new arrivals.  This is the approach taken in~\cite{anand2018whittle,aalto2024whittle}. By limiting the number of jobs, the problem can now be translated to an $n$-arm R-MAB. From this R-MAB the authors then derive a Whittle index, which determines which job to run at every moment in time (see Appendix~\ref{sec:2.2} for a tutorial on the Whittle index). Figure~\ref{fig:road1} shows a road-map of the solution. The drawback of the static version setting is that the arrival rate (and consequently the load) of each class cannot be incorporated in the policy.  This is unfortunate, because, for example, in Figure~\ref{fig: example1}, it is reasonable to expect that if load is higher, we might want to start working on the deadline-oriented jobs sooner than we would under lower load.  

\begin{figure}[h]
    \centering 
    \scalebox{1}{ 
    \includegraphics[width=0.75\textwidth]{./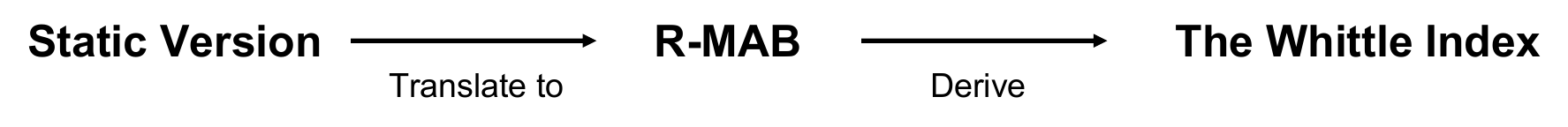}}
    \caption{A road-map to derive the Whittle Index policy for the static version of our problem: The static version is first translated to a R-MAB problem. Then the Whittle Index is derived based on the R-MAB problem.}
    \label{fig:road1}
\end{figure}


Another way to create a finite-dimensional state space is to change the TVHC problem so that individual jobs no longer have a holding cost.  Instead, the total holding cost for class $i$ is a function of the {\em number} of class $i$ jobs present, see  \cite{gurvich2009scheduling, atar2004scheduling,bispo2013single,larrnaaga2016dynamic,ansell2003whittle, larranaga2014index, glazebrook2003index}. With this change, one only needs to track the number of jobs within each class, enabling the problem to be translated into a finite-dimensional state space MDP or a $k$-arm R-MAB. 
Several papers, \cite{ansell2003whittle, larranaga2014index, glazebrook2003index}, next follow the road-map in Figure~\ref{fig:road2} where they use the $k$-arm R-MAB to derive a Whittle index policy. This queue-length holding cost setting is complementary to our age-based holding cost setting, as the policy derived for one setting cannot be translated into a policy in the other setting. 

\begin{figure}[h]
    \centering 
    \scalebox{1}{ 
    \includegraphics[width=0.75\textwidth]{./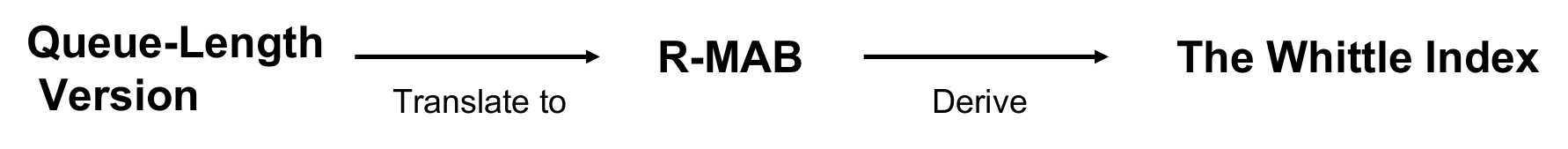}}
    \caption{A road-map 
    used to derive the Whittle Index policy for the queue-length holding cost setting.}
    \label{fig:road2}
\end{figure}

\vspace{-.1in}
\subsection{Our approach and contributions} 
The Whittle Index policy is known to be a good heuristic for R-MAB problems (see \cite{nino2007dynamic} for a discussion).
Like the prior work of Figures~\ref{fig:road1} and~\ref{fig:road2}, we adopt a similar road-map to derive the Whittle Index for our age-based holding cost minimization problem.   Unfortunately, in our problem we have infinite-dimensional state space which makes things much harder.\footnote{In fact, both papers on the static version~\cite{anand2018whittle,aalto2024whittle} note the analytical {\em intractability} of the dynamic age-based TVHC setting.}
First, while the translation to a tractable R-MAB is straightforward in prior works, it is hard to reduce our problem to a finite-arm R-MAB. Second, as in all the prior works, even after reducing to an R-MAB problem, we still need to derive the Whittle Index from the R-MAB, which requires overcoming two obstacles: 
\textit{(i)}~establishing \emph{indexability}, a key property of the R-MAB ensuring that the Whittle Index is well-defined; and 
\textit{(ii)}~performing the actual derivation of the Whittle index, which can be intricate.

In overcoming these difficulties, our paper makes two primary contributions, outlined in Figure~\ref{fig:road3}: First, 
Theorem~\ref{thm:bandit} (see Section~\ref{sec:restless}) 
translates the TVHC problem to a novel R-MAB problem. Second, Theorem~\ref{thm:index} (see Section~\ref{sec:Whittle}) proves indexability and derives the Whittle Index for our R-MAB problem, thus yielding a Whittle-based heuristic policy for our problem. 

\begin{figure}[h]
    \centering 
    \scalebox{1}{ 
    \includegraphics[width=0.75\textwidth]{./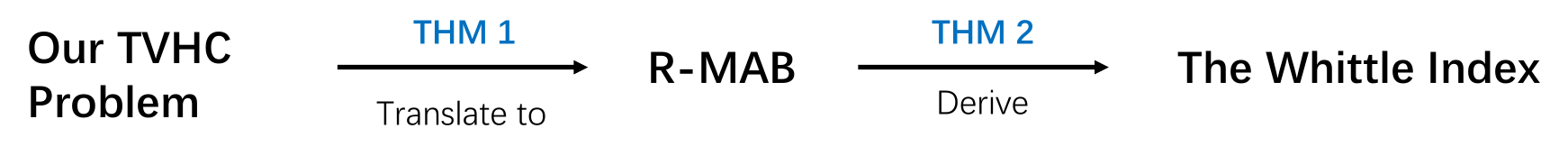}}
    \caption{Using Theorem~\ref{thm:bandit} and Theorem~\ref{thm:index}, we follow this road-map to derive the Whittle Index policy.  }
   \label{fig:road3}
\end{figure}

Our final result is a Whittle-based policy which 
always preemptively runs the job with the highest Whittle Index. 
The Whittle Index has an elegant and intuitive formulation given in Corollary~\ref{cor:main} (see Section~\ref{sec:Whittle}) and repeated here: Our Whittle Index for a class $i$ job with age $t$, $W_i(t)$, is given by 
\begin{equation}
    W_i(t) = \mu_i \cdot \E{c_i(t+X)}, \quad \text{where } X\sim \Exp(\mu_i-\lambda_i).
    \label{eq: intro Whittle}
\end{equation}

Intuitively, while the generalized $c\mu$ rule focuses on the current holding cost $c_i(t)$ and schedules according to the index $\mu_i c_i(t)$, our Whittle index looks a little further into the future to age $t + X$. Note that if the load is high ($\lambda_i$ is close to $\mu_i$), then we look further into the future. This aligns with the motivating example in Figure~\ref{fig: example1}, where it is preferable to prioritize a job before its deadline, particularly when there are likely to be many jobs in the system.

It is also interesting to contrast our policy with the heuristic given in \cite{aalto2024whittle}, which is the Whittle index in the static version. Under exponential job sizes, their index has the form $\mu_i \cdot \E{c_i(t+Y)}$, where $Y\sim \Exp(\mu_i)$. Note that our index given in \eqref{eq: intro Whittle} matches theirs when $\lambda_i=0$, where our setting degenerates to the static setting. In Section~\ref{sec:simulation} we compare our policy with all the existing heuristics (including  \cite{aalto2024whittle}). Our simulations show that our policy outperforms all the other heuristics.

\section{Tutorial on (Markov) R-MAB Problem}
\label{sec:2.1}
In this paper, when we say MAB, we always refer to the {\em Markov} MAB: 
In a Markov MAB problem, each arm corresponds to a Markov process with states that evolve based on an underlying state transition model. The agent chooses to pull an arm at each time step, incurring a cost determined by the states of the arms. The objective is to minimize the cumulative cost over time. 

Mathematically, let \( k \) denote the number of arms. Each arm \( i \) is associated with a Markov Decision Process (MDP), which is defined by a state space, the action space $\{active, passive\}$, the transition probabilities given states and actions, and a cost function $c_i(s)$ representing the cost incurred when arm $i$ is in state $s$. At each time step $t$, the agent observes the current states of all arms and selects at most one arm to pull (pulling means choosing the action $active$). Cost is incurred for each arm, and the state of each arm evolves according to its respective MDP and actions. 
The objective is to minimize the long-run average cost over an infinite time horizon.

A {\em discounted} MAB is defined exactly the same except for the objective. Instead of the long-run average cost, a discounted MAB aims to minimize the cumulative discounted cost, where the cost at time $t$ is multiplied by $e^{-\alpha t}$, and $\alpha>0$ is called the discount factor. 

A MAB is called {\em restful} if the state of an arm does not evolve when the arm is not pulled (the action is $passive$).  In contrast, An MAB is called {\em restless} if the state evolves whether the arm is pulled or not. For a restful MAB, the optimal policy is the Gittins index policy (\cite{gittins1979bandit,scully2021gittinspolicymg1queue}), while the optimal policy for restless MAB (R-MAB) is open.
The TVHC problem is intrinsically restless. 


\section{Translation to a Discounted R-MAB}
\label{sec:restless}

The main goal of this section is to complete the first step in the road-map (Figure~\ref{fig:road3}), which is the translation from the TVHC problem to a discounted R-MAB problem. 
First in Section~\ref{sec:scheduling formulation}, we restate the TVHC problem setting, and prove that one should serve jobs within each class in FCFS order. This fact allows us to only consider index functions that enforce FCFS within each class, which we will see is essential for the translation to an R-MAB problem. 
Second in Section~\ref{sec:restless bandit}, we prove our theorem which translates the TVHC problem to an R-MAB problem. Finally in Section~\ref{sec:decay}, a discount factor is introduced to the R-MAB problem. The discounted R-MAB problem serves as the starting point of the typical Whittle Index approach. The Whittle Index for the R-MAB without discounting is obtained by taking the limit on the discount factor (see Section~\ref{sec:decay}).


\subsection{The TVHC problem}
\label{sec:scheduling formulation}

\subsubsection{Problem setting}
We briefly restate the TVHC problem setting as follows:
In a single-server system, there are $k$ types of jobs. For each type $i$, we assume a Poisson arrival process with rate $\lambda_i$. We also assume that the job size distribution is exponentially distributed with rate $\mu_i$.
For a type $i$ job of age $t$, its instantaneous holding cost is $c_i(t)$, which is a non-decreasing and smooth\footnote{We need smoothness to simplify our derivation. However, our policy applies to any holding cost function that can be arbitrarily closely approximated by a smooth function (e.g., a deadline function, or any continuous function). } function. The objective of the scheduler is to schedule the jobs in order to minimize the time-average mean holding cost. Specifically,  Let $c_{total}(s)$ denote the sum of the holding costs of all jobs currently in the system at time $s$. Then the objective is to minimize $\E{\mbox{Holding Cost}}$, where
$$\E{\mbox{Holding Cost}} = \lim_{t \to \infty} \frac{1}{t} \int_{s = 0}^t c_{total}(s) ds. $$

To make the problem well-defined, 
we assume that there exists a policy to make the time-average mean holding cost converge. 



    


\subsubsection{Index functions}
We only focus on index policies in this paper. These are policies where a job's index (priority level) only depends on its own state, not on the state of other jobs in the system.  Specifically, since the job sizes are exponential, a job's index only depends on the job's type and age and does not depend on its attained service. Thus an index policy is specified by a set of functions $\{V_i(\cdot)\}_{i=1}^k$, where $V_i(t)$ is the index of a job of type $i$ with age $t$.

\subsubsection{FCFS within each class}
Intuitively, since the holding cost is increasing and the job sizes are exponentially distributed, to minimize mean holding cost, we should schedule jobs within each type in FCFS (First Come First Serve) order , since earlier arriving jobs have higher cost. Mathematically, we have the following lemma:

\begin{lemma}[FCFS within each type is optimal]
\label{lemma:FCFS}
    The optimal policy must serve jobs within each type in FCFS order. 
\end{lemma}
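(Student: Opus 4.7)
The plan is to prove the lemma via a standard sample-path interchange argument, exploiting the memorylessness of exponential service in an essential way. I would argue that given any policy $\pi$ which at some moment serves a younger class-$i$ job while an older class-$i$ job is also present, I can build a modified policy $\pi'$ whose expected total holding cost is no larger than that of $\pi$, by swapping the identities of the two jobs. Iterating (or taking limits) over all such violations yields an optimal policy that obeys FCFS within each class.

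In more detail, fix a class $i$ and two of its jobs $A$ and $B$ with arrival times $a_A < a_B$. Let $\pi$ be any policy and consider the random set $\mathcal{S}_{A,B}$ of time instants during which $\pi$ assigns service to either $A$ or $B$. Define $\pi'$ to be identical to $\pi$ in its decisions about when to serve a class-$i$ job from this pair (and in every decision regarding other jobs), but whenever $\pi$ serves the younger job $B$ while the older job $A$ is still in the system, $\pi'$ instead devotes that service to $A$; symmetrically if $\pi$ ever serves $A$ while only $A$ is present (or after $B$ has completed), $\pi'$ mimics it. Because class-$i$ service is $\Exp(\mu_i)$ and memoryless, the unordered pair of completion times $\{T_1, T_2\}$ (with $T_1 \le T_2$) of these two jobs can be coupled to be identical under $\pi$ and $\pi'$. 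Under $\pi'$, the older job $A$ completes at $T_1$ and $B$ at $T_2$; under $\pi$, in the non-FCFS case, it is the reverse.

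With this coupling, the difference in accumulated holding cost contributed by the pair is
\begin{equation*}
\cost(\pi) - \cost(\pi') \;=\; \int_{T_1}^{T_2}\bigl[c_i(s - a_A) - c_i(s - a_B)\bigr]\, ds \;\ge\; 0,
\end{equation*}
since $a_A < a_B$ implies $s - a_A > s - a_B$ and $c_i$ is non-decreasing. All other jobs' contributions are unchanged because $\pi'$ uses service capacity at exactly the same time instants, on the same class $i$, and other classes are untouched. Taking expectations over the coupling, $\E{\cost(\pi')} \le \E{\cost(\pi)}$.

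The main subtlety — and the only place the argument needs care — is making the coupling rigorous across the entire sample path when service is preemptive and both arrivals and other classes are involved. The key observation is that because we only swap labels within the pair $(A,B)$ and keep the aggregate class-$i$ service schedule intact, future arrivals, their ages, and the service received by all jobs outside $\{A,B\}$ are identical in the two coupled processes; only the ordering of the two completion times is permuted. Thus the pairwise improvement extends to total expected holding cost over the whole horizon, and applying the swap to every violating pair (e.g., by a bubble-sort-style sequence of interchanges) yields the lemma.
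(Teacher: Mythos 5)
Your proposal is correct and takes essentially the same approach as the paper: both exploit the memorylessness of exponential service to couple the class-$i$ completion epochs so that they are identical under the original policy and its FCFS-within-class modification (your pairwise coupling of the unordered completion times $\{T_1,T_2\}$ is exactly the paper's ``class-$i$ exponential timer'' device restricted to a pair), and then compare holding costs using monotonicity of $c_i$. The only cosmetic difference is that you perform pairwise interchanges and iterate, while the paper defines $\pi_{FCFS}$ in one shot and compares the resulting permutation of departure epochs against the sorted (FCFS) matching via a rearrangement/convexity argument; the substance is the same.
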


\begin{proof}
See Appendix~\ref{app:lemma3.1}.
\end{proof}

Motivated by Lemma~\ref{lemma:FCFS}, throughout this paper, we focus on index policies that enforce FCFS order within each class. This is equivalent to having a non-decreasing index function for each class with a FCFS tie-breaking rule.
In Section~\ref{sec:restless bandit}, we will leverage the fact that the index functions enforce FCFS within each class to substantially reduce the dimensionality of the state space.

\subsection{Translation to an R-MAB problem}
\label{sec:restless bandit}

In this section, we translate the TVHC problem into an R-MAB problem. Intuitively, our key idea is to let each arm of the R-MAB problem track the age of the oldest job within each class in the TVHC problem.  At first this seems insufficent, because we're not capturing the state of all the other jobs within each class.  
However we will show how the FCFS ordering within each class ensures that the distribution of the ages of younger jobs can be effectively captured through the stochastic behavior of the Poisson arrival process (a generally similar, yet distinct, idea is applied to a different problem in \cite{stanford2014waiting}). Our key theorem is as follows:
\begin{customthm}[Translation to an R-MAB]
\label{thm:bandit}
There exists an R-MAB problem, such that
for any set of non-decreasing index functions $\{V_i(\cdot)\}_{i=1}^k$, the corresponding index policies (breaking ties by FCFS) in the TVHC problem and the R-MAB problem incur the same cost.
\end{customthm}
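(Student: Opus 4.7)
I would build the R-MAB with exactly one arm per class: arm $i$ has state $T_i\in\{\emptyset\}\cup[0,\infty)$, representing the age of the oldest class-$i$ job currently in the TVHC system (or $\emptyset$ if that class is empty). By Lemma~\ref{lemma:FCFS} it suffices to consider index policies that schedule FCFS within each class, so the priority $V_i(\cdot)$ is fully determined by the head-of-line age $T_i$. I would couple the two systems on the same probability space (same class-$i$ Poisson arrival streams and same exponential service clocks), so that once the identification between the two state descriptors is made, the $T_i$ trajectories agree pathwise.

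\textbf{Arm dynamics.} In the R-MAB, $T_i$ drifts upward at rate $1$; a class-$i$ arrival at rate $\lambda_i$ moves $T_i$ from $\emptyset$ to $0$ and has no effect if $T_i\in[0,\infty)$; and if arm $i$ is active, a completion at rate $\mu_i$ removes the HOL job. The only nontrivial piece is the post-completion state. Because the policy is FCFS within class, every surviving class-$i$ job arrived strictly after the current HOL job; by the strong Markov property of the class-$i$ Poisson process, those survivors form a rate-$\lambda_i$ Poisson process on an interval of length $T_i$, independent of the service history. So upon completion the new $T_i$ equals $\emptyset$ with probability $e^{-\lambda_i T_i}$, and otherwise equals $T_i-E$ with $E\sim\Exp(\lambda_i)$ truncated to $[0,T_i]$. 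Since this transition law depends only on the arm's own $T_i$, the arms evolve independently given the actions, so the construction is a bona fide $k$-arm R-MAB.

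\textbf{Cost function.} I would equip arm $i$ with state cost $g_i(t):=c_i(t)+\lambda_i\int_0^t c_i(u)\,du$ for $t\ge 0$ and $g_i(\emptyset):=0$. The motivation is Campbell's theorem: conditional on $T_i=t$, the additional class-$i$ jobs form a Poisson$(\lambda_i)$ process on an interval of length $t$, so the expected sum of their instantaneous holding costs is $\lambda_i\int_0^t c_i(u)\,du$, and adding the HOL contribution $c_i(t)$ gives $g_i(t)$. Conditioning on $(T_1(s),\ldots,T_k(s))$ and using the tower property therefore yields
\[
\lim_{t\to\infty}\frac{1}{t}\int_0^t c_{total}(s)\,ds \;=\; \lim_{t\to\infty}\frac{1}{t}\int_0^t \sum_{i=1}^k g_i(T_i(s))\,ds,
\]
which is the cost equality the theorem asserts (the left side is the TVHC cost, the right is the R-MAB cost under the same index policy).

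\textbf{Main obstacle.} The crux of the argument is the distributional invariant ``conditional on $T_i=t$, the other class-$i$ jobs in the TVHC system form a rate-$\lambda_i$ Poisson process on an interval of length $t$, independent of everything else.'' I would establish this by a structural induction over events: arrivals extend the Poisson process by a fresh increment; a completion under FCFS removes exactly the oldest point and re-anchors the interval to the next point; service decisions made from the indices $V_i(T_i)$ do not depend on the hidden randomness and hence do not thin the process selectively. Once this invariant is nailed down, the cost identity via Campbell's formula is immediate. The bookkeeping for the empty-state transitions, and the argument that any transient starting configuration washes out (so the time averages agree with the stationary expectations), are straightforward but require care.
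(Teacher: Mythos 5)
Your proposal is correct and follows essentially the same route as the paper: one arm per class whose state is the age of the oldest class-$i$ job, completions causing an exponential$(\lambda_i)$ downward jump, the cost $g_i(t)=c_i(t)+\lambda_i\int_0^t c_i(u)\,du$ being exactly the paper's $r_i$ (the conditional expected total class-$i$ holding cost given the head-of-line age), and a pathwise coupling on shared arrival and service clocks plus the Poisson-conditional-on-HOL-age invariant to equate the long-run costs. The only cosmetic difference is your $\emptyset$ state versus the paper's negative-state encoding of the time until the next arrival, which is equivalent by memorylessness.
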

\begin{proof}[Proof Sketch]

We construct a continuous R-MAB as follows. 
\begin{itemize}
    \item The R-MAB has $k$ arms, each representing a class. There are two actions for each arm (active or passive), where arm $i$ is active means the oldest class $i$ job is served, and passive means the job served at this moment is of some other type.   
    \item {\em Arm state} $T_i(t)$: The $i$th arm's state is $T_i(t)\in \mathbb{R}$. The arm state can be interpreted in the TVHC problem as the age of the oldest type $i$ job at time $t$.  If there is no type $i$ job in the system, $T_i(t)$ is negative, which means the next type $i$ arrival happens $-T_i(t)$ time later. Note that for arm $i$, the action active is only allowed at time $t$ when $T_i(t)$ is positive. 
    \item {\em Transition Probability: } If the action for arm $i$ is passive, the $i$th arm state grows with rate 1. Otherwise if arm $i$ is active, the $i$th arm state may drop an $\Exp(\lambda_i)$ amount according to a Poisson process (when completions happen). Mathematically, the transition function is 
    \[\text{passive: }dT_i(t) = dt, \quad \text{active: }dT_i(t) = dt - X \cdot dN_{\mu_i}(t),  \]
    $\text{where } X\sim \Exp(\lambda_i), \ \text{and} \ N_{\mu_i}  \text{ is a Poisson counting process with rate $\mu_i$.}$

    This transition function can be interpreted in the TVHC problem as follows: A passive action means that the oldest class $i$ job is not in service and its age grows with rate 1. If the action is active, then in the next $dt$ time period, the oldest class $i$ job is served. There is a probability of $\mu_i dt$ that the job is completed and leaves the system, in which case the oldest class $i$ job in the system becomes the previously second-oldest class $i$ job. Since the inter-arrival time follows the distribution $\Exp(\lambda_i)$, the age of the oldest job drops by an $\Exp(\lambda_i)$ amount.

    \item {\em Constraint: } The number of active arms at any time is at most 1.

    \item {\em Cost Function $r_i(s)$: } Arm $i$ incurs a cost of $r_i(s)$ at state $s$, where $r_i(s)$ is defined to be 
    \begin{equation}
        r_i(s):= c_i(s) + \E{\sum_{j=1}^{N_i} c_i(Y_j)}, 
        \label{eq:r}
    \end{equation}
    where $Y_j=\sum_{m=1}^j X_m$, $X_m \sim \Exp(\lambda_i)$, and $N_i$  is the random variable such that $Y_{N_i}<s$ and $Y_{N_i+1}\geq s$. For $s<0$, define $r_i(s)=0$.

    To interpret (\ref{eq:r}), observe that $r_i(s)$ represents the expected total instantaneous holding cost of all class $i$ jobs given that the age of the oldest class $i$ job is $s$: Since the index policy serves class $i$ jobs in FCFS order, no ``young'' class $i$ jobs (class $i$ jobs younger than the current oldest one) have been completed. Since the inter-arrival times are distributed as $\Exp(\lambda_i)$, their ages are distributed as 
    \[s-X_1, \ s-X_1-X_2, \ \ldots, \ s-\sum_{m=1}^{N_i} X_m.\]

    Thus the expected total instantaneous cost of all the young class $i$ jobs is $\E{\sum_{j=1}^{N_i} c_i(s-Y_j)}$, which is equal to $\E{\sum_{j=1}^{N_i} c_i(Y_j)}$ as the $N_i$ arrivals are distributed as uniform order statistics in $(0,s)$.
    
    

    \item {\em Objective: } The objective is to minimize the long-run expected cost. Mathematically, 
    \[\cost = \E{\lim_{x\to \infty} \frac{1}{x}\int_0^x \sum_{i=1}^k r_i(T_i(t)) dt \  \bigg |\ T_i(0) = 0}.\]
\end{itemize}

Through the construction and the corresponding interpretation in the TVHC problem, it is straightforward to see a coupling between the $i$th arm state and the age of the oldest class $i$ job 
in the TVHC problem. However, although intuitive, it is delicate to rigorously prove that characterizing the holding cost of all young jobs by expectation does not change the expected long-run mean cost. We refer to  Appendix~\ref{app:thm3.2} for the rigorous proof, where we use a coupling argument over sample paths to build equivalence between the two problems. 
\end{proof}

\subsection{Introducing the discount factor}
\label{sec:decay}
Finally, we introduce a discount factor into our R-MAB problem. It is a typical step before applying the Whittle Index approach. The discounted objective is defined to be:
\begin{equation}
\cost(\mathbf{s_0}, \alpha) = \E{\int_0^\infty \sum_{i=1}^k r_i(T_i(t))\cdot \alpha e^{-\alpha t} dt \  \bigg |\ T_i(0) = \mathbf{s_0}(i)},
\label{eq: decay cost}
\end{equation}
where $\mathbf{s_0}$ is the vector of the initial states and $\mathbf{s_0}(i)$ is the initial state of arm $i$.

By standard results in dynamic programming (e.g., \cite{ansell2003whittle, puterman2014markov}), also known as the Wiener's Tauberian theorem, we have the following lemma:

\begin{lemma}
\label{lemma:decay}
    For any initial state $\mathbf{s_0}$, 
    \[\lim_{\alpha\to 0} \cost(\mathbf{s_0},\alpha) = \cost.\]
\end{lemma}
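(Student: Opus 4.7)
The plan is to view this as the Abelian direction of the Cesàro--Abel Tauberian theorem, applied pathwise and then transferred through the outer expectation. The one-dimensional fact we need is: for any locally integrable $g : [0,\infty) \to [0,\infty)$ with $\frac{1}{x}\int_0^x g(t)\,dt \to L$ as $x \to \infty$, one has $\alpha \int_0^\infty e^{-\alpha t} g(t)\,dt \to L$ as $\alpha \to 0^+$. This is the ``easy'' (Abelian) direction of Hardy--Littlewood-style Tauberian theorems and requires no side condition on $g$, which is exactly why the lemma can be invoked ``for free'' from standard dynamic-programming references.

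To apply it, I would work pathwise. Fix a sample path of the R-MAB, set $g(t) := \sum_{i=1}^k r_i(T_i(t))$, and note that each $r_i$ defined in (\ref{eq:r}) is non-negative since every $c_i$ is non-negative, hence $g \geq 0$. The well-definedness assumption from Section~\ref{sec:scheduling formulation} together with the coupling of \cref{thm:bandit} guarantees that the Cesàro average $\frac{1}{x}\int_0^x g(t)\,dt$ converges almost surely to a non-negative limit $L$, with $\E{L} = \cost$ by definition of $\cost$. The standard proof of the Abelian theorem then proceeds via integration by parts: writing $G(t) := \int_0^t g(s)\,ds$, one obtains
\begin{equation*}
\alpha \int_0^\infty e^{-\alpha t} g(t)\,dt \;=\; \int_0^\infty \alpha^2 t\, e^{-\alpha t} \cdot \frac{G(t)}{t}\,dt,
\end{equation*}
and since $\alpha^2 t e^{-\alpha t}$ is a probability density on $(0,\infty)$ concentrating at $t \sim 1/\alpha$ as $\alpha \to 0$, the right-hand side is a weighted average of $G(t)/t$ that asymptotically picks out its limit $L$. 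Applied to each sample path, this yields the almost-sure convergence $\alpha \int_0^\infty e^{-\alpha t} g(t)\,dt \to L$.

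The main obstacle is then the interchange of the limit in $\alpha$ with the outer expectation needed to conclude $\cost(\mathbf{s_0},\alpha) \to \E{L} = \cost$. A natural dominating random variable is $M := \sup_{t > 0} G(t)/t$, since the weighted-average representation above gives $\alpha \int_0^\infty e^{-\alpha t} g(t)\,dt \leq M$ uniformly in $\alpha$; integrability of $M$ follows from the existence of a stable policy with finite average cost, together with standard ergodic bounds on the Poisson-driven dynamics induced by the coupling in \cref{thm:bandit}. Once $\E{M} < \infty$ is in place, dominated convergence completes the argument, and the particular choice of initial state $\mathbf{s_0}$ does not affect the limit because the Cesàro limit $L$ is insensitive to finite initial conditions under the ergodicity of the underlying Markov chain.
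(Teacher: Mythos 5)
Your overall route is the same one the paper gestures at -- \cref{lemma:decay} is stated in the paper with no proof beyond a citation to standard vanishing-discount/Tauberian results, and your pathwise Abelian argument (Ces\`aro limit implies Abel limit, via the density $\alpha^2 t e^{-\alpha t}$) is the right one-dimensional fact. However, the pathwise formulation creates an interchange-of-limits obligation that your proposal does not actually discharge, and this is where the gap lies. First, the almost-sure convergence of $\frac{1}{x}\int_0^x g(t)\,dt$ is not given by the paper's standing assumption: that assumption only says there is a policy under which the \emph{expected} time-average cost converges, and it says nothing about pathwise limits under the particular index policy at hand; to get a.s.\ convergence you would need a regenerative/renewal-reward argument (busy cycles of the coupled system) which you do not supply. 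Second, and more seriously, the domination step is asserted rather than proved: $\E{M}<\infty$ for $M=\sup_{t>0} G(t)/t$ does not follow from ``the existence of a stable policy with finite average cost'' -- finiteness of an expected long-run average does not bound the expectation of a running supremum, and with holding costs allowed to grow (only constrained by the condition in Lemma~\ref{lemma:c converge}) a maximal-type inequality is genuinely needed, not a routine ergodic bound. As written, the dominated-convergence step is the crux of the lemma and it is exactly the part left to hand-waving.

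A cleaner way to stay within the ``standard'' argument the paper cites is to avoid the pathwise detour altogether: since all $r_i\ge 0$, Tonelli gives $\cost(\mathbf{s_0},\alpha)=\alpha\int_0^\infty e^{-\alpha t}\,h(t)\,dt$ with $h(t):=\E{\sum_i r_i(T_i(t))\mid T_i(0)=\mathbf{s_0}(i)}$, and the deterministic Abelian theorem applies directly to $h$ with no dominating random variable needed; what then remains is to identify $\lim_{x\to\infty}\frac{1}{x}\int_0^x h(t)\,dt$ with the paper's $\cost$ (defined as the expectation of a pathwise limit started from $0$) and to remove the dependence on $\mathbf{s_0}$, both of which again come from the regenerative structure of the policy. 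In short: correct skeleton, same spirit as the paper's citation, but the two analytic steps you wave at (a.s.\ Ces\`aro convergence and integrability of the dominating supremum) are precisely the nontrivial content and are not established by what you wrote.
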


This lemma indicates that to get a good index for our R-MAB problem (and thus the TVHC problem), it suffices to find a good index policy to optimize the discounted cost $\cost(\mathbf{s_0},\alpha)$, and then take the limit $\alpha\to 0$ on the index.

\section{Derivation of the Whittle Index}
\label{sec:Whittle}

Given the discounted R-MAB problem, we now perform the second part of the road-map of Figure~\ref{fig:road3}: the derivation of the Whittle Index. Our main theorem is Theorem~\ref{thm:index}.  In Corollary~\ref{cor:main} we obtain the Whittle Index for our TVHC problem by taking the limit on the discount factor in Theorem~\ref{thm:index}.

The section is organized as follows: First in Section~\ref{sec:single}, we define the single-arm bandit formulation and define the Whittle Index. Then in Section~\ref{sec:def of whittle} we propose a guess of the Whittle Index. The closed form of the guessed Whittle Index is derived in Section~\ref{sec:solve whittle}. Finally, we verify indexability and that this guess of the Whittle Index is true in Section~\ref{sec:index}, hence proving Theorem~\ref{thm:index}. 

In most parts of this section, we focus on the single-arm bandit problem, and we drop the subscript $i$ for simplicity. 
However, we should remember that there is an independent single-arm bandit problem for each arm in our R-MAB problem, and the Whittle Index obtained from each single-arm bandit problem serves as the index for each arm in the R-MAB problem.


\subsection{Set up the Whittle Index approach}
\label{sec:single}

In this section, we set up the problem following the Whittle approach (see Appendix~\ref{sec:2.2} for a tutorial). 
We first define the single-arm bandit for class $i$ where not pulling the arm is rewarded with some compensation. Based on this single-arm bandit formulation, the Whittle Index is defined. 

Formally, we drop the subscript $i$ and redefine our problem as follows.

\begin{itemize}
    \item {\em State and Action: } As before, the state is denoted by $T(t)$. The action set is $\{\text{active},\text{ passive}\}$, and ``active'' is only allowed at time $t$ if $T(t)>0$. 
    \item {\em Transition Probability: }
    \begin{equation}
    \label{eq:active trans}
        \text{passive: }dT(t) = dt,\quad \text{active: }dT(t) = dt - X \cdot dN_\mu(t),
    \end{equation}
    $\text{ where } X\sim \Exp(\lambda) $, and $ N_{\mu} \text{ is a Poisson counting process with rate $\mu$.}$

    \item {\em Passive Compensation $\ell$: } We define $\ell$ to be the compensation for a passive action. It is further explained in the cost function below.

    \item {\em Cost Function $\theta(s,a): $} We define
     the cost incurred at state $s$ and action $a$ to be
     \[
\theta(s,a) := 
\begin{cases}
    \alpha\cdot r(s), & \text{if } a=\text{active}, \\
    \alpha\cdot r(s)-\ell, & \text{if } a=\text{passive}, 
\end{cases}
\]
     where $r(s)$ is defined as before:
    \begin{equation}
        r(s):= c(s) + \E{\sum_{j=1}^N c(Y_j)}, 
        \label{eq:r2}
    \end{equation}
    where $Y_j=\sum_{m=1}^j X_m$, $X_m \sim \Exp(\lambda)$, and $N$ is the random variable such that $Y_N<s$ and $Y_{N+1}\geq s$. For $s<0$, define $r(s)=0$.

    \item {\em Objective: } The objective is to minimize the discounted total holding cost. Mathematically, let $s_0$ denote the initial state. Then the discounted total holding cost is 
    \[\cost(s_0, \alpha) = \E{\int_0^\infty \theta(T(t))\cdot e^{-\alpha t} dt \  \bigg |\ T(0) = s_0}.\]
\end{itemize}

A policy $\pi$ takes a state $T$ and outputs an action $a$, where $a$ is active or passive.  Thus we denote a policy by a function $\pi:(-\infty,\infty)\to \{0,1\}$ where 1 means active. Note that any policy receiving a negative state must return the passive action.

For any value of the compensation $\ell$, consider the optimal policy of the single-arm bandit problem. Define $\Pi(\ell)$ to be the set of states where the passive action is optimal. Note that for any compensation $\ell$, any non-positive state is in $\Pi(\ell)$.

\begin{definition}[$\Pi$]
    For any compensation $\ell$, define \[\Pi(\ell):=\{t_0 \mid \text{passive action at state $t_0$ is optimal}\}.\]
    Throughout, we omit writing the discount factor $\alpha$, to simplify notation.
\end{definition}

Intuitively, the larger the compensation, the more likely it is that a passive action should be taken. This property is called ``indexability.''

\begin{definition}[Indexability]
 A problem is indexable if for any $\ell_1<\ell_2$, we have $\Pi(\ell_1)\subseteq \Pi(\ell_2)$.
\end{definition}

The Whittle Index of a state is defined to be the smallest compensation such that the passive action is optimal at this state. It can be intuitively understood as the value of compensation where active and passive actions are both optimal.
\begin{definition}[Whittle's Index]
The Whittle Index of a state $t$ with discount factor $\alpha$ is defined to be
\[W(t,\alpha):= \inf_\ell \{t\in \Pi(\ell)\}.\]
\end{definition}





Notice that both indexability and the form of the Whittle Index involves understanding $\Pi(\ell)$, which requires us to understand the optimal policy of this single-arm bandit formulation. We follow these steps to establish indexability and derive the Whittle Index: \textit{(i)} We propose a guess of the optimal policy, and accordingly define the guessed Whittle Index (Section~\ref{sec:def of whittle}); \textit{(ii)} We compute the guessed Whittle Index (Section~\ref{sec:solve whittle}); \textit{(iii)} We verify indexability and that the guessed Whittle Index is the true Whittle Index (Section~\ref{sec:index}).

\subsection{A guess of the Whittle Index}
\label{sec:def of whittle}

Intuitively, it is reasonable to guess that the optimal policy is a threshold policy that chooses the passive action for states smaller than the threshold and chooses active otherwise. Mathematically, we define a threshold policy $Threshold(x)$ below:
\begin{definition}[$Threshold(x)$]
    A threshold policy, $Threshold(x)$, parameterized by state $x$, selects the passive action iff state $t<x$. 
\end{definition}

For any threshold policy, denote its cost by the following notation:
\begin{definition}[Cost of $Threshold(x)$]
\label{def:costxt}
    For policy $Threshold(x)$, define the discounted total holding cost given discount factor $\alpha$, initial state $s_0$ and compensation $\ell$ to be $\cost^x(s_0,\alpha,\ell)$.
\end{definition}

With the guess that a threshold policy is optimal, the following ``guessed Whittle Index'' is proposed. Both guesses will be proven true in Section~\ref{sec:index}.

\begin{definition}[Guessed Whittle's Index]
    The guessed Whittle Index of state $t_0$, $\guess(t_0,\alpha)$, is the value of the compensation $\ell$ such that the policy $Threshold(t_0)$ is the optimal threshold policy for initial state $t_0$. Mathematically, define $\guess(0,\alpha) = 0$ and for any $t_0>0$, define $\guess(t_0,\alpha)$ to be the value of $\ell$ that satisfies \footnote{Intuitively, the optimal Threshold policy is the value of $x$ that minimizes $Cost^x(t_0,\alpha,\ell)$. Differentiating yields \eqref{eq:whittle}.}
    \begin{equation}
    \lim_{\delta\to 0} \frac{\cost^{t_0}(t_0,\alpha,\ell) - \cost^{t_0+\delta}(t_0,\alpha,\ell)}{\delta} = 0.
    \label{eq:whittle}
\end{equation}
\end{definition}

Unfortunately, it is not easy to solve the guessed Whittle Index from \eqref{eq:whittle}. The whole next subsection (Section~\ref{sec:solve whittle}) is devoted to solving \eqref{eq:whittle}.

\subsection{The Guessed Whittle Index}
\label{sec:solve whittle}

In this section we will solve (\ref{eq:whittle}) for the guessed Whittle Index, resulting in  
Lemma~\ref{thm:main}.
First we introduce some notation. Then we analyze the cost of a threshold policy (Lemma~\ref{lemma:costs}) and give a (complicated) expression for the guessed Whittle Index (Lemma~\ref{lemma:solving whittle}). Finally, we show how to simplify the expression to the elegant and simple formula in Lemma~\ref{thm:main}.

\subsubsection{Notation}

A short summary of the notation used is given in Table~\ref{table:notation}. 

{\small 
\begin{table}[h]
  \caption{Additional Notation Table}
  \label{tab:freq}
  \begin{tabular}{ccl}
    \toprule
    Notation  & Meaning\\
    \midrule
    $\costbar(t_0,\alpha)$ &  The discounted cost incurred during an M/M/1 busy period\\
    $T_1$ & $T_1\sim BP^{M/M/1}$, the length of an M/M/1 busy period \\
    $T_2$ & $T_2\sim \Exp(\lambda)$ \\
    $\gamma_i,\ i=1,2$ & $\E{e^{-\alpha T_i}}$\\
    
    $\Gamma(t_0,\alpha)$ & $\E{\int_0^{T_2}\alpha r(t_0-T_2+t) e^{-\alpha t} dt}$\\
    $X$ & $X\sim \Exp\left(\frac{\alpha}{1-\gamma_1}\right)$\\
  \bottomrule
\end{tabular}
\label{table:notation}
\end{table}
}

\begin{definition}[$\costbar$]
\label{def:decay costbar}
    In an M/M/1 queue, suppose a busy period starts at time 0, and the age of the oldest job in the system is defined to be $A(t)$. Define $T_1$ to be the first time such that $A(t)<0$ (which means there is no job in the system and the busy period ends). 
    Define $\costbar(t_0,\alpha)$ to be the expected discounted cost incurred during the busy period where the cost at time $t$ is $r(t_0+A(t))$:
    \begin{equation}
        \costbar(t_0,\alpha):= \E{\int_0^{T_1} r(t_0+A(t))\cdot \alpha e^{-\alpha t} dt}.
    \end{equation}
\end{definition}

\begin{definition}[$\Gamma(t_0,\alpha)$]
\label{def:decay gamma}
Let $T_2\sim \Exp(\lambda)$. We define $\Gamma(t_0,\alpha)$ to be the expected discounted cost during the interval $[t_0 - T_2, t_0]$: 
    \begin{equation}
        \Gamma(t_0,\alpha):= \E{\int_0^{T_2} r(t_0-T_2+t) \cdot \alpha e^{-\alpha t} dt}.
    \end{equation}
\end{definition}

\begin{definition}[$\gamma_1,\gamma_2$]
    Define $\gamma_1:=\E{e^{-\alpha T_1}}$ (where $T_1\sim  BP^{M/M/1}$ is the length of an M/M/1 busy period) and $\gamma_2:=\E{e^{-\alpha T_2}}$ (where $T_2\sim \Exp(\lambda)$).
\end{definition}

We have the following lemma characterizing $\gamma_1$ and $\gamma_2$.

\begin{lemma}
    $\gamma_1$ and $\gamma_2$ satisfy the following equations:
    \begin{equation}
    \mu = \frac{(\alpha+\lambda-\lambda\gamma_1)\gamma_1}{1-\gamma_1},
    \qquad \qquad 
    \gamma_2=\frac{\lambda}{\lambda+\alpha}.
    \label{eq:gamma2}
\end{equation}
\end{lemma}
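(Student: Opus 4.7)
The two identities are independent, and I would handle them separately.

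For $\gamma_2 = \lambda/(\lambda+\alpha)$, the proof is a one-line computation: since $T_2 \sim \Exp(\lambda)$,
$$\gamma_2 = \E{e^{-\alpha T_2}} = \int_0^\infty e^{-\alpha t}\,\lambda e^{-\lambda t}\,dt = \frac{\lambda}{\lambda+\alpha}.$$

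For $\gamma_1$, I would use the classical first-step decomposition of an M/M/1 busy period. Let $T$ be the time to the first event after the busy period starts. Since the first event is the earlier of an $\Exp(\mu)$ service and an $\Exp(\lambda)$ arrival, $T \sim \Exp(\lambda+\mu)$, and this first event is a completion with probability $\mu/(\lambda+\mu)$ and an arrival with probability $\lambda/(\lambda+\mu)$, independently of $T$. If it is a completion, the busy period ends at time $T$. If it is an arrival, the queue now holds two jobs, and by the strong Markov property the remaining time until the queue empties is distributed as the sum of two i.i.d.\ copies of $T_1$, independent of $T$. Conditioning on the first event and using $\E{e^{-\alpha T}} = (\lambda+\mu)/(\lambda+\mu+\alpha)$, this gives
$$\gamma_1 = \frac{\lambda+\mu}{\lambda+\mu+\alpha}\left(\frac{\mu}{\lambda+\mu} + \frac{\lambda}{\lambda+\mu}\,\gamma_1^2\right) = \frac{\mu + \lambda \gamma_1^2}{\lambda+\mu+\alpha}.$$
Rearranging yields the quadratic $(\lambda+\mu+\alpha)\gamma_1 - \lambda \gamma_1^2 = \mu$, which factors as $\gamma_1\bigl((\alpha+\lambda-\lambda\gamma_1) + \mu\bigr) = \mu$, i.e. $\gamma_1(\alpha+\lambda-\lambda\gamma_1) = \mu(1-\gamma_1)$. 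Dividing both sides by $1-\gamma_1$ (which is nonzero since $\gamma_1 = 1$ would force $\mu = 0$) gives exactly the claimed expression.

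There is no real obstacle here: both parts are standard Laplace-transform calculations. The only mildly subtle point is justifying the busy-period decomposition for $\gamma_1$, which relies on the strong Markov property at the first arrival epoch and on the fact that the time to empty a queue starting from $n \ge 1$ jobs is distributed as the sum of $n$ i.i.d.\ copies of $T_1$; I would cite this as standard (e.g., from any queueing textbook such as \cite{cox2020queues}).
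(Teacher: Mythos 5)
Your proposal is correct. The $\gamma_2$ computation is identical to the paper's (the Laplace transform of an $\Exp(\lambda)$ variable). For $\gamma_1$, however, you take a genuinely different route: the paper simply quotes the known Laplace transform of the M/M/1 busy period, $\gamma_1 = \mu/(\mu+\alpha+\lambda-\lambda\gamma_1)$, citing a standard queueing text, and rearranges; you instead rederive that transform from scratch via a first-step decomposition (first event at time $T\sim\Exp(\lambda+\mu)$, completion with probability $\mu/(\lambda+\mu)$, otherwise an arrival followed by two i.i.d.\ sub-busy-periods), arriving at the same quadratic $(\lambda+\mu+\alpha)\gamma_1-\lambda\gamma_1^2=\mu$ and hence the same identity. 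Your version is more self-contained and makes explicit where the functional equation comes from; the paper's is shorter because it offloads exactly that step to a citation. One small inaccuracy in your parenthetical: $\gamma_1=1$ in the quadratic forces $\alpha=0$, not $\mu=0$; the cleaner justification that the division is legitimate is simply that $\alpha>0$ and $T_1>0$ a.s.\ give $\gamma_1=\E{e^{-\alpha T_1}}<1$. This does not affect the validity of the argument.
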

\begin{proof}
    Since $\gamma_1 =\E{e^{-\alpha T_1}} = \widetilde{T_1}(\alpha)$, where $T_1$ is the length of an M/M/1 busy period, we have the following equation from classic queueing theory (e.g., see Section 27.2 in \cite{harchol2013performance}):
\[\gamma_1 = \frac{\mu}{\mu+\alpha+\lambda-\lambda \gamma_1},\]
yielding the first equation.  Likewise, the second equation comes from: 
\[\gamma_2=\E{e^{-\alpha T_2}} = \widetilde{T_2}(\alpha)=\frac{\lambda}{\lambda+\alpha}.\]

\end{proof}

\subsubsection{An expression for the guessed Whittle Index}
Now we work on solving (\ref{eq:whittle}) for the guessed Whittle Index. In this subsection, through analyzing the cost of a threshold policy (Lemma~\ref{lemma:costs}), a complicated expression for the guessed Whittle Index is derived (Lemma~\ref{lemma:solving whittle}). 

We first analyze the cost incurred by the $Threshold(t_0 + \delta)$ policy. Recall that the cost of a threshold policy is defined in Definition~\ref{def:costxt}.
\begin{lemma}
    \label{lemma:costs}
    For any $\delta\geq 0$, we have that
    \begin{align}
        \cost^{t_0+\delta}(t_0,\alpha,\ell) =& \frac{1}{1-\gamma_1\gamma_2}\bigg(\delta \cdot (\alpha\cdot r(t_0)-\ell) + e^{-\alpha\delta}\cdot \costbar(t_0+\delta,\alpha) \nonumber \\
        &+ e^{-\alpha\delta}\gamma_1\cdot \big(\Gamma(t_0+\delta,\alpha) - \delta\alpha\gamma_2 r(t_0) - \ell\cdot \frac{1}{\alpha}(1-\gamma_2\cdot e^{\alpha \delta})\big) \bigg)+ o(\delta).\label{eq:cost expr}
    \end{align}
\end{lemma}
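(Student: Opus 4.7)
The plan is a regenerative/renewal argument. Under $Threshold(t_0+\delta)$ starting at $T(0)=t_0$, the policy is passive on $[0,\delta]$ and no completions occur, so the state grows deterministically as $t_0+t$ and the warmup contributes
\[\int_0^\delta(\alpha r(t_0+t)-\ell)e^{-\alpha t}\,dt=\delta(\alpha r(t_0)-\ell)+O(\delta^2).\]
From time $\delta$ onward I treat each return of the state to the threshold as a regeneration point. Setting $B(u):=T(\delta+u)-(t_0+\delta)$, while the policy is active, $B$ evolves exactly as the age-of-oldest-job process of an M/M/1 queue starting a busy period at $B(0)=0$: it grows at rate $1$ and jumps down by $\Exp(\lambda)$ at rate $\mu$. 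Hence the active-phase length is $T_1\sim BP^{M/M/1}$. Applying the memoryless property to the last jump (conditioned on it exceeding $B(T_1^-)$), the overshoot $T_2:=-B(T_1)\sim\Exp(\lambda)$ and is independent of $T_1$. After the busy period the policy is passive, and the state climbs deterministically from $t_0+\delta-T_2$ back to $t_0+\delta$ in exactly $T_2$ units of time, closing a cycle of Laplace factor $\gamma_1\gamma_2$.

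Next I assemble the per-cycle expected discounted cost measured from the cycle start. The active portion is $\costbar(t_0+\delta,\alpha)$ by Definition~\ref{def:decay costbar}; the following passive portion, discounted by $e^{-\alpha T_1}$ and using $T_1\perp T_2$, is
\[\gamma_1\,\E{\int_0^{T_2}(\alpha r(t_0+\delta-T_2+s)-\ell)e^{-\alpha s}\,ds}=\gamma_1\bigl(\Gamma(t_0+\delta,\alpha)-\ell(1-\gamma_2)/\alpha\bigr)\]
by Definition~\ref{def:decay gamma} together with $\E{(1-e^{-\alpha T_2})/\alpha}=(1-\gamma_2)/\alpha$. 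Summing the geometric series in $\gamma_1\gamma_2$ over cycles (invoking the strong Markov property at each regeneration) and combining with the warmup (discounted by $e^{-\alpha\delta}$) yields
\[\cost^{t_0+\delta}(t_0,\alpha,\ell)=\delta(\alpha r(t_0)-\ell)+\frac{e^{-\alpha\delta}}{1-\gamma_1\gamma_2}\Bigl(\costbar(t_0+\delta,\alpha)+\gamma_1\Gamma(t_0+\delta,\alpha)-\gamma_1\ell(1-\gamma_2)/\alpha\Bigr)+O(\delta^2).\]

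This matches~\eqref{eq:cost expr} up to $o(\delta)$. To arrive at the stated form, I would pull $\tfrac{1}{1-\gamma_1\gamma_2}$ over the warmup term and check term by term that the extra contributions in the lemma---namely the $-e^{-\alpha\delta}\gamma_1\delta\alpha\gamma_2 r(t_0)$ correction and the replacement of $(1-\gamma_2)/\alpha$ by $(1-\gamma_2 e^{\alpha\delta})/\alpha$---differ from my form only at $O(\delta^2)$, since $1-\gamma_2 e^{\alpha\delta}=(1-\gamma_2)-\alpha\gamma_2\delta+O(\delta^2)$ and the $\gamma_1\gamma_2\delta(\alpha r(t_0)-\ell)$ and $-\gamma_1\gamma_2\delta\alpha r(t_0)+\gamma_1\gamma_2\delta\ell$ pieces cancel. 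The main technical obstacle is the cycle-identification step: rigorously justifying via the memoryless property that the overshoot $T_2$ is an independent $\Exp(\lambda)$, and deploying the strong Markov property at the regeneration times. Once the regeneration structure is established, the per-cycle computation, the geometric summation, and the final algebraic rearrangement to~\eqref{eq:cost expr} are routine.
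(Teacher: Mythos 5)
Your proof is correct and follows essentially the same regenerative decomposition as the paper: a passive warmup on $[0,\delta]$ costing $\delta(\alpha r(t_0)-\ell)+o(\delta)$, an M/M/1 busy period contributing $\costbar(t_0+\delta,\alpha)$, and an $\Exp(\lambda)$-overshoot passive phase contributing $\Gamma(t_0+\delta,\alpha)-\ell(1-\gamma_2)/\alpha$, with per-cycle discount $\gamma_1\gamma_2$ and the same memorylessness/independence facts. The only difference is that you regenerate at returns to the threshold $t_0+\delta$ and sum a geometric series, whereas the paper writes a one-cycle recursion at returns to the initial state $t_0$, which is why your (exact) expression matches \eqref{eq:cost expr} only after the $O(\delta^2)$ cancellation you correctly identify.
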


\begin{proof}
    We consider the state transition process of policy $Threshold(t_0+\delta)$ from the initial state $t_0$ until the next time the state returns to $t_0$. As shown in Figure~\ref{fig: cost illus}, there are three phases:
    \begin{description}
        \item[Phase 1] (the $[0,\delta]$ time period): The policy stays passive, and the cost incurred at every moment is $\alpha r(t)-\ell$. Thus the incurred discounted cost during phase 1 is
        \begin{equation}
            \delta \cdot (\alpha\cdot r(t_0)-\ell)+o(\delta).
            \label{eq:costphase1}
        \end{equation}
        
        \item[Phase 2] (the time from $\delta$ until the next time that the state is below $t_0+\delta$): 
        Define $T_1$ to be the length of phase 2. Since the policy stays active from time $\delta$, we know that the dynamics of the state is exactly the same as that of the oldest age in an M/M/1 queue. Thus the time period from $\delta$ until there is no job with age more than $t_0+\delta$ is the length of a busy period in an M/M/1, i.e., 
        \begin{equation}
            \label{eq:T1}
            T_1\sim  BP^{M/M/1}.
        \end{equation}
        Moreover, we know that the cost incurred during this time period is $\costbar(t_0+\delta,\alpha)$. Note that there is a discount factor of $e^{-\alpha \delta}$ after phase 1. Thus the expected total discounted cost incurred during phase 2 is 
        \begin{equation}
            \label{eq:costphase2}
            e^{-\alpha\delta}\cdot \costbar(t_0+\delta,\alpha).
        \end{equation}

        \item[Phase 3] (time until the state returns to $t_0$): 
        At time $T_1+\delta$, the state drops from being above $t_0+\delta$ to below. 
        Since the drop is exponential with rate $\lambda$ (see (\ref{eq:active trans})), we know that the state at time $T_1+\delta$ is lower than $t_0+\delta$ by an exponential overshoot, which is $\Exp(\lambda)$. We define the state at $T_1+\delta$ to be $t_0+\delta - T_2$, where $T_2\sim \Exp(\lambda)$. 

        Now since the policy $Threshold(t_0+\delta)$ stays passive when the state is smaller than $t_0+\delta$, phase 3 lasts for $T_2-\delta$ time before the state returns to $t_0$ (i.e., the age grows to $t_0$). Thus the total discounted cost incurred during phase 3 is 
        \begin{equation}
            \label{eq:costphase3}
            e^{-\alpha(\delta+T_1)}\cdot \int_{0}^{T_2-\delta} (\alpha r(t_0+\delta - T_2+t) - \ell) e^{-\alpha t}  dt.
        \end{equation}
        
    \end{description}

    \begin{figure}[h]
        \centering
        \scalebox{1}{ 
        \includegraphics[width=0.7\textwidth]{./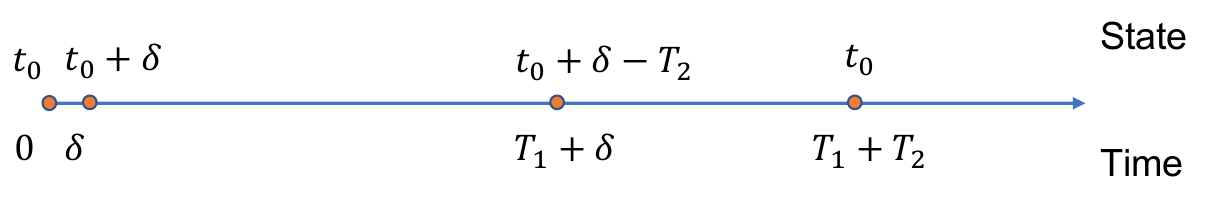}}
        \caption{Illustration state transition process.}
        \label{fig: cost illus}
    \end{figure}

    After three phases, the state returns to $t_0$. Thus summing up (\ref{eq:costphase1}), (\ref{eq:costphase2}), (\ref{eq:costphase3}) and taking the expectation gives the following equation:
    \begin{align}
        \cost^{t_0+\delta}(t_0,\alpha,\ell) = &\  \delta \cdot (\alpha\cdot r(t_0)-\ell) + o(\delta)\\
        &+ e^{-\alpha\delta}\cdot \costbar(t_0+\delta,\alpha) \\
        &+ e^{-\alpha\delta}\E{e^{-\alpha T_1}}\cdot \E{\int_{0}^{T_2-\delta} (\alpha r(t_0+\delta - T_2+t) - \ell) e^{\alpha t}  dt} 
        \label{eq:11}\\
        &+ \E{e^{-\alpha T_1}}\E{e^{-\alpha T_2}}\cost^{t_0+\delta}(t_0,\alpha,\ell).
    \end{align}

    We can use the notation from Table~\ref{table:notation} to simplify the terms.
    We can simplify line (\ref{eq:11}) as follows:
    \begin{align*}
        &\E{\int_{0}^{T_2-\delta} (\alpha r(t_0+\delta - T_2+t) - \ell) e^{\alpha t}  dt}\\
        &= \E{\int_0^{T_2}(\alpha r(t_0+\delta-T_2+t))e^{-\alpha t}} - \E{\int_{T_2-\delta}^{T_2}(\alpha r(t_0+\delta-T_2+t))e^{-\alpha t}} - \ell \cdot \E{\int_0^{T_2-\delta}e^{-\alpha t} dt} \\
        &= \Gamma(t_0+\delta,\alpha) - \E{\delta\alpha r(t_0) e^{-\alpha T_2} + o(\delta)} - \ell\cdot \frac{1}{\alpha}(1-\E{e^{-\alpha (T_2-\delta)}}) \\
        &= \Gamma(t_0+\delta,\alpha) - \delta\alpha r(t_0) \gamma_2 + o(\delta) - \ell\cdot \frac{1}{\alpha}(1-\gamma_2\cdot e^{\alpha \delta}).
    \end{align*}

    Replacing (\ref{eq:11}) with the above equation, and rearranging terms, gives the proof.
\end{proof}

We now use Lemma~\ref{lemma:costs} to solve equation (\ref{eq:whittle}) to get the guessed Whittle Index.

\begin{lemma}
    \label{lemma:solving whittle}
    The guessed Whittle index with discount factor $\alpha$ is given by: 
    \begin{equation} 
        \guess(t_0,\alpha)= \frac{1}{1-\gamma_1} \left(\alpha(1-\gamma_1\gamma_2) r(t_0) + \costbar'(t_0,\alpha) - \alpha \costbar(t_0,\alpha) + \gamma_1 \Gamma'(t_0,\alpha) - \alpha\gamma_1\Gamma(t_0,\alpha)\right).
        \label{eq:whittle alpha}
    \end{equation}    
\end{lemma}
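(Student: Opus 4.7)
The plan is to directly solve the defining equation \eqref{eq:whittle} by differentiating the closed-form expression for $\cost^{t_0+\delta}(t_0,\alpha,\ell)$ given in Lemma~\ref{lemma:costs}. Indeed, \eqref{eq:whittle} is exactly the statement that
\[ \frac{\partial}{\partial\delta}\cost^{t_0+\delta}(t_0,\alpha,\ell)\bigg|_{\delta=0}=0, \]
so the guessed Whittle index $\guess(t_0,\alpha)$ is the value of $\ell$ that annihilates this derivative. The expression in Lemma~\ref{lemma:costs} is linear in $\ell$, so once we differentiate we obtain a linear equation in $\ell$ that we can solve in closed form.

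I would differentiate the right-hand side of Lemma~\ref{lemma:costs} term by term at $\delta=0$, using that $\costbar(t_0,\alpha)$ and $\Gamma(t_0,\alpha)$ are smooth in $t_0$ and that the $o(\delta)$ error drops out in the limit. The terms contribute as follows: $\delta(\alpha r(t_0)-\ell)$ yields $\alpha r(t_0)-\ell$; $e^{-\alpha\delta}\costbar(t_0+\delta,\alpha)$ yields $\costbar'(t_0,\alpha)-\alpha\costbar(t_0,\alpha)$; $e^{-\alpha\delta}\gamma_1\Gamma(t_0+\delta,\alpha)$ yields $\gamma_1\Gamma'(t_0,\alpha)-\alpha\gamma_1\Gamma(t_0,\alpha)$; and $-e^{-\alpha\delta}\gamma_1\delta\alpha\gamma_2 r(t_0)$ yields $-\alpha\gamma_1\gamma_2 r(t_0)$. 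The only slightly delicate piece is the $\ell$-coefficient coming from $-\frac{\ell\gamma_1}{\alpha}\cdot e^{-\alpha\delta}(1-\gamma_2 e^{\alpha\delta})$: here the product simplifies to $-\frac{\ell\gamma_1}{\alpha}(e^{-\alpha\delta}-\gamma_2)$, whose derivative at $\delta=0$ is $\ell\gamma_1$.

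Summing these contributions and dividing by the common factor $\frac{1}{1-\gamma_1\gamma_2}$ gives, after collecting $\ell$-terms,
\[ \ell(1-\gamma_1) = \alpha(1-\gamma_1\gamma_2)r(t_0)+\costbar'(t_0,\alpha)-\alpha\costbar(t_0,\alpha)+\gamma_1\Gamma'(t_0,\alpha)-\alpha\gamma_1\Gamma(t_0,\alpha). \]
Dividing by $1-\gamma_1$ yields exactly the formula \eqref{eq:whittle alpha} claimed in the lemma. The main obstacle is purely bookkeeping: one must be careful that the overall prefactor $\frac{1}{1-\gamma_1\gamma_2}$ cancels in the equation $\frac{\partial}{\partial\delta}\cost^{t_0+\delta}=0$, and that the $\ell$-coefficient simplifies to $-(1-\gamma_1)$ after the identity $e^{-\alpha\delta}(1-\gamma_2 e^{\alpha\delta}) = e^{-\alpha\delta}-\gamma_2$ is used. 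No further analysis of the single-arm MDP is needed at this stage; indexability and the verification that the guessed threshold policy is truly optimal are deferred to Section~\ref{sec:index}.
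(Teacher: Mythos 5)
Your proposal is correct and follows essentially the same route as the paper: both rewrite \eqref{eq:whittle} as $\frac{d}{d\delta}\cost^{t_0+\delta}(t_0,\alpha,\ell)\big|_{\delta=0}=0$, differentiate the expression from Lemma~\ref{lemma:costs} term by term (with the $o(\delta)$ term vanishing and the prefactor $\frac{1}{1-\gamma_1\gamma_2}$ cancelling), and solve the resulting linear equation in $\ell$, with the $\ell$-coefficient collapsing to $-(1-\gamma_1)$. Your handling of the compensation term via $e^{-\alpha\delta}(1-\gamma_2 e^{\alpha\delta})=e^{-\alpha\delta}-\gamma_2$ matches the paper's bookkeeping.
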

\begin{proof}
    Note that (\ref{eq:whittle}) is equivalent to 
    \[\frac{d\cost^{t_0+\delta}(t_0,\alpha,\ell)}{d\delta}\bigg |_{\delta = 0}=0\]

    The left hand side is the derivative of (\ref{eq:cost expr}) with respect to $\delta$. For simplicity, we omit the $o(\delta)$ term and define
    \[\costbar'(t_0,\alpha):= \frac{d\costbar(t,\alpha)}{dt}\bigg|_{t=t_0}, \quad \text{and } \quad \Gamma'(t_0,\alpha):= \frac{d\Gamma(t,\alpha)}{dt}\bigg|_{t=t_0}.\]

Now the derivative of (\ref{eq:cost expr}) has the form: 
    
    \begin{align*}
        \frac{d\cost^{t_0+\delta}(t_0,\alpha)}{d\delta}\bigg |_{\delta = 0} =& \frac{1}{1-\gamma_1\gamma_2}\bigg( \alpha r(t_0) - \ell + \costbar'(t_0,\alpha) - \alpha \costbar(t_0,\alpha) \\
        &+ \gamma_1 \left(\Gamma'(t_0,\alpha) - \alpha\gamma_2 r(t_0) + \ell \cdot\gamma_2\right)\\
        &- \alpha\gamma_1\left( \Gamma(t_0,\alpha) - \ell\frac{1-\gamma_2}{\alpha} \right) \bigg).
    \end{align*}

    Thus, \eqref{eq:whittle} is equivalent to: 
    \[ \alpha(1-\gamma_1\gamma_2) r(t_0) + \costbar'(t_0,\alpha) - \alpha \costbar(t_0,\alpha) + \gamma_1 \Gamma'(t_0,\alpha) - \alpha\gamma_1\Gamma(t_0,\alpha) = (1 - \gamma_1)\cdot \ell,\]

    which yields:
    \[\ell=\frac{1}{1-\gamma_1} \left(\alpha(1-\gamma_1\gamma_2) r(t_0) + \costbar'(t_0,\alpha) - \alpha \costbar(t_0,\alpha) + \gamma_1 \Gamma'(t_0,\alpha) - \alpha\gamma_1\Gamma(t_0,\alpha)\right).\]

    Since $\guess(t_0,\alpha)$ is defined to be the value of $\ell$ which satisfies \eqref{eq:whittle}, 
  this gives the proof.
\end{proof}

\subsubsection{Simplifying the guessed Whittle Index}
\label{sec:simpli}
Lemma~\ref{lemma:solving whittle} gives a complicated expression for the guessed Whittle Index. 
In this subsection, we work on simplifying Equation \eqref{eq:whittle alpha}. The final result is an elegant formula for the guessed Whittle Index (see Lemma~\ref{thm:main}). 
Since the proof is long and intricate, for the readers' convenience, we give a table summarizing each lemma below (Table~\ref{table:lemma}). 

{\small 
\begin{table}[h]
  \caption{Lemmas in Section~\ref{sec:simpli}}
  \label{tab:freq}
  \begin{tabular}{ccl}
    \toprule
    Lemma number  & Summary\\
    \midrule
    Lemma~\ref{lemma:ODE} & A preliminary lemma on ordinary differential equations (ODE)\\
    Lemma~\ref{lemma:exp formula} & A basic equation for exponential variables \\
    Lemma~\ref{lemma:c converge} & A necessary condition for the holding cost to converge \\
    Lemma~\ref{lemma: r'} & Characterizing the cost function $r$\\
    Lemma~\ref{lemma:eq1} & Characterizing $\costbar$\\
    Lemma~\ref{lemma:eq2} & Characterizing $\Gamma$\\
    Lemma~\ref{lemma: four eq} & Giving four equations using Lemma~\ref{lemma:exp formula} and Lemma~\ref{lemma: r'}\\
  \bottomrule
\end{tabular}
\label{table:lemma}
\end{table}
}

The following lemma is a basic formula in ODE. We present it without proof.
\begin{lemma}[First Order ODE]
\label{lemma:ODE}
    For the first order ODE,
    $y'+P(x)y = Q(x),$
    the general solution has the form
    \[y = e^{-\int P(x)\,dx}\left(\int e^{\int P(x)\,dx}Q(x)\,dx + C\right).\]
\end{lemma}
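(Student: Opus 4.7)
The plan is to prove this via the integrating factor method, which is the standard technique for first-order linear ODEs. First I would introduce the integrating factor $\mu(x) := e^{\int P(x)\,dx}$, which by construction satisfies $\mu'(x) = P(x)\,\mu(x)$. Multiplying both sides of the ODE $y' + P(x)y = Q(x)$ by $\mu(x)$ transforms it into $\mu(x) y' + P(x) \mu(x) y = \mu(x) Q(x)$. By the product rule, the left-hand side equals $(\mu(x) y)'$, so the equation collapses to the exact form $(\mu(x) y)' = \mu(x) Q(x)$.

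Next I would integrate both sides with respect to $x$, obtaining $\mu(x) y = \int \mu(x) Q(x)\,dx + C$ for an arbitrary constant of integration $C$. Dividing through by $\mu(x) = e^{\int P(x)\,dx}$ yields
\[
y = e^{-\int P(x)\,dx}\left(\int e^{\int P(x)\,dx} Q(x)\,dx + C\right),
\]
which is exactly the claimed formula for the general solution.

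To finish, I would verify two things: that the expression above actually satisfies the ODE (by direct differentiation and substitution, a straightforward check using the product rule and the fundamental theorem of calculus), and that it captures every solution. For the latter, I would appeal to the standard existence-and-uniqueness theorem for first-order linear ODEs: any solution is uniquely determined by a single initial value $y(x_0)$, and the arbitrary constant $C$ ranges over all real numbers, so the one-parameter family parameterized by $C$ exhausts all solutions. The only nontrivial step is the initial choice of $\mu$ so that $\mu' = P\,\mu$; once this choice makes the LHS an exact derivative, the rest of the argument is mechanical, so there is no real obstacle.
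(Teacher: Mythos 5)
Your integrating-factor argument is correct and complete; the paper itself states this lemma without proof (it is cited as a standard fact), so there is nothing to compare against on the paper's side. The only minor caveat is that the final ``every solution is captured'' step implicitly assumes $P$ and $Q$ are continuous on the interval of interest so that the existence-and-uniqueness theorem applies, which is the standard setting in which the lemma is used here.
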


We also give a basic formula on exponential variables.
\begin{lemma}
\label{lemma:exp formula}
    For any smooth function $f$ and exponential variable $X$, we have that 
    \[\E{f(x_0+X)}-f(x_0)=\E{X}\E{f'(x_0+X)}.\]
    \[f(x_0)-\E{f(x_0-X)}=\E{X}\E{f'(x_0-X)}.\]
\end{lemma}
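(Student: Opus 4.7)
The plan is a direct integration by parts, exploiting the fact that the exponential density $\rho e^{-\rho x}$ is (up to sign) the derivative of the survival function $e^{-\rho x}$. Let $X \sim \Exp(\rho)$ so that $\E{X} = 1/\rho$. Both identities are essentially one computation, applied to $f(x_0 + \cdot)$ or $f(x_0 - \cdot)$ against the exponential density on $[0,\infty)$.

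For the first identity, I would write
\begin{equation*}
\E{f(x_0 + X)} = \int_0^\infty f(x_0 + x)\, \rho e^{-\rho x}\, dx
\end{equation*}
and integrate by parts with $u = f(x_0+x)$, $dv = \rho e^{-\rho x}\, dx$, giving $du = f'(x_0+x)\,dx$ and $v = -e^{-\rho x}$. The boundary term at $x = 0$ contributes $f(x_0)$, and the boundary term at $\infty$ vanishes under the mild regularity that $f(x_0 + x) e^{-\rho x} \to 0$ (which holds for any $f$ of sub-exponential growth, a harmless assumption in our setting where $f$ will always be the holding cost or a derived quantity). The remaining integral is $\int_0^\infty f'(x_0+x) e^{-\rho x}\, dx = \tfrac{1}{\rho}\, \E{f'(x_0 + X)} = \E{X}\, \E{f'(x_0 + X)}$, and rearranging yields the first claim.

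For the second identity, the same method applies to $\E{f(x_0 - X)} = \int_0^\infty f(x_0 - x)\, \rho e^{-\rho x}\, dx$. Integration by parts with $u = f(x_0 - x)$ now produces $du = -f'(x_0 - x)\, dx$, so one collects a minus sign compared to the first case: the boundary term at $x=0$ still contributes $f(x_0)$, the boundary at infinity still vanishes, and the remaining integral becomes $-\tfrac{1}{\rho}\, \E{f'(x_0 - X)}$. Rearranging gives $f(x_0) - \E{f(x_0 - X)} = \E{X}\, \E{f'(x_0 - X)}$.

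There is no real obstacle here; the only thing to be careful about is the vanishing of the boundary term at infinity, which requires only that $f$ does not grow faster than $e^{\rho x}$ along the relevant ray. Since the lemma is invoked with smooth cost-related functions in our problem, this condition is automatic and can be stated as a standing regularity assumption, so the proof reduces to the two integration-by-parts computations sketched above.
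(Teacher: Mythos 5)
Your proof is correct and follows essentially the same route as the paper, which writes $\E{f(x_0+X)}-f(x_0)=\int_0^\infty f'(x_0+t)\P{X>t}\,dt$ and then identifies the right-hand side with $\E{X}\E{f'(x_0+X)}$ — the same computation as your integration by parts, just organized via the fundamental theorem of calculus and a Fubini swap instead of an explicit boundary term. The growth condition you flag (so that the boundary term at infinity vanishes) is an equally implicit integrability requirement in the paper's version, so nothing substantive differs.
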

\begin{proof}
    Suppose $X\sim\Exp(\theta)$. Then we have that 
    \begin{align*}
        \E{f(x_0+X)} -f(x_0) = \int_{0}^\infty f'(x_0+t)\P{X>t}dt 
        &=\int_0^\infty f'(x_0+t)e^{-\theta t} dt \\
        &= \frac{1}{\theta} \int_{0}^\infty f'(x_0+t) \theta e^{-\theta t} dt \\
        &= \E{X}\E{f'(x_0+X)}.
    \end{align*}
    The other equation is derived similarly.
\end{proof}

The next lemma is a necessary condition for the TVHC problem to be well-defined. It states that the instantaneous holding cost $c(t)$ cannot grow too fast. This provides the basis for the proof of Lemma~\ref{lemma:eq1}. We present this lemma with the subscripts $i$, but will later drop all subscripts.
\begin{lemma}
\label{lemma:c converge}
    If there exists a policy such that the long-run mean holding cost of the TVHC problem converges, then for any type $i$, we have that 
    \[\lim_{t\to \infty} \frac{\int_0^t c_i(x)dx}{e^{(\mu_i-\lambda_i)t}} = 0.\]
\end{lemma}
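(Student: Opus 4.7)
The plan is to bound the total holding cost from below by the cost incurred by type-$i$ alone in an isolated M/M/1 queue, and then extract the growth condition on $c_i$ via integration by parts. Concretely, I would couple the type-$i$ arrival process $A_i$ and an independent Poisson ``service clock'' $P_i$ of rate $\mu_i$ between the given policy $\pi$ and a fictitious \emph{dedicated-to-$i$} policy that always serves the oldest type-$i$ job (FCFS) whenever one is present and ignores other classes. A short induction over event times shows $Q_i^\pi(t)\ge Q_i^{\mathrm{ded}}(t)$ for every $t$, since every $P_i$-point that triggers a type-$i$ completion under $\pi$ also triggers one under dedicated-to-$i$, while arrivals are shared. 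Combined with Lemma~\ref{lemma:FCFS} (so we may assume FCFS within type $i$ under $\pi$), this gives sample-path domination of sojourn times $T_J^\pi\ge T_J^{\mathrm{ded}}$ for every tagged type-$i$ job $J$.

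By the generalized Little's law ($H=\lambda G$), the time-averaged type-$i$ contribution to the total holding cost equals $\lambda_i\,\E{\int_0^{T_i^\pi}c_i(x)\,dx}=\lambda_i\int_0^\infty c_i(x)\,\P{T_i^\pi>x}\,dx$. The coupling above and the classical M/M/1 sojourn-time distribution $T_i^{\mathrm{ded}}\sim\Exp(\mu_i-\lambda_i)$ give $\P{T_i^\pi>x}\ge e^{-(\mu_i-\lambda_i)x}$ (and force $\lambda_i<\mu_i$, else the lower bound is $+\infty$ and contradicts the assumed finiteness). Since each class contributes non-negatively to the total cost, finiteness of the total yields $\int_0^\infty c_i(x)\,e^{-(\mu_i-\lambda_i)x}\,dx<\infty$.

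Finally, set $F(t):=\int_0^t c_i(x)\,dx$ and $\beta:=\mu_i-\lambda_i>0$. The identity $(F(x)e^{-\beta x})'=c_i(x)e^{-\beta x}-\beta F(x)e^{-\beta x}$, integrated on $[0,T]$, gives $\int_0^T c_i(x)e^{-\beta x}\,dx = F(T)e^{-\beta T} + \beta\int_0^T F(x)e^{-\beta x}\,dx$. Both summands are non-negative and the left side has a finite limit, so the monotone second summand converges, whence $\int_0^\infty F(x)e^{-\beta x}\,dx<\infty$. Monotonicity of $F$ then gives $F(t)e^{-\beta t}/\beta \le \int_t^\infty F(x)e^{-\beta x}\,dx$, and the vanishing tail forces $F(t)e^{-\beta t}\to 0$, exactly the claim. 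I expect the main obstacle to be making the coupling in the first paragraph rigorous across boundary events (empty queues, simultaneous transitions, and the state-dependence of $\pi$'s server allocation); Little's law in the second paragraph also quietly presumes ergodicity of type-$i$ under $\pi$, but this is automatic from the assumed finiteness, since an unstable type-$i$ queue would immediately violate it.
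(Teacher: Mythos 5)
Your proposal is correct and follows essentially the same route as the paper: lower-bound the type-$i$ cost under any policy by that of a dedicated type-$i$ M/M/1 FCFS system, use that its response time is $\Exp(\mu_i-\lambda_i)$ to write the mean cost as $\lambda_i\E{\int_0^A c_i(x)dx}$ with $A\sim\Exp(\mu_i-\lambda_i)$, and read off the growth condition from convergence of that integral. The only differences are that you spell out the coupling and the final integration-by-parts/monotonicity step, which the paper leaves informal.
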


\begin{proof}
    If the long-run mean holding cost converges, then it must also be the case that the long-run mean holding cost converges if there are only type $i$ jobs in the system, and all jobs are served in FCFS order.
    In this case, the response time of each type $i$ job  is the response time in M/M/1, which is distributed $\Exp(\mu_i-\lambda_i)$.
    
    Thus we have that the mean holding cost, which can be computed by the product of the arrival rate and the expected accumulated holding cost per job, has the formula
    \[\lambda_i \E{\int_0^A c_i(x)dx},\]
    where $A\sim \Exp(\mu_i-\lambda_i)$.   
    If $\E{\int_0^A c_i(x)dx}$ converges, it is equal to:
    \begin{equation}
        \int_0^\infty (\mu_i-\lambda_i) \frac{\int_0^t c_i(x)dx}{e^{(\mu_i-\lambda_i)t}} dt.
        \label{eq:c converge proof}
    \end{equation}
Thus we know that $\lim_{t\to \infty} \frac{\int_0^t c_i(x)dx}{e^{(\mu_i-\lambda_i)t}} = 0,$ otherwise expression \eqref{eq:c converge proof} does not converge.
\end{proof}

We now return to the single-arm bandit problem. We first need to characterize the cost function $r$ which is defined in \eqref{eq:r2}.

\begin{lemma}[$r'$]
\label{lemma: r'}
The derivative of $r$ is given by
    \[r'(t) = c'(t) + \lambda c(t).\]
\end{lemma}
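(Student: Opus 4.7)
The plan is to evaluate the expectation in the definition of $r$ explicitly, rewrite it as a simple Lebesgue integral, and then differentiate. Recall that $r(s) = c(s) + \E{\sum_{j=1}^{N} c(Y_j)}$, where the $Y_j$ are the arrival epochs of a Poisson process of rate $\lambda$ (since the inter-arrivals $X_m$ are iid $\Exp(\lambda)$), and $N = N(s)$ counts the arrivals in $(0,s)$. The key observation is that the random sum $\sum_{j=1}^{N(s)} c(Y_j)$ is exactly a Poisson functional, so Campbell's theorem (or equivalently the mean measure identity for Poisson point processes) gives
\[
\E\!\left[\sum_{j=1}^{N(s)} c(Y_j)\right] \;=\; \lambda \int_0^s c(u)\,du.
\]

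If one prefers a bare-hands derivation instead of invoking Campbell, I would condition on $N(s) = n$: given $N(s) = n$, the order statistics $(Y_1,\dots,Y_n)$ are distributed as the order statistics of $n$ iid uniforms on $(0,s)$, so $\E{\sum_{j=1}^n c(Y_j) \mid N(s)=n} = n \cdot \frac{1}{s}\int_0^s c(u)\,du$. Summing over $n$ with $N(s) \sim \Poisson(\lambda s)$ and using $\E{N(s)} = \lambda s$ recovers the same expression $\lambda \int_0^s c(u)\,du$.

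Plugging this back into the definition yields the closed form
\[
r(s) \;=\; c(s) + \lambda \int_0^s c(u)\,du,
\]
valid for $s \geq 0$. Now I differentiate in $s$: the first term contributes $c'(s)$, and the second, by the fundamental theorem of calculus, contributes $\lambda c(s)$. This gives $r'(s) = c'(s) + \lambda c(s)$, which is the claimed formula (with $s$ renamed to $t$).

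No step here is really an obstacle; the only thing to be a little careful about is justifying the interchange of expectation and integral (which is fine because $c$ is nonnegative and smooth, so Tonelli applies) and noting that smoothness of $c$ is what was assumed in Section~\ref{sec:scheduling formulation} and therefore the derivative on the right-hand side exists. The formula is used later to simplify the Whittle Index expression, so the clean linear-in-$c$ form is exactly what is needed for the downstream algebra.
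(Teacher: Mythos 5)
Your proof is correct, but it takes a different route from the paper. The paper never writes down a closed form for $r$; instead it argues incrementally: it splits the Poisson arrivals on $(0,t+\delta)$ into the independent arrivals on $(0,t)$ and on $(t,t+\delta)$, notes that the latter contribute expected cost $\lambda\delta\, c(t)+o(\delta)$, and reads off the derivative from the difference quotient. You instead evaluate the expectation exactly -- via Campbell's theorem, or equivalently the uniform-order-statistics conditioning on $N(s)=n$ -- to get $r(s)=c(s)+\lambda\int_0^s c(u)\,du$, and then differentiate by the fundamental theorem of calculus. Both arguments are sound; yours is arguably tidier because it produces the explicit antiderivative form of $r$ as a byproduct, which is exactly the identity the paper later recovers (only up to an undetermined constant $C_1$) by integrating the conclusion of this lemma in the proof of Lemma~\ref{lemma:eq1}; with your derivation one even knows $C_1=0$ (since $r(0)=c(0)$). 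The paper's infinitesimal splitting, on the other hand, avoids any integrability bookkeeping and mirrors the style of the other derivative computations in Section~\ref{sec:solve whittle}. Your remarks on Tonelli and on smoothness of $c$ are the right justifications for the interchange and for the existence of $c'$, so there is no gap.
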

\begin{proof}
Note that in the definition of $r(s)$ (Equation~\eqref{eq:r2}), $\{Y_j\}$ can be viewed as the set of Poisson arrivals with rate $\lambda$ from time 0 to time $s$. 

 A Poisson arrival process from time 0 to $t+\delta$ can be divided into two independent Poisson arrival processes: the one from time 0 to $t$ and another one from time $t$ to $t+\delta$. Thus we have that 
\begin{equation}
    r(t+\delta) = c(t+\delta) + \E{\sum_{j}c(Y_j)} + \E{\sum_j c(Y_j')},
    \label{eq:r' proof}
\end{equation}
where $\{Y_j\}$ is the Poisson arrivals during $[0,t]$, and $\{Y_j'\}$ is the Poisson arrivals during $[t, t + \delta]$. 
The expected number of arrivals during $[t, t + \delta]$  is $\lambda \delta$, and the instantaneous holding cost of those jobs is approximately $c(t_0)$. Thus
$\E{\sum_j c(Y_j')} = c(t_0)\cdot \lambda\delta + o(\delta)$. This together with \eqref{eq:r' proof} yields:
\[r'(t_0)=\frac{r(t_0+\delta)-r(t_0)}{\delta} = c'(t_0)+\lambda c(t_0).\]
\end{proof}

We now derive the expression of $\costbar(t,\alpha)$, which is defined in Definition~\ref{def:decay costbar}.

\begin{lemma}[Formula for $\costbar$]
\label{lemma:eq1}
For any $t$ and $\alpha>0$,
    \begin{equation}
    \costbar(t,\alpha) = (1-\gamma_1)\E{r(t+X)}, \qquad \mbox{ where } X \sim \Exp\left(\frac{\alpha}{1 - \gamma_1}\right)
    \label{eq: equation1}
\end{equation}
\end{lemma}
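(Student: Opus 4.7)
My plan is to derive a first-order linear ODE for $\costbar(t,\alpha)$ in $t$, solve it using Lemma~\ref{lemma:ODE}, and pin down the constant of integration via the growth bound in Lemma~\ref{lemma:c converge}. Specifically, I aim to establish
\[\costbar'(t,\alpha) - \theta\, \costbar(t,\alpha) = -\alpha\, r(t), \qquad \text{where } \theta := \frac{\alpha}{1-\gamma_1}.\]
Given this ODE, Lemma~\ref{lemma:ODE} immediately yields the particular solution $\alpha \int_0^\infty r(t+u) e^{-\theta u}\,du$, which equals $(1-\gamma_1)\E{r(t+X)}$ for $X \sim \Exp(\theta)$; this is exactly the formula in \eqref{eq: equation1}.

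To derive the ODE, I study the generalized quantity $G(a,t) := \E{\int_0^{T_1^{(a)}} r(t+A(s))\, \alpha e^{-\alpha s}\,ds \mid A(0)=a}$ for $a \geq 0$, where $T_1^{(a)}$ is the first time $A < 0$ starting from $a$ (so $G(0,t) = \costbar(t,\alpha)$), and compute $\partial_a G(0,t)$ in two different ways and equate them. The first way is a renewal decomposition: by translation invariance of the dynamics, the excursion of $A$ starting at $a$ until its first down-crossing of $a$ has the law of an M/M/1 busy period (contributing discounted cost $\costbar(t+a,\alpha)$ and discount factor $\gamma_1$ at its end); by the memoryless property the overshoot $W$ below $a$ is $\Exp(\lambda)$, independent of the excursion's length, and the process continues from $a-W$ whenever $W \leq a$. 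This yields
\[G(a,t) = \costbar(t+a,\alpha) + \gamma_1 \int_0^a G(a-w,t)\, \lambda e^{-\lambda w}\,dw,\]
and differentiating in $a$ at $a=0$ gives $\partial_a G(0,t) = \costbar'(t,\alpha) + \gamma_1 \lambda\, \costbar(t,\alpha)$. The second way is Kolmogorov's backward equation from the Markov generator of $A$ (with the convention $G(x,t)=0$ for $x<0$), which reads
\[\partial_a G(a,t) = (\alpha+\mu)\, G(a,t) - \alpha\, r(t+a) - \mu \int_0^a G(a-y,t)\, \lambda e^{-\lambda y}\,dy;\]
evaluating at $a=0$ yields $\partial_a G(0,t) = (\alpha+\mu)\costbar(t,\alpha) - \alpha r(t)$. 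Equating the two expressions and using the algebraic identity $\alpha + \mu - \gamma_1 \lambda = \alpha/(1-\gamma_1) = \theta$ (a direct consequence of~\eqref{eq:gamma2}) produces the desired ODE.

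For the constant of integration: Lemma~\ref{lemma:c converge} implies that $c$, and hence $r$, grows strictly slower than $e^{(\mu-\lambda)t}$; since $\theta > \mu-\lambda$ whenever $\alpha > 0$ (because $\gamma_1 < 1$), both the particular-solution integral and any elementary upper bound on $\costbar(t,\alpha)$ (e.g., $\costbar(t,\alpha) \leq \alpha \int_0^\infty r(t+s) e^{-\alpha s}\,ds$ using $A(s) \leq s$ during the busy period) are $o(e^{\theta t})$. This rules out any nonzero homogeneous contribution $Ce^{\theta t}$, yielding the formula as stated.

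The main obstacle is the renewal decomposition: one must rigorously invoke the strong Markov property at the random first down-crossing time of level $a$, identify the pre-crossing excursion as an M/M/1 busy period by translation invariance, and use the memoryless property to conclude that the overshoot $W \sim \Exp(\lambda)$ is independent of the excursion's length. Once this renewal equation is in place, the Kolmogorov equation, the algebraic simplification to $\theta$, and the ODE solution via Lemma~\ref{lemma:ODE} are routine.
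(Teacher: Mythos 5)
Your derivation of the ODE is correct but takes a genuinely different route from the paper: you compute $\partial_a G(0,t)$ of the two-variable value $G(a,t)$ in two ways (a renewal decomposition at the first down-crossing of level $a$, using the busy-period law and the $\Exp(\lambda)$ overshoot, versus the Kolmogorov backward equation of the generator) and equate them, whereas the paper conditions directly on whether a completion occurs in the first $\delta$ of the busy period and passes to the limit. Both arguments rest on the same probabilistic facts (busy-period structure, memoryless overshoot independent of the excursion length) and both land on $\costbar'(t,\alpha)=(\alpha+\mu-\lambda\gamma_1)\costbar(t,\alpha)-\alpha r(t)$, after which the use of Lemma~\ref{lemma:ODE} and the identity $\alpha+\mu-\lambda\gamma_1=\frac{\alpha}{1-\gamma_1}$ from \eqref{eq:gamma2} coincide with the paper. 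Your two-way computation is arguably cleaner bookkeeping; the paper's version avoids introducing $G(a,t)$ and its differentiability in $a$.

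There is, however, a gap in how you kill the homogeneous term $Ce^{\theta t}$. Your bound $\costbar(t,\alpha)\leq \alpha\int_0^\infty r(t+s)e^{-\alpha s}\,ds$ (from $A(s)\leq s$) can be $+\infty$: Lemma~\ref{lemma:c converge} only forces $r(u)=o\!\left(e^{(\mu-\lambda)u}\right)$, so $r$ may grow like $e^{\beta u}$ for any $\beta<\mu-\lambda$, and whenever $\alpha<\beta$ the integral diverges. This is exactly the regime that matters, since the Whittle index is obtained in the limit $\alpha\to 0$; a vacuous bound does not give $\costbar(t,\alpha)=o(e^{\theta t})$, so the argument ruling out $C\neq 0$ is incomplete as stated. (Your treatment of the particular solution is fine, because there the discount rate is $\theta=\frac{\alpha}{1-\gamma_1}>\mu-\lambda$, strictly.) The paper closes this step differently (Lemma~\ref{lemma: app for cost bar}): it bounds $e^{-\alpha s}\leq 1$ and uses that the age of the oldest job, time-averaged over an M/M/1 busy period, is $\Exp(\mu-\lambda)$, giving $\costbar(t,\alpha)\leq \frac{\alpha}{\mu-\lambda}\E{r(t+A)}$ with $A\sim\Exp(\mu-\lambda)$, which is finite under the well-posedness assumption and is $o\!\left(e^{(\mu-\lambda)t}\right)$, hence $o(e^{\theta t})$. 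Replacing your pathwise-in-$s$ bound with this busy-period average bound (or any bound whose effective decay rate is $\mu-\lambda$ rather than $\alpha$) repairs the proof.
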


\begin{proof}
Recall that $\costbar(t,\alpha)$ is defined to be 
\[\costbar(t,\alpha):=\E{\int_0^{T_1} \alpha r(t+A(s)) e^{-\alpha s} ds},\]
where $A(s)$ is the random variable denoting the age of the oldest job in an M/M/1 queue at time $s$. 

During the first $\delta$ time period, a job completes with probability $\mu \delta$. If this happens, the state $A(s)$ drops by an exponential amount ($A(\delta) = \delta-\Exp(\lambda)$). There are two cases: (1) with probability $e^{-\lambda \delta}$, the decrement is more than $\delta$, which means $\A(\delta)<0$ and the busy period ends; (2) with probability $1-e^{-\lambda \delta}$, the dropping amount is smaller than $\delta$ and the busy period continues. In this case, suppose $A(\delta)=\Delta<\delta$. Thus, conditioning on a drop happening in the first $\delta$ time, we have  that $\costbar$ is, with a slight abuse of notation: 
\begin{equation}
\left [\costbar(t,\alpha) \mid \text{a drop happens in the first $\delta$ time}\right ] = \alpha r(t) \delta + \left(1-e^{-\lambda \delta}\right) e^{-\alpha \delta} \costbar(t+\Delta,\alpha)+o(\delta).
\label{eq:costbar1}
\end{equation}

Otherwise, with probability $1-\mu\delta$, no drop happens during the first $\delta$ time. 
Then $A(s)$ starts at $\delta$, and experiences an M/M/1 busy period like process until it drops below $\delta$. During this time, $\costbar(t+\delta,\alpha)$ is incurred.
At the point that $A(s)$ drops below $\delta$, we know that $A(s)$ is distributed as $\delta - \Exp(\lambda)$ because this is an exponential overshoot. Thus
with probability $e^{-\lambda \delta}$, $A(s)$ is below $0$, which means the busy period ends. Otherwise with probability $1-e^{-\lambda \delta}$, $A(s)$ is still larger than 0 (again, denoted by $\Delta$) and the busy period continues.  Thus 
the conditional value of $\costbar$ in this case is
\begin{align}
&\left [\costbar(t,\alpha) \mid \text{no drops happen in the first $\delta$ time}\right ]\nonumber\\ 
&= \alpha r(t) \delta + e^{-\alpha \delta} \left(\costbar(t+\delta,\alpha) + \E{e^{-\alpha T_1}}(1-e^{-\lambda \delta}) \costbar(t+\Delta,\alpha)\right)+o(\delta).  
\label{eq:costbar2}
\end{align}

Combining \eqref{eq:costbar1} and \eqref{eq:costbar2}, we have that 
\begin{align*}
    \costbar(t,\alpha) =&  \alpha r(t) \delta + \mu\delta\left(\left(1-e^{-\lambda \delta}\right) e^{-\alpha \delta} \costbar(t+\Delta,\alpha)\right) +o(\delta)\\
    &+ (1-\mu\delta)\left(e^{-\alpha \delta} \left(\costbar(t+\delta,\alpha) + \gamma_1(1-e^{-\lambda \delta}) \costbar(t+\Delta,\alpha)\right)\right).
\end{align*}

Thus we have that 
\begin{align*}
    &\lim_{\delta \to 0} \frac{\costbar(t+\delta,\alpha)-\costbar(t,\alpha)}{\delta}  \\
    = & -\alpha r(t) - \lim_{\delta\to 0}\mu\left(\left(1-e^{-\lambda \delta}\right) e^{-\alpha \delta} \costbar(t+\Delta,\alpha)\right)\\
    &+ \lim_{\delta\to 0}  \frac{1-(1-\mu\delta)e^{-\alpha \delta}}{\delta} \costbar(t+\delta,\alpha) - \lim_{\delta\to 0} (1-\mu\delta)e^{-\alpha \delta}\gamma_1\frac{1-e^{-\lambda\delta}}{\delta} \costbar(t+\Delta,\alpha)\\
    =&  -\alpha r(t) + (\mu+\alpha)\costbar(t,\alpha) - \lambda\gamma_1 \costbar(t,\alpha)\\
    =& -\alpha r(t) + (\mu+\alpha-\lambda\gamma_1)\costbar(t,\alpha).
\end{align*}

Note that this is a first-order ODE of the function $\costbar(t,\alpha)$ with respect to $t$. By Lemma~\ref{lemma:ODE}, 
\begin{equation}
    \costbar(t,\alpha) = e^{(\mu+\alpha-\lambda\gamma_1)t} \left(\int -e^{-(\mu+\alpha-\lambda\gamma_1)s} \alpha r(s)ds+C \right).
    \label{eq:costbar proof eq}
\end{equation}

Note that by \eqref{eq:gamma2}, 
\[\alpha+\mu-\lambda\gamma_1 = \alpha - \lambda\gamma_1 + \frac{(\alpha-\lambda\gamma_1+\lambda)\gamma_1}{1-\gamma_1} = \frac{\alpha}{1-\gamma_1}.\]

Thus \eqref{eq:costbar proof eq} can be rewritten as

\begin{align}
    \costbar(t,\alpha) &= e^{\frac{\alpha}{1-\gamma_1}t} (\int_t^\infty e^{-\frac{\alpha}{1-\gamma_1}s} \alpha r(s)ds+C) \nonumber\\
    &= \int_0^\infty e^{-\frac{\alpha}{1-\gamma_1}s}\alpha r(t+s) ds + Ce^{\frac{\alpha}{1-\gamma_1}t}\nonumber\\
    &= (1-\gamma_1)\E{r(t+X)} + Ce^{\frac{\alpha}{1-\gamma_1}t},
    \label{eq: costbar with C}
\end{align}
where $X\sim \Exp(\frac{\alpha}{1-\gamma_1})$. Next we prove that $C=0$.

By Lemma~\ref{lemma: r'}, we have that 
\[r(t)=c(t)+\lambda \int_0^tc(x)dx + C_1.\]
    Together with Lemma~\ref{lemma:c converge} (dropping the subscripts $i$), we have that 
    \[\lim_{t\to 0} \frac{r(t)}{e^{(\mu-\lambda)t}} = 0.\]

    From ~\eqref{eq: costbar with C} we have that for any $t$,
    \[|C|\leq \costbar(t,\alpha)e^{-\frac{\alpha}{1-\gamma_1}t} + (1-\gamma_1)\E{r(t+X)}e^{-\frac{\alpha}{1-\gamma_1}t}.\]
    However, both terms on the right hand side go to zero as $t\to \infty$ 
    (see Lemma~\ref{lemma: app for cost bar} in Appendix~\ref{app:defer proofs}).
    Thus we have proven that $C=0$. By Equation \eqref{eq: costbar with C} we have the proof.
\end{proof}

We now characterize another important term in \eqref{eq:whittle alpha}, the function $\Gamma$.
\begin{lemma}[Formula for $\Gamma$]
\label{lemma:eq2}
    \begin{equation}
    \label{eq: equation2}
        \Gamma(t,\alpha)= \frac{\alpha}{\alpha+\lambda}\E{r(t-T_2)}.
    \end{equation}
\end{lemma}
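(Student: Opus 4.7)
The plan is to rewrite the integrand probabilistically. Observe that $\alpha e^{-\alpha s}$ on the interval $[0,\infty)$ is exactly the density of an independent clock $S\sim \Exp(\alpha)$. Introducing such an $S$ independent of $T_2\sim \Exp(\lambda)$, the definition of $\Gamma$ becomes
\[
\Gamma(t,\alpha) \;=\; \E\!\left[ r(t - T_2 + S)\,\mathbbm{1}\{S < T_2\}\right],
\]
because the indicator $\{S < T_2\}$ restricts the $S$-integral to $[0,T_2]$, matching the original range of $s$ in Definition~\ref{def:decay gamma}.

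From here I would invoke the two standard facts about competing exponentials: $\P{S<T_2}=\alpha/(\alpha+\lambda)$, and, by memorylessness, conditional on the event $\{S<T_2\}$ the residual $T_2-S$ is again $\Exp(\lambda)$-distributed and independent of $S$. Writing $t - T_2 + S = t - (T_2 - S)$ and applying these two facts yields
\[
\Gamma(t,\alpha) \;=\; \P{S<T_2}\cdot \E\!\left[r(t-(T_2-S))\mid S<T_2\right] \;=\; \frac{\alpha}{\alpha+\lambda}\,\E{r(t-T_2)},
\]
which is exactly \eqref{eq: equation2}.

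If a purely analytic derivation is preferred, the same identity drops out by a direct double integration: write $\Gamma(t,\alpha)$ as a double integral over $u=T_2$ and $s$, swap the order of integration on the region $\{0\le s\le u\}$, substitute $v=u-s$ in the inner integral, and notice that the joint integrand factors into $\alpha e^{-(\alpha+\lambda)s}$ and $\lambda e^{-\lambda v} r(t-v)$, which integrate separately to $\alpha/(\alpha+\lambda)$ and $\E{r(t-T_2)}$ respectively.

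I do not expect any real obstacle in this lemma. Unlike the derivation of $\costbar$ in Lemma~\ref{lemma:eq1}, no ODE is set up and no appeal to the growth bound of Lemma~\ref{lemma:c converge} is needed, since the truncation by $T_2$ together with the non-negativity of $r$ makes integrability automatic. The one minor point worth flagging is that $r(s)=0$ for $s<0$, so the expectation $\E{r(t-T_2)}$ silently ignores the event $\{T_2>t\}$; both sides of the identity handle this in the same way, so no separate case analysis is needed.
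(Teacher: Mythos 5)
Your proposal is correct, and your ``purely analytic'' fallback is essentially word-for-word the paper's proof: the paper writes $\Gamma(t,\alpha)$ as a double integral over $x=T_2$ and $s$, swaps the order of integration over $\{0\le s\le x\}$, substitutes $y=x-s$, and factors the integrand into $\alpha e^{-(\alpha+\lambda)s}$ and $\lambda e^{-\lambda y}r(t-y)$. Your primary, probabilistic phrasing is a mild repackaging of that same computation: recognizing $\alpha e^{-\alpha s}\,ds$ as the law of an independent clock $S\sim\Exp(\alpha)$ and invoking the standard competing-exponentials facts $\P{S<T_2}=\frac{\alpha}{\alpha+\lambda}$ and (by memorylessness) $T_2-S\sim\Exp(\lambda)$ conditional on $\{S<T_2\}$ — facts whose proofs are exactly the Fubini-plus-substitution step the paper carries out explicitly. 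What the probabilistic version buys is a calculation-free, one-line derivation and a clear intuition for where the factor $\frac{\alpha}{\alpha+\lambda}=1-\gamma_2$ comes from; what the paper's direct integration buys is self-containedness, with no appeal to conditional-distribution facts. Your closing remarks are also accurate: integrability is automatic here (no analogue of Lemma~\ref{lemma:c converge} is needed), and the convention $r(s)=0$ for $s<0$ is applied identically on both sides, so no case split on $\{T_2>t\}$ is required.
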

\begin{proof} The proof is relatively straightforward:
    \begin{align}
    \Gamma(t,\alpha):=\E{\int_0^{T_2}\alpha r(t-T_2+s)e^{-\alpha s}ds} 
    &=\int_{0}^\infty \int_0^x \alpha r(t-x+s)e^{-\alpha s} ds\  \lambda e^{-\lambda x}dx \nonumber\\
    &= \int_{0}^\infty \int_s^\infty \alpha r(t-x+s)e^{-\alpha s}   \lambda e^{-\lambda x}dxds \nonumber\\
    &= \int_{0}^\infty \int_0^\infty \alpha r(t-y)e^{-\alpha s}   \lambda e^{-\lambda (y+s)}dyds \nonumber\\
    &= \int_{0}^\infty r(t-y)\lambda e^{-\lambda y} dy \cdot \int_{0}^\infty \alpha  e^{-(\alpha+\lambda)s} ds\nonumber\\
    &= \frac{\alpha}{\alpha+\lambda}\E{r(t-T_2)}.\nonumber
\end{align}
\end{proof}

Finally, we use the following lemma to simplify \eqref{eq:whittle alpha}.  
\begin{lemma}
\label{lemma: four eq}
Let $T_2\sim \Exp(\lambda)$, $X\sim \Exp(\frac{\alpha}{1-\gamma_1})$. Then we have the following equations:
\begin{align}
    &\E{r'(t-T_2)}=\lambda c(t).
    \label{eq: equation3}\\
    &\E{r(t-T_2)}= r(t)-c(t).
    \label{eq: equation4}\\
    &\E{r'(t+X)}= \frac{\mu}{\gamma_1}\E{c(t+X)} - \frac{\alpha}{1-\gamma_1}c(t).
    \label{eq: equation5}\\
    &\E{r(t+X)}=r(t)-c(t)+\frac{\mu(1-\gamma_1)}{\gamma_1\alpha}\E{c(t+X)}.
    \label{eq: equation6}
\end{align}
    


    
\end{lemma}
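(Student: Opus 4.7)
The plan is to prove the four identities in turn, each one a short calculation combining Lemma~\ref{lemma: r'} (which rewrites $r'$ in terms of $c$ and $c'$) with Lemma~\ref{lemma:exp formula} (which relates $\E{f(x_0 \pm X)}$ and $\E{f'(x_0 \pm X)}$ for exponential $X$). The later identities build on the earlier ones, so I would do them in the order listed.

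For \eqref{eq: equation3}, I would apply $r'(s)=c'(s)+\lambda c(s)$ inside the expectation and then use Lemma~\ref{lemma:exp formula} on $c$ with $X=T_2\sim\Exp(\lambda)$, so that $c(t)-\E{c(t-T_2)}=\tfrac{1}{\lambda}\E{c'(t-T_2)}$. This turns $\E{c'(t-T_2)}$ into $\lambda(c(t)-\E{c(t-T_2)})$, and the $\lambda\E{c(t-T_2)}$ terms cancel, leaving $\lambda c(t)$. For \eqref{eq: equation4}, I would just apply Lemma~\ref{lemma:exp formula} directly to $r$ with $X=T_2$, giving $r(t)-\E{r(t-T_2)}=\tfrac{1}{\lambda}\E{r'(t-T_2)}$, and substitute \eqref{eq: equation3} for the right-hand side.

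Equations \eqref{eq: equation5} and \eqref{eq: equation6} work symmetrically for $X\sim\Exp(\tfrac{\alpha}{1-\gamma_1})$, with $\E{X}=\tfrac{1-\gamma_1}{\alpha}$. For \eqref{eq: equation5}, I would again write $\E{r'(t+X)}=\E{c'(t+X)}+\lambda\E{c(t+X)}$, use Lemma~\ref{lemma:exp formula} to replace $\E{c'(t+X)}$ by $\tfrac{\alpha}{1-\gamma_1}(\E{c(t+X)}-c(t))$, and then combine coefficients to get
\[
\E{r'(t+X)}=\left(\frac{\alpha}{1-\gamma_1}+\lambda\right)\E{c(t+X)}-\frac{\alpha}{1-\gamma_1}c(t).
\]
The only non-routine step is recognizing that the coefficient $\tfrac{\alpha}{1-\gamma_1}+\lambda$ equals $\tfrac{\mu}{\gamma_1}$, which follows from rearranging the busy-period identity $\mu=\tfrac{(\alpha+\lambda-\lambda\gamma_1)\gamma_1}{1-\gamma_1}$ from \eqref{eq:gamma2}. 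This small algebraic reduction is really the only subtle step in the whole lemma.

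Finally, for \eqref{eq: equation6}, I would apply Lemma~\ref{lemma:exp formula} to $r$ itself, giving $\E{r(t+X)}-r(t)=\tfrac{1-\gamma_1}{\alpha}\E{r'(t+X)}$, and substitute the expression from \eqref{eq: equation5}. The factor $\tfrac{1-\gamma_1}{\alpha}\cdot\tfrac{\alpha}{1-\gamma_1}=1$ cancels the $c(t)$ coefficient cleanly, yielding $\E{r(t+X)}=r(t)-c(t)+\tfrac{\mu(1-\gamma_1)}{\gamma_1\alpha}\E{c(t+X)}$. I do not anticipate any serious obstacle here — the entire lemma is really a bookkeeping exercise, with the lone non-obvious ingredient being the reduction $\tfrac{\alpha}{1-\gamma_1}+\lambda=\tfrac{\mu}{\gamma_1}$ via the busy-period Laplace transform identity \eqref{eq:gamma2}.
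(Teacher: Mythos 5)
Your proposal is correct and follows essentially the same route as the paper: apply Lemma~\ref{lemma: r'} inside the expectation, use Lemma~\ref{lemma:exp formula} to trade $\E{c'(\cdot)}$ (resp.\ $\E{r'(\cdot)}$) for differences of expectations, and invoke the identity $\frac{\alpha}{1-\gamma_1}+\lambda=\frac{\mu}{\gamma_1}$ from \eqref{eq:gamma2}, deriving the four equations in the same order with each building on the previous. No gaps.
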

\begin{proof}
The equations follow from Lemmas~\ref{lemma:exp formula} and~\ref{lemma: r'}. See Lemma~\ref{lemma: four eq app} in  Appendix~\ref{app:defer proofs}.
\end{proof}

Now substituting \eqref{eq: equation1}, \eqref{eq: equation2}, \eqref{eq: equation3}, \eqref{eq: equation4}, \eqref{eq: equation5}, \eqref{eq: equation6} into the formula given in Lemma~\ref{lemma:solving whittle} gives the desired simple guessed Whittle Index. 

\begin{lemma}
\label{thm:main}
    The guessed Whittle Index given discount factor $\alpha$ is:
    \begin{equation}
    \label{eq:guess W}
        \guess(t_0,\alpha) = \mu \E{c(t_0 + X)},\quad  \text{where }X\sim \Exp(\frac{\alpha}{1-\gamma_1}).
    \end{equation}
\end{lemma}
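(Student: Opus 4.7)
The plan is to take the complicated expression for $\guess(t_0,\alpha)$ from Lemma~\ref{lemma:solving whittle} and substitute in the six identities \eqref{eq: equation1}--\eqref{eq: equation6} from Lemmas~\ref{lemma:eq1}, \ref{lemma:eq2}, and~\ref{lemma: four eq}, then grind through the algebra. The target is to show that after massive cancellation, only the single term $\mu\,\E{c(t_0+X)}$ survives. Concretely, the expression inside the $\frac{1}{1-\gamma_1}(\,\cdots\,)$ bracket of \eqref{eq:whittle alpha} naturally splits into three pieces: (i) the $r(t_0)$ term $\alpha(1-\gamma_1\gamma_2)r(t_0)$, (ii) the $\costbar$ piece $\costbar'(t_0,\alpha) - \alpha\costbar(t_0,\alpha)$, and (iii) the $\Gamma$ piece $\gamma_1\Gamma'(t_0,\alpha) - \alpha\gamma_1\Gamma(t_0,\alpha)$. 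I would handle (ii) and (iii) separately and then combine.

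For piece (ii), differentiating \eqref{eq: equation1} gives $\costbar'(t_0,\alpha) = (1-\gamma_1)\E{r'(t_0+X)}$. Substituting \eqref{eq: equation5} and \eqref{eq: equation6} and collecting terms in $\E{c(t_0+X)}$, $c(t_0)$, and $r(t_0)$, the $\E{c(t_0+X)}$ coefficient simplifies via $1/\gamma_1 - (1-\gamma_1)/\gamma_1 = 1$, yielding
\begin{equation*}
\costbar'(t_0,\alpha) - \alpha\costbar(t_0,\alpha) = (1-\gamma_1)\mu\,\E{c(t_0+X)} - \alpha(1-\gamma_1)r(t_0) - \alpha\gamma_1 c(t_0).
\end{equation*}
For piece (iii), differentiating \eqref{eq: equation2} and then substituting \eqref{eq: equation3} and \eqref{eq: equation4}, together with $\gamma_2 = \lambda/(\alpha+\lambda)$ so that $\alpha/(\alpha+\lambda) = 1 - \gamma_2$, gives
\begin{equation*}
\gamma_1\bigl(\Gamma'(t_0,\alpha) - \alpha\Gamma(t_0,\alpha)\bigr) = \gamma_1\alpha c(t_0) - \gamma_1\alpha(1-\gamma_2)r(t_0).
\end{equation*}

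Finally, adding (i), (ii), and (iii): the $c(t_0)$ contributions $-\alpha\gamma_1 c(t_0) + \gamma_1\alpha c(t_0)$ cancel, and the $r(t_0)$ coefficient becomes $\alpha\bigl[(1-\gamma_1\gamma_2) - (1-\gamma_1) - \gamma_1(1-\gamma_2)\bigr]$, which algebraically equals zero. The only survivor is $(1-\gamma_1)\mu\,\E{c(t_0+X)}$. Dividing by $1-\gamma_1$ yields \eqref{eq:guess W}.

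I do not expect a serious obstacle: each identity is already packaged by the preceding lemmas, and the argument is a bookkeeping exercise. The only mildly delicate point is keeping track of which $\gamma_i$ appears where, and using $\gamma_2=\lambda/(\alpha+\lambda)$ at the right moment so that the $r(t_0)$-coefficient telescopes to zero. Once those substitutions are aligned, the proof is a direct computation.
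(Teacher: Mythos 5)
Your proposal is correct and follows exactly the paper's route: the paper proves this lemma precisely by substituting identities \eqref{eq: equation1}--\eqref{eq: equation6} into the expression from Lemma~\ref{lemma:solving whittle} (it simply omits the algebra, which you have carried out correctly — the coefficient of $r(t_0)$ telescopes to zero via $1-\gamma_1\gamma_2-(1-\gamma_1)-\gamma_1(1-\gamma_2)=0$, the $c(t_0)$ terms cancel, and only $(1-\gamma_1)\mu\E{c(t_0+X)}$ survives). The only implicit step, differentiating \eqref{eq: equation1} and \eqref{eq: equation2} under the expectation, is the same interchange the paper takes for granted.
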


\subsection{Indexability and the Whittle Index}
\label{sec:index}

We now use the guessed Whittle Index given in Lemma~\ref{thm:main} to prove indexability and the correctness of the guess. We only present the proof sketch here due to space limitations and defer the proof to the appendix. However, we need to point out that, though deferred, the proof is also involved.
\begin{customthm}[Indexability]
\label{thm:index}
    The discounted single-arm bandit problem is indexable with the Whittle Index $W(t_0,\alpha)=\guess(t_0,\alpha)$, where $\guess$ is the guessed Whittle Index given in Lemma~\ref{thm:main}.
\end{customthm}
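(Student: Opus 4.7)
The plan is to prove Theorem~\ref{thm:index} in three steps. First, I would establish that for every compensation $\ell$, the optimal policy is a threshold policy $Threshold(x^*(\ell))$ for some state $x^*(\ell)$. Second, I would show that $x^*(\ell)$ is non-decreasing in $\ell$, which immediately yields indexability. Third, I would use the fact that $\guess(t_0, \alpha) = \mu \E{c(t_0+X)}$ (Lemma~\ref{thm:main}) is non-decreasing in $t_0$ to match $x^*(\cdot)$ with $\guess(\cdot,\alpha)$ and conclude.

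Step 1 is the main obstacle. Let $V_\ell(s)$ denote the optimal discounted cost-to-go from state $s$, and let $V_\ell^{\text{pass}}(s)$, $V_\ell^{\text{act}}(s)$ denote its analogues under forced passive or forced active at state $s$ (with optimal play thereafter). I would show that the advantage of being active,
\[
    D_\ell(s) \;:=\; V_\ell^{\text{pass}}(s) - V_\ell^{\text{act}}(s),
\]
is non-decreasing in $s$: at larger $s$ the downward $\Exp(\lambda)$ jump triggered by an active action saves more future holding cost (since $c$, and hence $r$ via Lemma~\ref{lemma: r'}, is non-decreasing), while at smaller $s$ the $-\ell$ compensation dominates the small savings from an active action. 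The cleanest route is a sample-path exchange argument in the spirit of~\cite{aalto2024whittle}: given any non-threshold policy, swap a passive decision at a high state with an active decision at a lower state and verify that total discounted cost cannot increase. Alternatively, a value-iteration argument preserving monotonicity of $D_\ell$ at each iterate should also work. Once $D_\ell$ is monotone in $s$, the optimal active set $\{s : D_\ell(s) \geq 0\}$ is of the form $[x^*(\ell), \infty)$, so a threshold policy is optimal.

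Step 2 is a short monotone comparative statics argument. For any policy $\pi$, the cost $V_\ell^\pi(s)$ decreases linearly in $\ell$ at a rate equal to its expected discounted passive time, because raising $\ell$ just adds a uniform $+\ell$ benefit per unit of discounted passive time. A direct exchange argument then shows that if passive is optimal at $s$ under $\ell_1$, it remains optimal at $s$ under every $\ell_2 > \ell_1$: the passive-first continuation can only benefit more from the higher compensation than any competing active-first continuation. Hence $\Pi(\ell) = (-\infty, x^*(\ell)]$ with $x^*(\ell)$ non-decreasing in $\ell$, which is exactly indexability.

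For Step 3, recall that $\guess(t_0, \alpha)$ is defined so that $Threshold(t_0)$ is the optimal threshold policy starting from $t_0$ under compensation $\guess(t_0, \alpha)$. Combined with Step 1, this gives $x^*(\guess(t_0, \alpha)) = t_0$, i.e.\ active and passive are indifferent at $t_0$ under compensation $\guess(t_0, \alpha)$. Since $\guess(\cdot, \alpha)$ is non-decreasing (as $c$ is non-decreasing), we obtain
\[
    W(t_0, \alpha) \;=\; \inf\{\ell : t_0 \in \Pi(\ell)\} \;=\; \inf\{\ell : x^*(\ell) \geq t_0\} \;=\; \guess(t_0, \alpha),
\]
completing the proof. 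Minor care is needed on plateaus of $c$, where several states share the same Whittle index and ties can be broken arbitrarily without affecting indexability.
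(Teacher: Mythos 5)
Your three-step skeleton (threshold optimality for every $\ell$, monotone thresholds $x^*(\ell)$, then matching $x^*$ with $\guess$) is a genuinely different route from the paper, which never proves threshold optimality for arbitrary compensation: instead it verifies the Hamilton--Jacobi--Bellman equation directly at the specific compensations $\ell=\guess(t_0,\alpha)$, using the closed-form expressions for $\costbar$ and $\Gamma$ to compute the value function of $Threshold(t_0)$ explicitly (Lemma~\ref{lemma: main in index}), handles the corner cases $\ell\leq\mu\E{c(X)}$ and $\ell\geq \mu M$ separately (Lemma~\ref{lemma: corner}), and then deduces indexability from the monotonicity and continuity of $\guess(\cdot,\alpha)$, so that every intermediate $\ell$ is of the form $\guess(t_1,\alpha)$. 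The difficulty with your route is that both load-bearing steps are left as sketches. Step 1 is precisely the crux of the theorem, and the asserted monotonicity of $D_\ell(s)$ is not routine here: the arm is restless (the state advances at rate $1$ under passive as well), and an active action only produces $\Exp(\lambda)$-sized downward jumps at rate $\mu$, so the proposed ``swap a passive decision at a high state with an active decision at a lower state'' changes the entire subsequent trajectory and the two sample paths do not recouple in an obvious way; the exchange argument of the static setting you cite does not transfer, and indeed the intractability of the dynamic setting is the paper's motivation for avoiding exactly this step. Likewise, your Step 2 argument for monotonicity of $x^*(\ell)$ compares optimal continuations whose discounted passive times are not ordered a priori; indexability is exactly the property that can fail for restless bandits, so this needs a real proof rather than the informal ``benefits more from higher compensation'' comparison.

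There is also a concrete gap in Step 3. You treat $\guess(t_0,\alpha)$ as ``defined so that $Threshold(t_0)$ is the optimal threshold policy,'' but the definition in \eqref{eq:whittle} is only a first-order stationarity condition in the threshold parameter, evaluated from initial state $t_0$. To conclude $x^*(\guess(t_0,\alpha))=t_0$ you would additionally need that this stationary point is the global minimizer over thresholds (uniqueness or unimodality of $x\mapsto\cost^x(t_0,\alpha,\ell)$), and then that active/passive indifference at $t_0$ holds under the globally optimal continuation rather than merely under the threshold continuation; establishing this is essentially the content of the paper's Lemma~\ref{lemma: main in index}, which you have not replaced. So while the architecture of your argument could in principle be completed, as written it assumes the hard facts (threshold structure for all $\ell$, indexability-style monotonicity, and stationarity implying global optimality) rather than proving them, whereas the paper's HJB verification at $\ell=\guess(t_0,\alpha)$, together with the corner-case lemma, is what actually closes these gaps.
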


\begin{proof}
    The main lemma for this theorem is to prove that when the compensation $\ell=\guess(t_0,\alpha)$, the policy $Threshold(t_0)$ is the optimal policy. This is proved by verifying the Hamilton–Jacobi–Bellman equation. See Appendix~\ref{app:index} for the detailed proof. 
\end{proof}

Our final result is the Whittle Index for the scheduling problem. It is a straightforward corollary of Theorem~\ref{thm:index} by taking the limit on the discount factor and adding back the subscript $i$. 

\begin{customcor}
\label{cor:main}
    The Whittle Index for the scheduling problem is 
    \[W_i(t_0) = \mu\E{c_i(t_0+X)}, \text{ where }X\sim \Exp(\mu_i-\lambda_i).\]
\end{customcor}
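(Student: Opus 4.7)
My plan is to derive the corollary by taking the limit $\alpha \to 0$ in Theorem~\ref{thm:index}, which gives the discounted Whittle Index as $W(t_0,\alpha) = \mu\,\E{c(t_0+X)}$ with $X \sim \Exp\!\bigl(\alpha/(1-\gamma_1)\bigr)$. The only nontrivial step is to identify the limiting rate parameter of $X$, namely to compute $\lim_{\alpha\to 0} \alpha/(1-\gamma_1)$.

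First I would exploit the fixed-point equation for $\gamma_1$ given in \eqref{eq:gamma2}, namely $\mu(1-\gamma_1) = (\alpha+\lambda-\lambda\gamma_1)\gamma_1$. Rearranging yields
\[
\frac{\alpha}{1-\gamma_1} \;=\; \frac{\mu-\lambda\gamma_1}{\gamma_1}.
\]
Since $\gamma_1 = \E{e^{-\alpha T_1}}$ with $T_1$ an M/M/1 busy period of finite mean $1/(\mu-\lambda)$ (using that we are in the stable regime $\lambda < \mu$), monotone convergence gives $\gamma_1 \to 1$ as $\alpha \to 0$, so that $\alpha/(1-\gamma_1) \to \mu-\lambda$. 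Consequently the exponential variable $X$ converges in distribution to $\Exp(\mu-\lambda)$, and a standard dominated-convergence argument (using that $c$ is non-decreasing and the necessary convergence condition in Lemma~\ref{lemma:c converge}) transfers this convergence inside the expectation to give $\lim_{\alpha\to 0}\guess(t_0,\alpha) = \mu\,\E{c(t_0+X)}$ with $X\sim\Exp(\mu-\lambda)$.

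Next I would justify passing from the discounted Whittle Index to the undiscounted (long-run average) Whittle Index. This is the content of Lemma~\ref{lemma:decay}: as $\alpha\to 0$, the discounted cost objective converges to the long-run average cost, so the compensation value that makes active and passive equally attractive under the discounted criterion likewise converges to the corresponding balancing compensation for the average-cost criterion. Composing this with Theorem~\ref{thm:bandit}, which equates the TVHC cost with the R-MAB cost under FCFS-enforcing index policies, and with the fact that the R-MAB decouples into $k$ independent single-arm subproblems when applying the Whittle relaxation, yields the Whittle Index for each class.

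Finally I would reintroduce the subscript $i$, since the single-arm derivation was carried out for a generic class. The parameters $\mu,\lambda,c$ become $\mu_i,\lambda_i,c_i$, yielding
\[
W_i(t_0) \;=\; \mu_i\,\E{c_i(t_0+X)}, \qquad X\sim\Exp(\mu_i-\lambda_i),
\]
as claimed. I do not anticipate a substantive obstacle: the algebraic limit for $\alpha/(1-\gamma_1)$ is a two-line calculation, and the Tauberian passage is standard for average-cost R-MABs. The only care needed is checking the dominated-convergence hypothesis when moving the limit inside $\E{c(t_0+X)}$, for which the growth bound from Lemma~\ref{lemma:c converge} (applied with $\mu_i-\lambda_i$ in the exponent) is exactly what we need.
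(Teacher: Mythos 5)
Your proposal is correct and follows essentially the same route as the paper: take $\alpha\to 0$ in the index of Theorem~\ref{thm:index}, show $\lim_{\alpha\to 0}\alpha/(1-\gamma_1)=\mu-\lambda$, and restore the subscript $i$. The paper evaluates this limit via the transform derivative, $\lim_{\alpha\to 0}\alpha/(1-\widetilde{T_1}(\alpha))=1/\E{T_1}=\mu-\lambda$, while you obtain it from the fixed-point equation \eqref{eq:gamma2} together with $\gamma_1\to 1$; these are equivalent computations, and your additional care about passing the limit inside $\E{c(t_0+X)}$ is a technical detail the paper leaves implicit.
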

\begin{proof}
    The formula is given by $W(t_0) = \lim_{\alpha\to 0}\guess(t_0,\alpha)$:
    \[\lim_{\alpha\to 0} \frac{\alpha
        }{1-\gamma_1} = \lim_{\alpha\to 0} \frac{\alpha
        }{1-\widetilde{T_1}(\alpha)} = \frac{1}{-\widetilde{T_1}'(0)} = \frac{1}{\E{T_1}} = \mu-\lambda.\]
    Adding back the subscript $i$ yields the Whittle Index. 
\end{proof}

\section{Simulations}
\label{sec:simulation}

We conduct simulations to evaluate the performance of our proposed policy (from Corollary~\ref{cor:main}), comparing it against other policies in the literature.  In our evaluations, we experiment with different numbers of classes, different holding cost functions, and a range of system loads. We present the following main findings:

\begin{itemize}
    \item Across all experiments, including cases with complex interleaving holding cost functions, our policy consistently matches the lowest time-average holding cost among all candidate policies we considered. 
    \item In certain problem settings, our policy significantly outperforms each candidate policy. 
\end{itemize}


\subsection{Policies evaluated}
\label{s:policies}
Throughout, when we talk about \enquote{our policy,} we will refer to the policy from Corollary~\ref{cor:main} where the priority of a class $i$ job of age $t$ is 
\[
    W_i(t) = \mu_i\E{c_i(t+X)}, \qquad \text{ where }X\sim\Exp(\mu_i-\lambda_i)
\]
We compare our policy with a broad spectrum of alternative policies:
\begin{description}
    \item[FCFS.] This policy always serves the job that arrived earliest.  FCFS is a very simple policy and we compare against it as a baseline. 
    \item[Strict Preemptive Priority.] 
    This policy assigns a fixed priority to each job class, where jobs from a higher-priority class have preemptive priority over those from a lower-priority class.  Jobs within a class are run in FCFS order.
    We choose a highly optimistic version of this policy where the priority ordering can change at each value of system load. Thus we might run Prio(1;2), where class 1 has priority over class 2 when load is low, but Prio(2;1) when load is high.
    
    \item[Generalized $c\mu$~\cite{van1995dynamic}.] This policy always serves the job with highest index $c_i(t)\cdot\mu_i$, where $t$ is the age of the class $i$ job. 
    This policy is known to be optimal in the diffusion limit. 
    \item[Aalto's Index~\cite{aalto2024whittle}.] 
    In this policy a class $i$ job of age $t$ is given index $V_i(t)$, where\footnote{This is a simplification of Aalto's policy to the case of exponential job sizes.}
    \[
        V_i(t) = \mu_i\E{c_i(t+S)},\qquad\text{ where }S\sim Exp(\mu_i).
    \]
    Like our policy, this is also a Whittle-index based heuristic, but is motivated by the static setting (no arrivals), and hence does not incorporate the arrival rate $\lambda_i$.  
    
\end{description}

\subsection{Experimental Results}

We have conducted hundreds of experiments comparing the scheduling policies from Section~\ref{s:policies} under  different holding cost functions, job sizes, and arrival rates.  In all of these, our policy was either the best policy or matched the best policy.  Due to lack of space, we present only four diverse experiments which well-illustrate some important behaviors.


Each experiment shown is represented by  {\em (a)} a set of holding cost functions (not drawn to scale) and {\em (b)} the corresponding simulation results. We experiment with small and large jobs, where holding cost functions drawn in red or orange correspond to a class with shorter jobs, while those drawn in blue correspond to a class with larger jobs.
We experiment with different arrival rates, shown as a proportion of the total arrival rate, $\lambda$, across all classes, where $\lambda$ is specified by the load.


\Cref{fig:1-deadline-cost-fn} considers an experiment where   class 1 jobs (the shorter ones) only incur a holding cost after a deadline is passed, while class 2 jobs (the longer ones) incur a steady (but low) holding cost.  In this figure, the arrival rate of class 1 is much higher than that of class 2.  
\Cref{fig:1-deadline-plot} shows the corresponding results.  We see that FCFS is by far the worst policy.  The generalized $c \mu$ rule improves upon FCFS, and Aalto improves upon that.  Our policy (shown in red) is significantly better than all the others, except for Preemptive Priority, which is equal to our policy here. 

To understand what's going on, we first observe that we should be prioritizing class 1 jobs ahead of their deadline, given that the cost of missing the deadline is so high.  
The $c \mu$ rule only prioritizes class 1 at the deadline point.   The Aalto policy improves upon the $c \mu$ rule, by prioritizing class 1 jobs in advance of the deadline, but it doesn't do so early enough (because Aalto doesn't consider load).  Our policy improves upon the Aalto policy by prioritizing class 1 jobs even earlier -- in fact a better time to start prioritizing class 1 jobs is at age 0, which is what the Preemptive Priority policy does.


\begin{figure}[htp]
    \centering
    \begin{subfigure}[b]{0.40\textwidth}
        \centering
        \includegraphics[width=\textwidth]{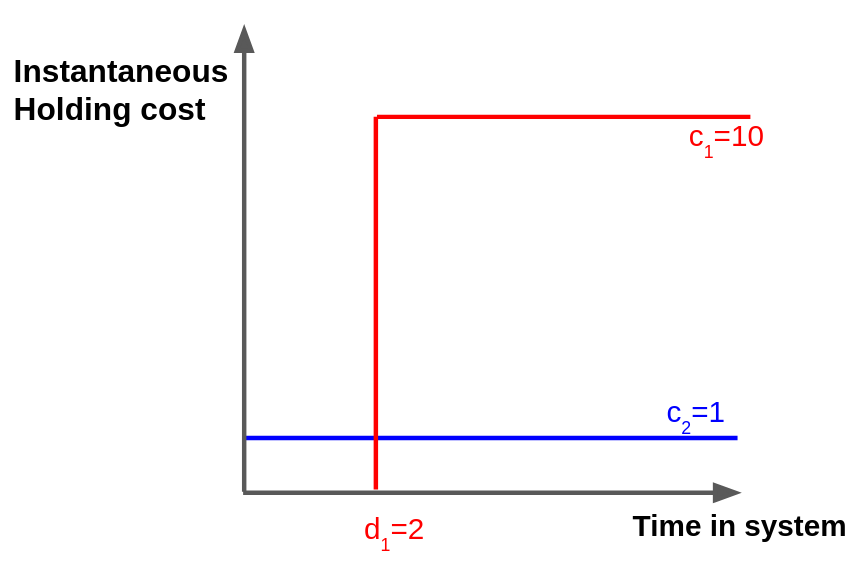}
        \caption{Holding cost functions.} 
        \label{fig:1-deadline-cost-fn}
    \end{subfigure}
    \hfill
    \begin{subfigure}[b]{0.40\textwidth}
        \centering
        \includegraphics[width=\textwidth]{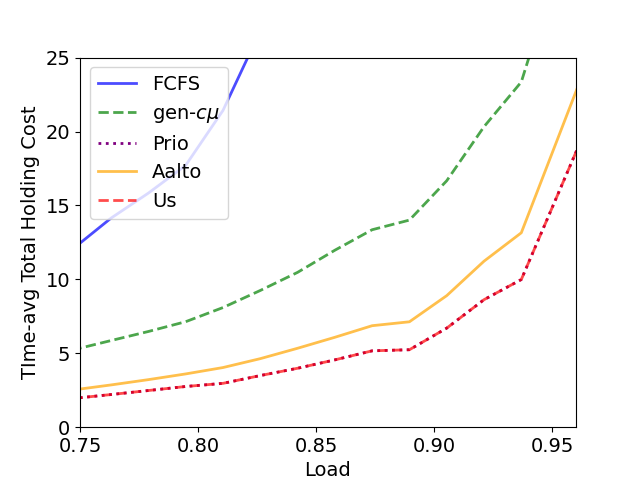}
                \caption{Performance of policies.}
        \label{fig:1-deadline-plot}
    \end{subfigure}
    \caption{Comparison of policies on holding cost functions with one deadline. We fix $\mu_1=3, \mu_2=1, \lambda_1=0.9\lambda$. }
    \label{fig:1-deadline}
\end{figure}

\Cref{fig:2-deadline-balanced-cost-fn} considers an experiment where both class 1 jobs (the short ones) and class 2 jobs (the long ones) only incur holding costs after their respective deadlines are passed. Here class 1 jobs incur a high penalty after a late deadline, while class 2 jobs incur a low penalty after an early deadline. The two classes have equal arrival rates. 
\Cref{fig:2-deadline-balanced-plot} shows the corresponding results.  
We see that our policy is noticeably better than all the others, except for Aalto's policy, which we only negligibly dominate. Our policy noticeably dominates Preemptive-Priority, which noticeably dominates generalized $c \mu$ which significantly dominates FCFS.


To understand what's going on, observe that it is again useful to prioritize jobs {\em ahead} of their deadlines, which explains why we outperform FCFS and generalized $c\mu$.
Strict priority is no longer effective here, because the deadlines are further out, so always prioritizing the class 1 jobs can be suboptimal. Our policy's performance is similar to Aalto for two reasons:  First, because the arrival rates are balanced, we find that when load is not too high, $\lambda_i$ is small compared to $\mu_i$, and our policy is similar to Aalto's policy.  Second, while our policy looks further into the future, it does so for both classes, and hence the benefit in some sense ``cancels out,'' leaving us with a policy similar to Aalto.  


\begin{figure}[htp]
    \centering
    \begin{subfigure}[b]{0.40\textwidth}
        \centering
        \includegraphics[width=\textwidth]{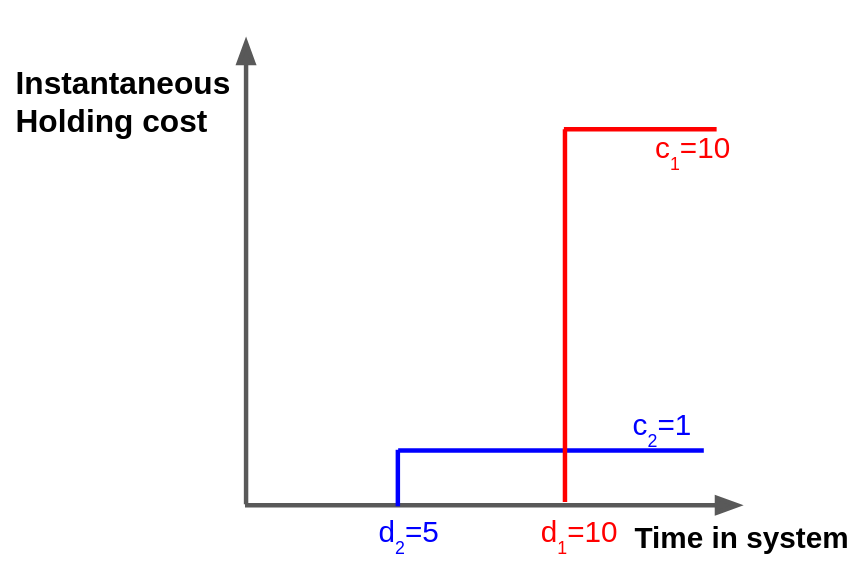}
        \caption{Holding cost functions.} 
        \label{fig:2-deadline-balanced-cost-fn}
    \end{subfigure}
    \hfill
    \begin{subfigure}[b]{0.40\textwidth}
        \centering
        \includegraphics[width=\textwidth]{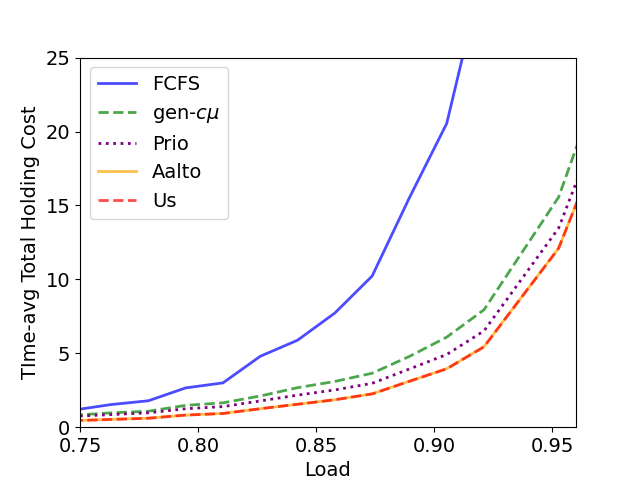}
        \caption{Performance of policies.} 
        \label{fig:2-deadline-balanced-plot}
    \end{subfigure}    
    \caption{Comparison of policies on holding cost functions with deadlines. We fix $\mu_1=3, \mu_2=1, \lambda_1=0.5 \lambda$.}
    \label{fig:2-deadline-balanced}
\end{figure}


\Cref{fig:polynomial-cost-fn} considers an experiment where jobs have polynomial holding cost functions. Specifically, class 1 jobs (the shorter ones) have a linear holding cost, while class 2 jobs (the longer ones) have a quadratic holding cost. We set the arrival rates such that the loads from both classes are balanced. Figure~\ref{fig:polynomial-cost-plot} shows the corresponding results. We see that our policy significantly outperforms FCFS and Preemptive-Priority; however, our policy only negligibly beats generalized $c\mu$ and Aalto.


To understand why the three dominant policies (our policy, generalized $c\mu$ and Aalto) are similar, we note that, for the polynomial holding cost functions, these three policies have index functions which are polynomials with the same leading coefficient. Consequently, their behaviors are similar. We often find  that these three policies perform similarly and vastly improve upon the other policies.


\begin{figure}
    \centering
    \begin{subfigure}[b]{0.40\textwidth}
        \centering
        \includegraphics[width=\textwidth]{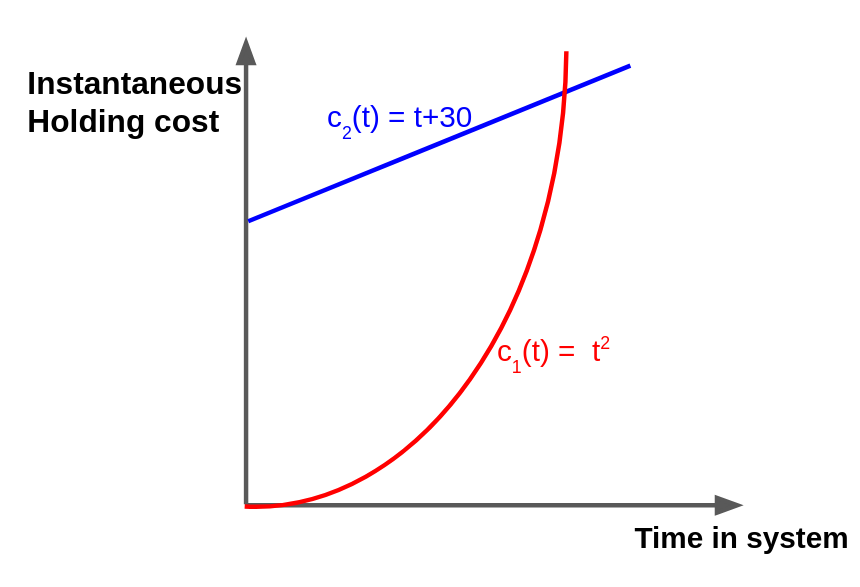}
        \caption{Holding cost functions.}
        \label{fig:polynomial-cost-fn}
    \end{subfigure}
    \hfill
    \begin{subfigure}[b]{0.40\textwidth}
        \centering
        \includegraphics[width=\textwidth]{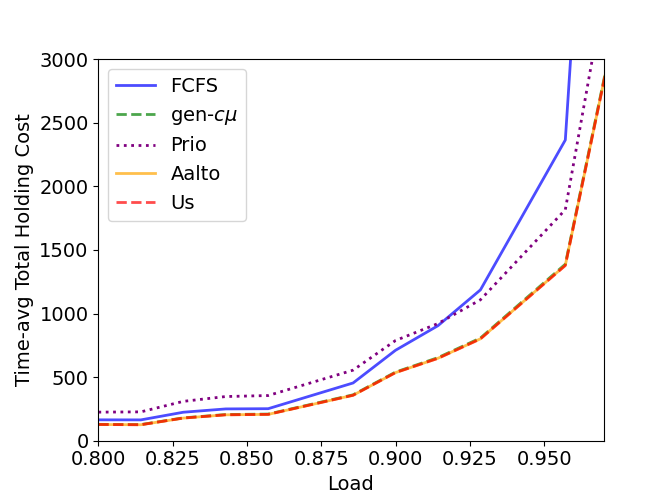}
        \caption{Performance of policies.}
        \label{fig:polynomial-cost-plot}
    \end{subfigure}    

    \caption{Comparison of policies on polynomial holding cost functions. We fix $\mu_1=3, \mu_2=1, \lambda_1 = 0.75\lambda$.}
    \label{fig:polynomial-cost}
\end{figure}
Lastly, \Cref{fig:multiclass-cost-fn} considers the case with three job classes. Specifically, class 1 and class 2 jobs (shorter jobs) have linear interleaving holding cost functions, while class 3 jobs (longer jobs) have explosive quadratic holding cost. Figure~\ref{fig:multiclass-cost-plot} shows the corresponding results. We see that our policy noticeably improves upon Aalto, which noticeably improves upon generalized $c \mu$, which noticeably improves upon Preemptive-Priority, which significantly improves upon FCFS. 

The strict ordering of policies, depicted in Figure~\ref{fig:multiclass-cost-plot} is typical of what we see in many experiments, regardless of the number of classes. This strict separation would also be more obvious in Figure~\ref{fig:polynomial-cost-fn} if we set both holding costs to be higher.  When the costs are higher, the small differences between the policies are more amplified.


Another policy which we experimented with is the Accumulated Priority policy from \cite{stanford2014waiting,fajardo2017waiting}.  However we found that this policy never improved upon our policy in our experiments and was often significantly worse than our policy, so we chose to omit it to keep the graphs cleaner.


\begin{figure}
    \centering
    \begin{subfigure}[b]{0.40\textwidth}
        \centering
        \includegraphics[width=\textwidth]{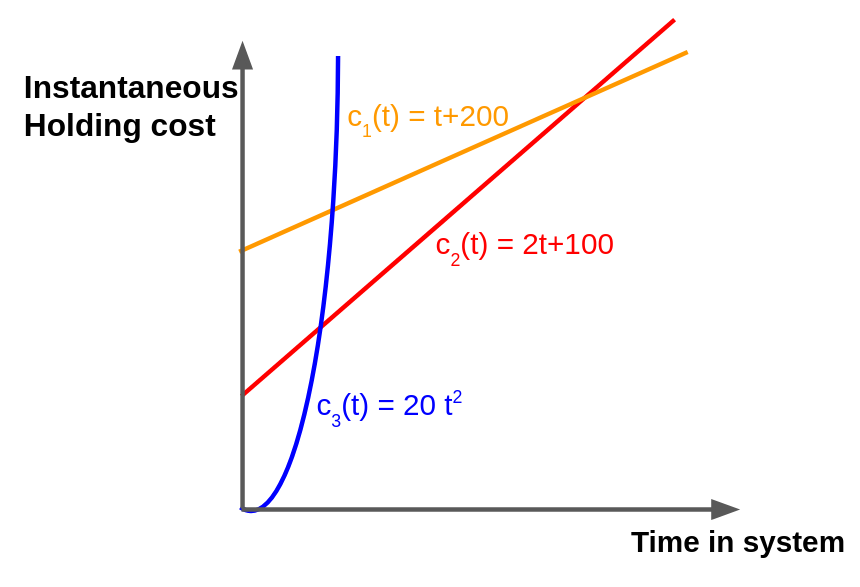}
        \caption{Holding cost functions. }
        \label{fig:multiclass-cost-fn}
    \end{subfigure}
    \hfill
    \begin{subfigure}[b]{0.40\textwidth}
        \centering
        \includegraphics[width=\textwidth]{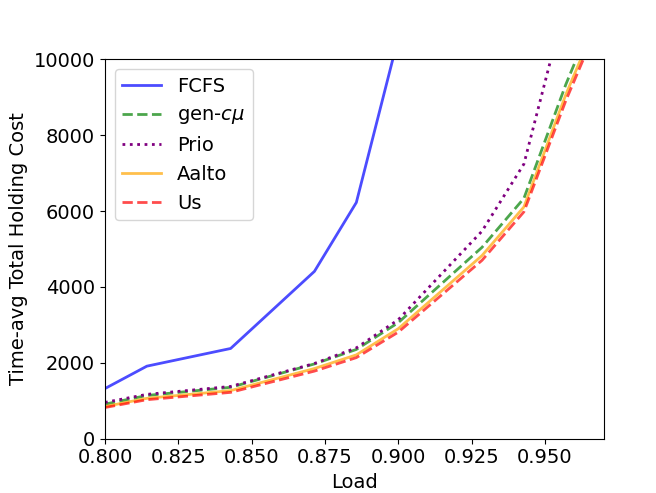}
        \caption{ Performance of policies.}
        \label{fig:multiclass-cost-plot}
    \end{subfigure}
    \caption{Comparison of policies on $3$ classes. We fix $\mu_1=\mu_2=3, \mu_3=1, \lambda_1 = \lambda_2=\lambda_3$.}
    \label{fig:multiclass-cost}
\end{figure}

\section{Discussion on Optimality}

Our policy replicates the diffusion limit optimality of the generalized $c\mu$ rule \cite{van1995dynamic}.  
The diffusion limit regime in \cite{van1995dynamic} is equivalent to $\lambda_i =\Theta(n),\mu_i=\Theta(n),\frac{\lambda_i}{\mu_i}=1-\Theta(\frac{1}{\sqrt{n}})$ where $n\to \infty$. Thus we have that 
\(\mu_i-\lambda_i\to \infty.\)
This indicates that our policy degenerates to the generalized $c\mu$ rule, and hence is optimal as well.

We have seen that our policy performs well in simulation. However, like all the other Whittle-based heuristics we've discussed, our policy is not always optimal. We now give a counter example.

Assume that there are two classes of jobs, both with {\em the same} instantaneous holding cost function $c(t)=t$. Both classes have the same completion rate, but the arrival rates are different. Since the holding cost functions and the completion rates are the same in both classes, the optimal policy is FCFS, which is clearly not our policy.  
Thus, while our Whittle Index policy performs really well compared with other existing policies, there is still a need for further work on optimality. 
\section{Conclusion}

This paper studies the classical TVHC problem: Jobs of different classes arrive over time, where each class of jobs is associated with a holding cost that increases with the job's age. The objective is to schedule the jobs so as to minimize the expected time-average total holding cost. 

Various papers have provided heuristics for the TVHC problem.
The generalized $c \mu$ rule \cite{van1995dynamic} provides a simple index policy which favors jobs with currently high holding cost and high failure rate; this policy is asymptotically optimal in the diffusion limit.  More recent works by Aalto \cite{aalto2024whittle}, consider the more tractable static setting (no arrivals).

Our work, while also heuristic, takes a principled approach to the TVHC problem with arrivals: (i) We derive the first representation of our problem as an R-MAB with a {\em finite} number of arms, and (ii) We derive a novel Whittle index policy for the resulting R-MAB.  While the analysis is involved, the resulting policy is extremely simple and elegant and incorporates class load.  In simulation, our policy improves upon all the other known heuristics.

This story is in no way finished.  First, our policy might  be generalizable beyond exponential job size distributions, to those with increasing hazard rate, using our existing R-MAB as a foundation (although this would require a more complex state space).  Second, there is much more work needed to find an {\em optimal} policy.

{\scriptsize

\bibliographystyle{abbrv}
\bibliography{bib}
}

\appendix

\section{Tutorial on The Whittle Index}
\label{sec:2.2}

The Whittle Index approach was introduced by Whittle in 1988 to develop a heuristic for R-MAB problems (\cite{whittle1988restless}), especially for discounted R-MAB problems. There are recent works proving asymptotic optimality of the Whittle Index (e.g., \cite{ayesta2021computation}); even when not optimal, the Whittle Index has long been shown to be a surprisingly good heuristic (see \cite{nino2023markovian} for a recent review).

The standard Whittle approach (for discounted R-MAB problems) consists of three steps:
\begin{enumerate}
    \item First, the hard constraint that only one arm can be pulled at a time is relaxed to a time-average constraint. Mathematically, suppose the action for arm $i$ at time step $t$ is $a_i(t)$, where $a_i(t)=1$ if the action is active and $a_i(t)=0$ otherwise. The hard constraint is \[\sum_i a_i(t)\leq 1,\ \forall t.\] Whittle relaxes the hard constraint to the following time-average constraint:
    \[\sum_{t=0}^\infty \alpha^t \sum_{i=1}^k a_i(t) \leq \frac{1}{1-\alpha},\]
    where $\alpha$ is the discount factor.
    Through this relaxation, the optimization problem becomes more tractable.  The relaxed R-MAB is studied to obtain a heuristic for the original R-MAB.

    \item Next, a Lagrange method is applied to the relaxed R-MAB. Crucially the Lagrange dual function can be decomposed into $k$ independent parts, and accordingly the relaxed R-MAB problem can be decomposed to $k$ independent {\em single-arm restless bandit problems}, where not pulling the arm earns a compensation of $\ell$: $\ell$ serves as the Lagrange multiplier. This step effectively puts a ``price'' on choosing to keep an arm active, yielding $k$ single-arm optimization problems, each parameterized by $\ell$.

    \item Then, one solves each single-arm problem to figure out the critical value of $\ell$ for each state of each problem, which is the exact point where one is indifferent between being active and passive. This critical value of $\ell$ measures the ``value'' of pulling this arm at each state, and serves as the Whittle Index. 

    \item Finally, the Whittle Index policy pulls the arm with the largest Whittle Index at every moment of time.
\end{enumerate}

In the literature, people usually skip the first two steps and directly analyze the single-arm bandit problem (e.g., \cite{anand2018whittle, aalto2024whittle}.) In this paper, we will likewise omit the first two steps and directly provide the single-arm bandit formulation in Section~\ref{sec:single}.

A property of the single-arm bandit problem called ``indexability'' needs to be established for the Whittle Index to be well-defined. Intuitively, indexability states that as the compensation $\ell$ grows larger, the optimal action at a certain state can only change from active to passive. This property ensures that the Whittle Index is unique and well-defined. Although indexability is intuitive, it must be proven~\cite{whittle1988restless}.

\section{Proof for Lemma~\ref{lemma:FCFS}}
\label{app:lemma3.1}
\begin{lemma}[FCFS within each type is optimal]
    The optimal policy must serve jobs within each type in FCFS order. 
\end{lemma}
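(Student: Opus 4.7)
The plan is to use a sample-path interchange argument that exploits two key facts: (i) class $i$ jobs are exchangeable under i.i.d.\ $\Exp(\mu_i)$ service, and (ii) the accumulated cost $F_i(a) := \int_0^a c_i(s)\,ds$ is convex, because $c_i$ is non-decreasing.

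First I would show that, given any policy $\pi$, one can construct a coupled policy $\pi'$ which makes the same class-level scheduling decisions as $\pi$ at every instant, but always assigns each class $i$ completion to the oldest class $i$ job currently in the system. Because class $i$ service times are i.i.d.\ exponential, they are exchangeable: relabeling which specific class $i$ job is nominally served during an interval of class $i$ processing does not change the sample path of class $i$ completion epochs. Hence $\pi$ and $\pi'$ can be coupled so that the class-level queue dynamics, and in particular the multiset of class $i$ completion times, coincide.

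Next I would show, sample-path by sample-path, that $\pi'$ has no larger cost than $\pi$. Fix a sample path, restrict attention to class $i$, let $t_1 < t_2 < \cdots$ be the class $i$ completion times, and let $\tau_1 < \tau_2 < \cdots$ be the arrival times of the class $i$ jobs that eventually complete. Under $\pi$ the arrivals are matched to completions by some feasible permutation $\sigma$ (with $\tau_{\sigma(j)} < t_j$), while $\pi'$ uses the identity permutation. The integrated class $i$ holding cost on this sample path is $\sum_j F_i(t_j - \tau_{\sigma(j)})$. Whenever $\sigma$ has an inversion $j < j'$ with $\sigma(j) > \sigma(j')$, swapping that pair replaces $(t_j-\tau_{\sigma(j)},\, t_{j'}-\tau_{\sigma(j')})$ by $(t_j-\tau_{\sigma(j')},\, t_{j'}-\tau_{\sigma(j)})$; the two pairs have the same sum, but the swapped pair is strictly less spread out (smaller range), so by convexity of $F_i$ the total cost weakly decreases. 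Iterating pairwise swaps reduces $\sigma$ to the identity without ever increasing the cost, and applying the argument to each class independently gives the lemma after taking expectations.

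The main obstacle will be making the coupling in the first step fully rigorous, because $\pi$ may condition its class-level choices on individual job ages, which look different under $\pi'$'s reassignment. I would handle this by defining $\pi'$ directly in terms of $\pi$'s class-level trajectory on a common probability space where the Poisson arrival clocks and the exponential service clocks are coupled: both policies see the same arrival stream and the same sequence of "class $i$ completion tokens" generated at rate $\mu_i$ whenever class $i$ is in service, and $\pi'$ simply routes each token to the oldest class $i$ job in the system. The swap argument then applies pathwise within each class, giving the conclusion.
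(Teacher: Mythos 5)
Your proposal is correct and matches the paper's own argument: the paper likewise couples the two systems through per-class exponential ``completion timers'' that run only while class $i$ is in service (exactly your completion tokens), has the FCFS-within-class policy mimic $\pi$'s class-level decisions on this common sample path, and then compares the matchings of arrivals to the common departure epochs using convexity of the accumulated cost $\int_0^x c_i(t)\,dt$. Your explicit pairwise-swap (inversion-removal) step is just a spelled-out version of the rearrangement inequality the paper invokes directly, so the two proofs are essentially identical.
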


\begin{proof}

For any policy $\pi$, we define a policy $\pi_{FCFS}$ where $\pi_{FCFS}$ serves a type $i$ job if and only if  $\pi$ serves a type $i$ job, but $\pi_{FCFS}$ always serves FCFS within each type.  We will show that the time-average total holding cost under $\pi_{FCFS}$ is no higher than that of $\pi$ on every sample path.

We now show how to define the sample path that allows us to compare between $\pi$ and $\pi_{FCFS}$.  It is important to observe that the result of this lemma does not hold for the obvious sample path, which is based on job arrival times and sizes (think about SRPT within each class, which can improve upon FCFS within each class).  Thus we need to define our sample path to leverage the fact that job sizes are memoryless.  

We define the sample path as $2$ sequences of exponential random variables for each of the $k$ classes.  The first sequence for class $i$ represents inter-arrival times of class $i$ jobs.  These are clearly distributed as $\Exp(\lambda_i)$.  The second sequence for class $i$ represents inter-departure times of class $i$ jobs, where we only look at the system when a class $i$ job is being served.   One can imagine that there is an $\Exp(\mu_i)$ timer for class $i$ which is {\em paused} when the system is working on a job of some type other than class $i$, and {\em resumes} when the system returns to working on a class $i$ job. Whenever the timer goes off, the class $i$ job currently being served completes. Note that there might be multiple type $i$ jobs running during the time before the timer goes off.




We now consider the behavior of policies $\pi$ and $\pi_{FCFS}$ under a fixed sample path that consists of the $2k$ exponential sequences. Note that $\pi_{FCFS}$ works on a type $i$ job if and only if $\pi$ works on a type $i$ job. Thus the exponential timers go off at the same time under both policies. Suppose the timer for class $i$ goes off at times $\{d_1
< d_2 <...\}$, at which times a type $i$ job completes and departs. Also let the $j^{th}$ class $i$ job's arrival time be denoted by $a_j$.

Given that $\pi_{FCFS}$ serves type $i$ jobs in FCFS order, we have that the $j^{th}$ type $i$ job leaves at time $d_j$. Define $p(j)$ to be a permutation such that $d_{p(j)}$ is the departure time of the $j^{th}$ type $i$ job under $\pi$. Now we look at the total holding cost incurred by policies $\pi$ and $\pi_{FCFS}$.


For any type $i$, the total holding cost of $\pi_{FCFS}$ incurred by the first $N$ type $i$ jobs is 
\begin{equation}
    \sum_{j=1}^N \int_{0}^{d_j-a_j} c_i(t) dt, 
    \label{eq:FCFS eq1}
\end{equation}

while that of $\pi$ is 
\begin{equation}
    \sum_{j=1}^N \int_{0}^{d_{p(j)}-a_j} c_i(t) dt.
    \label{eq:FCFS eq2}
\end{equation}

Observe that (\ref{eq:FCFS eq1}) is smaller than (\ref{eq:FCFS eq2}) by the monotonocity and convexity of the function $\int_0^x c_i(t) dt$ with respect to $x$.  This completes the proof.
\end{proof}

\section{Proof for Theorem~\ref{thm:bandit}}
\label{app:thm3.2}

\begin{theorem}[Equivalence of TVHC and R-MAB Problems]
For any set of non-decreasing index functions $\{V_i(\cdot)\}_{i=1}^k$, the corresponding index policies (breaking ties by FCFS) in the TVHC problem and the R-MAB problem incur the same cost.
\end{theorem}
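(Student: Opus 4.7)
The plan is to construct a pathwise coupling between sample paths of the TVHC system and the R-MAB system, under which the arm state $T_i(t)$ equals the age of the oldest class $i$ job throughout, and the action selected by the common index policy coincides in both problems. Once this coupling is in place, the cost equivalence reduces to verifying that the expected instantaneous holding cost of all class $i$ jobs in TVHC, conditional on the oldest age, equals $r_i(\cdot)$ as defined in~\eqref{eq:r}.

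I would build the coupling on a common probability space carrying, for each class $i$, a Poisson arrival process $A_i$ of rate $\lambda_i$ and an independent Poisson completion clock $N_{\mu_i}$ of rate $\mu_i$. In the TVHC system, $A_i$ drives arrivals, and by the memoryless property of $\Exp(\mu_i)$ service the next jump of $N_{\mu_i}$ after a class $i$ job begins or resumes service is its completion time. In the R-MAB, drops of arm $i$ occur at the same jumps of $N_{\mu_i}$ while arm $i$ is active, and I would identify each drop magnitude with the inter-arrival gap of $A_i$ between the departing oldest class $i$ job and the next class $i$ arrival, which is $\Exp(\lambda_i)$ by the Poisson property and independent of the prior history. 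I would then verify by induction on the merged event times that $T_i(t)$ equals the oldest class $i$ age at every $t$: between events both quantities grow at rate $1$; at a class $i$ completion the new oldest age drops by exactly the inter-arrival gap, matching the R-MAB drop; and the upward crossing of $0$ by $T_i$ corresponds to a class $i$ arrival into an empty class. The action match is automatic, since a non-decreasing index with FCFS tie-breaking serves the oldest class $i$ job within its class, whose age is precisely $T_i(t)$, so the index value fed into the policy is the same in both problems.

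To convert the pathwise state match into cost equivalence, I would use that FCFS within class $i$ guarantees no class $i$ job younger than the current oldest has ever departed. Hence, whenever $T_i(t) = s > 0$, the ages of the class $i$ jobs still in system are $s$ together with $s - Y_1, \ldots, s - Y_{N_i}$, where the $Y_j$ are the $A_i$ arrivals in the window of length $s$ preceding $t$. By the uniform order-statistic symmetry $\{s - Y_j\} \stackrel{d}{=} \{Y_j\}$ as multisets,
\begin{equation*}
\E{\sum_{j \in \text{class }i} c_i(\text{age}_j(t)) \,\middle|\, T_i(t) = s} = c_i(s) + \E{\sum_{j=1}^{N_i} c_i(Y_j)} = r_i(s).
\end{equation*}
Summing over $i$, applying the tower property and Fubini gives $\E{\int_0^x c_{total}(t)\,dt} = \E{\int_0^x \sum_i r_i(T_i(t))\,dt}$ for every $x$, and dividing by $x$ before letting $x \to \infty$ completes the proof.

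The main obstacle is rigorously justifying the passage to the long-run time-average: existence of the limit, interchange of limit and expectation, and measurability of the above conditional expectation along sample paths. I would handle existence via the problem's well-definedness assumption, the interchange via a dominated-convergence argument supported by the growth control on $c_i$ provided by Lemma~\ref{lemma:c converge}, and measurability by noting that the coupling makes all relevant random variables jointly measurable on the common probability space. Once these technical points are addressed, the coupling renders the cost identity transparent on every sample path in expectation.
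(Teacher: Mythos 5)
Your proposal is correct and follows essentially the same route as the paper's own proof: a sample-path coupling under which the arm state $T_i(t)$ coincides with the age of the oldest class~$i$ job (drops at completion epochs, drop magnitudes equal to inter-arrival gaps), followed by the FCFS-within-class observation that, conditional on the oldest age $s$, the younger class~$i$ jobs' ages are Poisson points in $(0,s)$, so their expected instantaneous cost is exactly $r_i(s)$. Your extra care with the limit interchange (Fubini, dominated convergence via Lemma~\ref{lemma:c converge}) only makes explicit what the paper treats informally via the stationary/random-time argument, so no substantive difference remains.
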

\begin{proof}
In this appendix, 
    we construct a coupling between the TVHC problem and the R-MAB problem on a certain sample path. 

    Define a sample path to be $2k$ sequences of exponential variables: $\{a_{i0},a_{i1},a_{i2},...\}$ where $a_{ij}\sim \Exp(\lambda_i)$ and $\{s_{i1},s_{i2},...\}$ where $s_{ij}\sim \Exp(\mu_i)$. Under this sample path, the index policies $\{V_i\}$ run as follows:
    \begin{enumerate}
        \item \textit{The TVHC Problem:} The first type $i$ job enters at time $a_{i0}$. The inter-arrival time of type $i$ jobs are $a_{ij}$. The  $j^{th}$ type $i$ job has size  $s_{ij}$.

        \item \textit{The R-MAB Problem:} The arm $i$ is initialized at state $-a_{i0}$. After the arm is active for $s_{ij}$ time, it triggers the $j^{th}$ state drop, and the dropping amount is $a_{ij}$.
    \end{enumerate}

    Under a fixed sample path, define $T_i(t)$ to be the state of the arm $i$ in the bandit problem, and define $A_i(t)$ to be the age of the oldest type $i$ job in the system in the scheduling problem (if there is no type $i$ job, $A_i(t)<0$ and $-A_i(t)$ is the time until the next arrival of a type $i$ job). We next prove that $T_i(t)=A_i(t)$ for all $t$.

    To prove this fact, we define $\xi_i(t)$ in the bandit problem to be the total time that arm $i$ is active since the last time the state of arm $i$ dropped. We also define a counter, $count_i^B(t)$, to be the number of state drops of arm $i$ by time $t$. We define $\tau_i(t)$ in the scheduling problem to be the total service time that the oldest type $i$ has received at time $t$. We also define a counter, $count_i^S(t)$, to be the number of type $i$ job completions by time $t$. 
    
    Then in the bandit problem, the triple $(T_i(t),\xi_i(t), count_i^B(t))$ is initialized to be $(-a_{i0},0,0)$ and has the following transition function at time $t$:
    \begin{align}
        &\text{If }  i=\arg \max_j \{V_j(T_j(t))\}\text{ and }T_i(t)\geq 0:\nonumber\\
        &\quad\text{If } \xi_i(t)<s_{i,count_i^B(t)}: \nonumber\\
        & \qquad T_i(t+\delta) = T_i(t) + \delta \nonumber\\
        &\qquad\xi_i(t+\delta) = \xi_i(t)+\delta \nonumber\\
        &\qquad count_i^B(t+\delta) = count_i^B(t)\nonumber\\
        &\quad \text{Else:} \nonumber\\
        &\qquad T_i(t+\delta) = T_i(t) + \delta - a_{i, count_i^B(t)} \nonumber\\
        &\qquad\xi_i(t+\delta) = 0 \nonumber\\
        &\qquad count_i^B(t+\delta) = count_i^B(t)+1\nonumber\\
        &\text{Else:} \nonumber\\
        &\quad T_i(t+\delta) = T_i(t) + \delta \nonumber\\
        &\quad\xi_i(t+\delta) = \xi_i(t) \nonumber\\
        &\quad count_i^B(t+\delta) = count_i^B(t)\nonumber\\
        \label{eq:trans of bandit}
    \end{align}

    While we're in the TVHC problem, the triple $(A_i(t),\tau_i(t),count_i^S(t))$ follows the same exact transition function. In this way we prove that $A_i(t)=T_i(t)$ under a fixed sample path.

    Now we prove that the expected cost under the two problems is the same. For the TVHC problem, the long-run average holding cost of type $i$ is the expected total type $i$ holding cost at a uniformly random time $t^*$, which is 
    \[\sum_{\text{$j$: type $i$ job in the system at $t^*$}}c_i(a(j)),\]
    where $a(j)$ is the age of the job $j$. Denote the oldest age among all type $i$ jobs in the system by $A_i(t^*)$. Since the policy runs FCFS within each type, the other type $i$ jobs' inter-arrival times are independent of $A_i(t^*)$, which means $\{a(j): j \text{ is a type $i$ job in the system}\}$ is distributed as $Pois_{\lambda_i}(A_i(t^*))$.
    Thus we have that 
    \[\E{\sum_{\text{$j$: type $i$ job in the system at $t^*$}}c_i(a(j))} = r_i(A_i(t^*)).\]
    Note that $A_i$ has the same distribution as $T_i$ due to the coupling sample path argument above. Therefore we have that the expected cost in the two problems are the same.
\end{proof}





\section{Deferred proofs in Section~\ref{sec:simpli}}
\label{app:defer proofs}

\begin{lemma}
\label{lemma: app for cost bar}
    When $t\to \infty$, $\lim_{t\to \infty}\costbar(t,\alpha)e^{-\frac{\alpha}{1-\gamma_1}t} = 0$ and $\lim_{t\to \infty}\E{r(t+X)}e^{-\frac{\alpha}{1-\gamma_1}t}  = 0$.
\end{lemma}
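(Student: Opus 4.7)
The plan is to prove both limits by extracting from the convergence assumption the integrability of $r(u) e^{-(\mu-\lambda)u}$ on $[0,\infty)$, and then using the fact that $\beta := \frac{\alpha}{1-\gamma_1} = \mu + \alpha - \lambda\gamma_1 > \mu - \lambda$ to control the two exponentially-weighted quantities.

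First I would sharpen Lemma~\ref{lemma:c converge} by noting that the expression \eqref{eq:c converge proof} inside its proof, after swapping the order of integration, equals $\int_0^\infty c(x) e^{-(\mu-\lambda)x}\,dx$. Hence the convergence assumption implies $\int_0^\infty c(x) e^{-(\mu-\lambda)x}\,dx < \infty$. Combining with $r(t) = c(t) + \lambda\int_0^t c(x)\,dx + C_1$ from Lemma~\ref{lemma: r'} and another Fubini step gives $\int_0^\infty r(u) e^{-(\mu-\lambda)u}\,du < \infty$, and since $\beta > \mu - \lambda$ the stronger $\int_0^\infty r(u) e^{-\beta u}\,du < \infty$ also holds.

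For the second limit, with $X \sim \Exp(\beta)$ a direct change of variables gives
\[
\E{r(t+X)}\,e^{-\beta t} \;=\; e^{-\beta t}\,\beta\!\int_0^\infty r(t+s)\,e^{-\beta s}\,ds \;=\; \beta\!\int_t^\infty r(u)\,e^{-\beta u}\,du,
\]
which is the tail of an absolutely convergent integral and therefore tends to $0$ as $t \to \infty$.

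For the first limit I would combine two ingredients. From the ODE established inside the proof of Lemma~\ref{lemma:eq1}, $\bigl(e^{-\beta t}\costbar(t,\alpha)\bigr)' = -\alpha r(t) e^{-\beta t} \le 0$, so $e^{-\beta t}\costbar(t,\alpha)$ is non-increasing and non-negative, hence converges to some $L \ge 0$. To force $L = 0$ I would bound $\costbar(t,\alpha) \le \alpha\int_0^\infty r(t+s)\,e^{-\alpha s}\,\P{T_1 > s}\,ds$ using $A(s) \le s$ and monotonicity of $r$, then insert an exponential busy-period tail $\P{T_1 > s} \le K e^{-\eta s}$, valid for any $\eta < (\sqrt{\mu}-\sqrt{\lambda})^2$. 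The hard part is that $\alpha + \eta$ can still fail to exceed $\mu - \lambda$ when $\alpha$ is very small, because $(\sqrt{\mu}-\sqrt{\lambda})^2 < \mu - \lambda$. I plan to patch this by combining the monotonicity of $e^{-\beta t}\costbar(t,\alpha)$ with the $o(e^{(\mu-\lambda)u})$ estimate on $r$: since for every $\varepsilon > 0$ there exists $U_\varepsilon$ with $r(u) \le \varepsilon\, e^{(\mu-\lambda)u}$ for $u \ge U_\varepsilon$, the a priori bound on $e^{-\beta t}\costbar(t,\alpha)$ can be made arbitrarily small for sufficiently large $t$, sandwiching the monotone limit $L$ down to $0$.
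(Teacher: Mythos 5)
Your handling of the second limit is correct and in fact cleaner than the paper's write-up: extracting $\int_0^\infty c(x)e^{-(\mu-\lambda)x}dx<\infty$ from the convergence assumption, upgrading it to $\int_0^\infty r(u)e^{-(\mu-\lambda)u}du<\infty$ via Lemma~\ref{lemma: r'}, and writing $\E{r(t+X)}e^{-\frac{\alpha}{1-\gamma_1}t}=\frac{\alpha}{1-\gamma_1}\int_t^\infty r(u)e^{-\frac{\alpha}{1-\gamma_1}u}du$ as the tail of a convergent integral is exactly the right mechanism (the paper reaches the same conclusion using $\frac{\alpha}{1-\gamma_1}\ge\mu-\lambda$ and the $o(e^{(\mu-\lambda)u})$ bound on $r$).

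The gap is in the first limit, precisely where you flagged it, and the proposed patch does not close it. Writing $\beta:=\frac{\alpha}{1-\gamma_1}$, your a priori bound is $\costbar(t,\alpha)\le \alpha K\int_0^\infty r(t+s)e^{-(\alpha+\eta)s}ds$ with $\eta<(\sqrt{\mu}-\sqrt{\lambda})^2$. In the regime $\alpha+\eta<\mu-\lambda$, substituting $r(u)\le\varepsilon e^{(\mu-\lambda)u}$ yields $\alpha K\varepsilon\, e^{(\mu-\lambda)t}\int_0^\infty e^{(\mu-\lambda-\alpha-\eta)s}ds$, and the $s$-integral diverges regardless of how small $\varepsilon$ is; the $\varepsilon$ only rescales an infinite quantity, and monotonicity of $e^{-\beta t}\costbar(t,\alpha)$ cannot sandwich the limit $L$ against a bound that is $+\infty$ at every large $t$. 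The failure is structural, not cosmetic: the crude bound $A(s)\le s$ throws away the fact that, conditional on the busy period lasting beyond $s$, the oldest-job age is typically far smaller than $s$, and since the busy-period decay rate $(\sqrt{\mu}-\sqrt{\lambda})^2$ is strictly below $\mu-\lambda$ while $r$ may grow at any rate up to $e^{(\mu-\lambda)u}$, no argument using only $A(s)\le s$ and the tail of $T_1$ can handle small $\alpha$. The missing idea (and the paper's route) is distributional: drop the discount ($e^{-\alpha s}\le 1$) and use the renewal-reward/stationarity identity for the oldest-age process over a busy period, $\E{\int_0^{T_1} r(t+A(s))\,ds}=\frac{1}{\mu-\lambda}\E{r(t+A)}$ with $A\sim\Exp(\mu-\lambda)$ (the stationary oldest age of a busy M/M/1, obtainable by a level-crossing computation), after which $e^{-\beta t}\le e^{-(\mu-\lambda)t}$ and your own tail-integral argument finishes. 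With that substitution your write-up goes through; as stated, the first limit is not proved.
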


\begin{proof}
    \begin{eqnarray*}
        \lim_{t\to \infty}\costbar(t,\alpha)e^{-\frac{\alpha}{1-\gamma_1}t}
        &= &\lim_{t\to \infty} \E{\int_0^{T_1}\alpha e^{-\alpha s} r(t+A(s))ds} e^{-\frac{\alpha}{1-\gamma_1}t}\\
        &\leq & \lim_{t\to \infty} \E{\int_0^{T_1}\alpha  r(t+A(s))ds} e^{-\frac{\alpha}{1-\gamma_1}t}\\
        &= &\lim_{t\to \infty} \alpha \frac{1}{\mu-\lambda} \E{r(t+A)}e^{-\frac{\alpha}{1-\gamma_1}t} \qquad  \text{where } A\sim \Exp(\mu-\lambda)\footnotemark\\
        &\stackrel{(*)}{\leq} &  \lim_{t\to \infty} \frac{\alpha }{\mu-\lambda} \E{r(t+A)}e^{-(\mu-\lambda)t}\\
        & = & 0.
    \end{eqnarray*}
\footnotetext{$A$ denotes the age of the oldest job in an M/M/1 queue, and $A\sim \Exp(\mu-\lambda)$ given $A>0$. }
    \begin{eqnarray*}
        \lim_{t\to \infty}\E{r(t+X)}e^{-\frac{\alpha}{1-\gamma_1}t} 
        &= &\lim_{t\to \infty}\int_{0}^\infty r(t+x) e^{-\frac{\alpha}{1-\gamma_1}t} \frac{\alpha}{1-\gamma_1}e^{\frac{\alpha}{1-\gamma_1} x}dx \\
        &= &\lim_{t\to \infty}\int_{0}^\infty r(t+x) e^{-\frac{\alpha}{1-\gamma_1}(t+x)} \frac{\alpha}{1-\gamma_1}e^{\frac{\alpha}{1-\gamma_1} 2x}dx \\
        &\stackrel{(*)}{=} &\lim_{t\to \infty} \frac{1}{2}\int_{0}^\infty r(t+x) e^{-(\mu-\lambda)(t+x)} \frac{2\alpha}{1-\gamma_1}e^{\frac{\alpha}{1-\gamma_1} 2x}dx\\
        &= &\lim_{t\to \infty} e^{-(\mu-\lambda)t}\frac{1}{2}\int_{0}^\infty r(t+x) e^{-(\mu-\lambda)x} \frac{2\alpha}{1-\gamma_1}e^{\frac{\alpha}{1-\gamma_1} 2x}dx  \\
        &= &0. 
    \end{eqnarray*}
    where the steps $(*)$ are obtained by using 
    \[\frac{\alpha}{1-\gamma_1} = \frac{\alpha}{1-\E{e^{-\alpha T_1}}}\geq \frac{1}{\E{T_1}}=\mu-\lambda,\]
    and the inequality is obtained by $1-e^{-\alpha x}\leq \alpha x$.
\end{proof}

\begin{lemma}
\label{lemma: four eq app}
Let $T_2\sim \Exp(\lambda)$, $X\sim \Exp(\frac{\alpha}{1-\gamma_1})$. Then we have the following equations:
\begin{align}
    &\E{r'(t-T_2)}=\lambda c(t).\\
    &\E{r(t-T_2)}= r(t)-c(t).\\
    &\E{r'(t+X)}= \frac{\mu}{\gamma_1}\E{c(t+X)} - \frac{\alpha}{1-\gamma_1}c(t).\\
    &\E{r(t+X)}=r(t)-c(t)+\frac{\mu(1-\gamma_1)}{\gamma_1\alpha}\E{c(t+X)}.
\end{align}

\end{lemma}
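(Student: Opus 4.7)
The plan is to prove each of the four equations by systematically combining the derivative identity from Lemma~\ref{lemma: r'}, namely $r'(t) = c'(t) + \lambda c(t)$, with the exponential-shift formulas from Lemma~\ref{lemma:exp formula}. I would prove them in the order \eqref{eq: equation3}, \eqref{eq: equation4}, \eqref{eq: equation5}, \eqref{eq: equation6}, since \eqref{eq: equation4} follows quickly from \eqref{eq: equation3}, and \eqref{eq: equation6} follows from \eqref{eq: equation5}.

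For \eqref{eq: equation3}, I would first apply Lemma~\ref{lemma: r'} inside the expectation to get $\E{r'(t-T_2)} = \E{c'(t-T_2)} + \lambda \E{c(t-T_2)}$. Then, using the second identity of Lemma~\ref{lemma:exp formula} with $T_2 \sim \Exp(\lambda)$ (so $\E{T_2} = 1/\lambda$) applied to the function $c$, I can rewrite $\E{c'(t-T_2)} = \lambda\bigl(c(t) - \E{c(t-T_2)}\bigr)$. The two copies of $\E{c(t-T_2)}$ cancel, leaving exactly $\lambda c(t)$. Equation \eqref{eq: equation4} then follows immediately: applying Lemma~\ref{lemma:exp formula} directly to $r$ yields $r(t) - \E{r(t-T_2)} = \frac{1}{\lambda}\E{r'(t-T_2)}$, and substituting \eqref{eq: equation3} makes the right-hand side exactly $c(t)$.

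For \eqref{eq: equation5}, I would proceed analogously: Lemma~\ref{lemma: r'} gives $\E{r'(t+X)} = \E{c'(t+X)} + \lambda \E{c(t+X)}$, and the first identity of Lemma~\ref{lemma:exp formula} applied to $c$ with $X \sim \Exp(\alpha/(1-\gamma_1))$ (so $\E{X} = (1-\gamma_1)/\alpha$) yields $\E{c'(t+X)} = \frac{\alpha}{1-\gamma_1}\bigl(\E{c(t+X)} - c(t)\bigr)$. Collecting the coefficients in front of $\E{c(t+X)}$ produces $\frac{\alpha + \lambda - \lambda\gamma_1}{1-\gamma_1}$. The one non-mechanical step, and the main obstacle, is to recognize that this coefficient equals $\mu/\gamma_1$ via the busy-period identity $\mu = (\alpha+\lambda-\lambda\gamma_1)\gamma_1/(1-\gamma_1)$ from \eqref{eq:gamma2}. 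This is in fact the only point where the specific rate $\alpha/(1-\gamma_1)$ chosen for $X$ pays off, and it is precisely what makes the final Whittle Index formula collapse to the clean form in Lemma~\ref{thm:main}.

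Finally, \eqref{eq: equation6} follows by applying Lemma~\ref{lemma:exp formula} to $r$ once more: $\E{r(t+X)} - r(t) = \frac{1-\gamma_1}{\alpha}\E{r'(t+X)}$. Plugging in \eqref{eq: equation5} and simplifying the constant term via $\frac{1-\gamma_1}{\alpha}\cdot\frac{\alpha}{1-\gamma_1}c(t) = c(t)$ delivers the stated expression. So apart from the one bookkeeping step that ties the coefficient appearing in \eqref{eq: equation5} to $\mu/\gamma_1$, the entire lemma is a direct consequence of the two cited lemmas together with \eqref{eq:gamma2}; I expect no serious obstacle beyond clean arithmetic bookkeeping.
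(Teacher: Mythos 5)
Your proposal is correct and follows essentially the same route as the paper's proof: each identity is obtained by combining Lemma~\ref{lemma: r'} with the appropriate shift formula from Lemma~\ref{lemma:exp formula}, using $\E{T_2}=1/\lambda$ and $\E{X}=(1-\gamma_1)/\alpha$, and invoking \eqref{eq:gamma2} to rewrite $\frac{\alpha+\lambda-\lambda\gamma_1}{1-\gamma_1}$ as $\mu/\gamma_1$. No gaps; the ordering and the single non-mechanical simplification you highlight match the paper exactly.
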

\begin{proof}
    We prove the four equations by using Lemma~\ref{lemma:exp formula} and Lemma~\ref{lemma: r'}:
\begin{align}
    \E{r'(t-T_2)} &= \E{c'(t-T_2)} + \lambda\E{c(t-T_2)}  & Lemma~\ref{lemma: r'} \nonumber\\
    &= \lambda (c(t)-\E{c(t-T_2)})+ \lambda\E{c(t-T_2)}  & Lemma~\ref{lemma:exp formula} \nonumber\\
    &= \lambda c(t).\nonumber
\end{align}

\begin{equation*}
    \E{r(t-T_2)} = r(t)-\frac{1}{\lambda}\E{r'(t-T_2)} = r(t)-c(t). \qquad \qquad \qquad \qquad \qquad \qquad
\end{equation*}

\begin{align}
    \E{r'(t+X)}&= \E{c'(t+X)} + \lambda\E{c(t+X)}  & Lemma~\ref{lemma: r'} \nonumber\\
    &= \frac{\alpha}{1-\gamma_1}\left(\E{c(t+X)} -c(t)\right) + \lambda\E{c(t+X)}& Lemma~\ref{lemma:exp formula} \nonumber\\
    &= \frac{\mu}{\gamma_1}\E{c(t+X)} - \frac{\alpha}{1-\gamma_1}c(t) & Equation~\eqref{eq:gamma2}.\nonumber
\end{align}

\begin{align}
    \E{r(t+X)} &= r(t)+\frac{1-\gamma_1}{\alpha}\E{r'(t+X)} = r(t)-c(t)+\frac{\mu(1-\gamma_1)}{\gamma_1\alpha}\E{c(t+X)}.\nonumber
\end{align}
\end{proof}

\section{Proof for indexability}
\label{app:index}
The goal of this appendix section is to prove Theorem~\ref{thm:index app}. We need several lemmas for the proof.

We first characterize the optimal policy. Specifically, we show that the policy $Threshold(t_0)$ is optimal when the compensation $\ell=\guess(t_0,\alpha)$. This is proved by showing the Hamilton–Jacobi–Bellman equation is satisfied.

\begin{lemma}
\label{lemma: main in index}
    The optimal policy under compensation $\guess(t_0,\alpha)$ is the policy $Threshold(t_0)$.
\end{lemma}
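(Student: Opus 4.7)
The plan is to verify the Hamilton--Jacobi--Bellman (HJB) equation for the discounted single-arm bandit evaluated at the value function $V$ of the policy $Threshold(t_0)$ with compensation $\ell=\guess(t_0,\alpha)$; optimality of $Threshold(t_0)$ then follows from a standard verification argument. Under the dynamics in~\eqref{eq:active trans}, the infinitesimal generators are $\mathcal{L}^{\mathrm{p}}V(s)=V'(s)$ and $\mathcal{L}^{\mathrm{a}}V(s)=V'(s)+\mu(\E{V(s-X)}-V(s))$ with $X\sim\Exp(\lambda)$, so the HJB equation becomes
\[
\alpha V(s) \;=\; \alpha r(s) + V'(s) + \min\bigl\{\mu(\E{V(s-X)}-V(s)),\,-\ell\bigr\}.
\]
Writing $\Phi(s):=\mu\bigl(V(s)-\E{V(s-X)}\bigr)$, the active action is preferred iff $\Phi(s)\geq \ell$. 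Hence it suffices to show (i) $V$ satisfies the ODE corresponding to the action selected by $Threshold(t_0)$ on each side of $t_0$, and (ii) $\Phi(s)\geq \guess(t_0,\alpha)$ for $s\geq t_0$, while $\Phi(s)\leq \guess(t_0,\alpha)$ for $s<t_0$.

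\textbf{Constructing $V$ and matching at the seam.} I would compute $V(s)$ explicitly using the phase decomposition already developed in the proof of Lemma~\ref{lemma:costs}: for $s\geq t_0$ the trajectory under $Threshold(t_0)$ first executes an M/M/1-style busy-period excursion above $t_0$ (contributing $\costbar$-type terms), then re-enters the passive region via an exponential overshoot (contributing $\Gamma$-type terms), and finally grows deterministically back to $t_0$; for $s<t_0$ the state first grows to $t_0$ and then proceeds as above. Gluing these pieces yields a closed form for $V$ that by construction satisfies the passive ODE below $t_0$ and the active ODE above $t_0$, giving~(i). At the seam $s=t_0$, continuity of $V$ holds, and the first-order condition~\eqref{eq:whittle} defining $\guess(t_0,\alpha)$ is precisely the statement that the left- and right-derivatives of $V$ at $t_0$ agree; equivalently, $\Phi(t_0)=\guess(t_0,\alpha)$.

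\textbf{Global sign control and main obstacle.} The remaining task is to propagate the equality at $s=t_0$ to the required inequalities for all $s$. I would show that $\Phi$ is non-decreasing in $s$: intuitively, as the state grows the marginal benefit of an active pull increases because the instantaneous holding cost of the oldest job and its younger cohort grow; rigorously, one differentiates the closed form for $V$, invokes Lemma~\ref{lemma: r'} to rewrite $r'(s)=c'(s)+\lambda c(s)$, and compares against the simple form $\guess(s,\alpha)=\mu\E{c(s+X)}$ with $X\sim\Exp(\alpha/(1-\gamma_1))$ from Lemma~\ref{thm:main}. Since $c$ is non-decreasing, $\guess(\cdot,\alpha)$ is non-decreasing, and a natural intermediate target is the identity $\Phi(s)=\guess(s,\alpha)$ on the active side (with a symmetric relation on the passive side). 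Establishing this identification globally, rather than merely locally at $s=t_0$, and carefully tracking the passive--active interface so that $\Phi(s)-\guess(t_0,\alpha)$ carries the correct sign on each side of $t_0$, is the principal technical obstacle. Once it is in hand, the lemma follows from the HJB verification, and indexability of Theorem~\ref{thm:index} drops out, because the passive set $\{s:\Phi(s)\leq\ell\}$ is then monotonically nested in $\ell$.
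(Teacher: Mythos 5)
Your setup mirrors the paper's: both verify the Hamilton--Jacobi--Bellman equation for the value function $V$ of $Threshold(t_0)$ under compensation $\ell=\guess(t_0,\alpha)$, and both reduce the HJB to the two inequalities $\mu\left(V(s)-\E{V(s-T_2)}\right)\leq \ell$ for $s\leq t_0$ and $\geq \ell$ for $s\geq t_0$. The problem is that your proposal stops exactly where the real work begins. The equality at the seam, $\Phi(t_0)=\guess(t_0,\alpha)$, follows essentially from the definition of the guessed index via \eqref{eq:whittle}; but propagating it to the required one-sided inequalities for all $s$ is the entire content of the lemma, and you explicitly defer it as ``the principal technical obstacle'' without resolving it. In the paper this step occupies the bulk of Appendix~\ref{app:index}: for $s<t_0$ one must compute $V(t_0)$ in closed form (via Lemma~\ref{lemma:costs} together with \eqref{eq: equation1}--\eqref{eq: equation6}) and reduce the inequality, after integration by parts on $\int \alpha r(t+s)e^{-\alpha s}ds$, to $\E{c(t_0+X)}\geq c(t_0)$; for $s>t_0$ one must decompose $V(s)$ over the Poisson number of busy-period excursions before the state falls below $t_0$, derive and solve a first-order ODE for the weighted excursion cost $\PoisCost(t,t_0)$, and only after substantial algebra reduce the claim to an integral inequality that again rests on monotonicity of $c$. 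None of this is sketched in your proposal, and it does not follow from generic verification-theorem boilerplate.

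Moreover, the two routes you suggest for closing the gap are shaky. The ``natural intermediate target'' $\Phi(s)=\guess(s,\alpha)$ on the active side is not correct in general: for $s>t_0$ the value function is that of $Threshold(t_0)$, whose continuation eventually idles at the lower threshold $t_0$ rather than at $s$, so $\Phi(s)$ need not coincide with $\mu\E{c(s+X)}$; what the paper actually proves is only the inequality $\Phi(s)\geq \mu\E{c(t_0+X)}$, obtained through the excursion/ODE computation and the bound $\lambda c(u)+\frac{\lambda+\alpha-\lambda\gamma_1}{\alpha}\E{c'(u+X)}\geq \lambda\E{c(t_0+X)}$ for $u>t_0$. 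Likewise, monotonicity of $\Phi$ is plausible but nowhere established, and the paper does not need or prove it. As written, your argument is a correct skeleton of the paper's approach with the decisive quantitative steps missing, so it does not constitute a proof of the lemma.
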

\begin{proof}
    It suffices to prove that the policy $Threshold(t_0)$ satisfies the Hamilton–Jacobi–Bellman equation. Let $V(t)$ denote the value function of the policy at state $t$. 

    At state $t$, if during the next $\delta$ time the action is passive, the total cost is 
    \begin{equation}
    \label{eq:passive}
        passive(t):=(\alpha r(t)-\ell)\delta + e^{-\alpha \delta}V(t+\delta).
    \end{equation}
    Moreover, if the action is active, the total cost is
    \begin{equation}
    \label{eq:active}
        active(t):=(\alpha r(t))\delta + e^{-\alpha\delta}\left( (1-\mu\delta) V(t+\delta)+\mu\delta \E{V(t+\delta-T_2)}\right),
    \end{equation}
    where $T_2\sim \Exp(\lambda)$. 
    Thus the Hamilton–Jacobi–Bellman equation is equivalent to 
    \begin{align}
    t\leq t_0 \Rightarrow \lim_{\delta\to 0}\frac{passive(t)-active(t)}{\delta} \leq 0.
        \label{eq:Bellman orig}\\
    t\geq t_0 \Rightarrow \lim_{\delta\to 0}\frac{passive(t)-active(t)}{\delta} \geq 0.
        \label{eq:Bellman orig2}
    \end{align}

    By \eqref{eq:passive} and \eqref{eq:active}, it is further equivalent to 
    \begin{align}
        t\leq t_0 \Rightarrow \ell\geq \mu\left(V(t)-\E{V(t-T_2)} \right).
        \label{eq:bellman}\\
        t\geq t_0 \Rightarrow \ell\leq \mu\left(V(t)-\E{V(t-T_2)} \right).
        \label{eq:bellman2}
    \end{align}

    \textbf{Case 1: $t\leq t_0$.} In this case, we want to prove \eqref{eq:bellman}.

    Note that the policy stays passive when the state is smaller then $t_0$.  Thus we have that 

    \begin{align}
        V(t)&= \int_0^{t_0-t} (\alpha r(t+s) - \ell) e^{-\alpha s} ds + e^{-\alpha (t_0-t)} V(t_0)\nonumber\\
        &= \int_0^{t_0-t} \alpha r(t+s) e^{-\alpha s} ds + e^{-\alpha (t_0-t)} V(t_0) - (1-e^{-\alpha(t_0-t)})\frac{\ell}{\alpha}.
        \label{eq:Case1 Vt}
    \end{align}

    \begin{align}
        \E{V(t-T_2)} &= \E{\int_{0}^{T_2} (\alpha r(t-T_2+ s)-\ell) e^{-\alpha s}ds} + \E{e^{-\alpha T_2}} V(t) \nonumber\\
        &= \Gamma(t,\alpha) - \frac{\ell}{\alpha}(1-\gamma_2) + \gamma_2 V(t).
        \label{eq:Evt-T2}
    \end{align}

    Thus, using \eqref{eq:guess W}, \eqref{eq:Evt-T2},\eqref{eq:Case1 Vt}, we have that \eqref{eq:bellman} is equivalent to 
    \begin{equation}
        \E{c(t_0+X)}\geq (1-\gamma_2)\left(\int_0^{t_0-t} \alpha r(t+s) e^{-\alpha s} ds + e^{-\alpha(t_0-t)}(V(t_0)+\frac{\ell}{\alpha})  \right) - \Gamma(t,\alpha).
        \label{eq:bellman4}
    \end{equation}

    Using \eqref{eq:cost expr}, \eqref{eq: equation1}, \eqref{eq: equation2}, \eqref{eq: equation4}, \eqref{eq: equation5} and \eqref{eq:guess W}, we have that
    \begin{align}
        V(t_0) =& \frac{1}{1-\gamma_1\gamma_2}\left(\costbar(t_0,\alpha) + \gamma_1\Gamma(t_0,\alpha) - \frac{\ell}{\alpha}\gamma_1(1-\gamma_2)\right)  & \eqref{eq:cost expr}\nonumber\\
        =&\frac{1}{1-\gamma_1\gamma_2}\bigg((1-\gamma_1)\E{r(t_0+X)} + \gamma_1(1-\gamma_2)\E{r(t_0-T_2)} - \frac{\ell}{\alpha}\gamma_1(1-\gamma_2)\bigg) & \text{by }\eqref{eq: equation1}, \eqref{eq: equation2} \nonumber\\
        =&\frac{1}{1-\gamma_1\gamma_2}\bigg((1-\gamma_1)\left(r(t_0)-c(t_0)+\frac{\mu(1-\gamma_1)}{\gamma_1\alpha}\E{c(t_0+X)}\right) &\text{by }\eqref{eq: equation6} \nonumber\\
        &+ \gamma_1(1-\gamma_2)(r(t_0)-c(t_0)) - \frac{\ell}{\alpha}\gamma_1(1-\gamma_2)\bigg) &\text{by }\eqref{eq: equation4}\nonumber\\
        =&\frac{1}{1-\gamma_1\gamma_2}\bigg((1-\gamma_1)\left(r(t_0)-c(t_0)+\frac{1-\gamma_1}{\gamma_1\alpha}\ell\right) &\text{by }\eqref{eq:guess W} \nonumber\\
        &+ \gamma_1(1-\gamma_2)(r(t_0)-c(t_0)) - \frac{\ell}{\alpha}\gamma_1(1-\gamma_2)\bigg) \nonumber\\ 
        =& r(t_0)-c(t_0) + \frac{1}{1-\gamma_1\gamma_2}\left(\frac{(1-\gamma_1)^2}{\gamma_1} - \gamma_1(1-\gamma_2)\right)\frac{\ell}{\alpha}\nonumber\\
        =& r(t_0)-c(t_0) -\frac{\ell}{\alpha}+\frac{1-\gamma_1}{\gamma_1(1-\gamma_1\gamma_2)}\frac{\ell}{\alpha} \nonumber\\
        =& r(t_0)-c(t_0) -\frac{\ell}{\alpha} +\frac{\lambda+\alpha}{\mu}\frac{\ell}{\alpha}. &\text{by }\eqref{eq:gamma2}
        \label{eq:V(t0)}
    \end{align}

    Thus again using \eqref{eq:gamma2}, we have that \eqref{eq:bellman4} is equivalent to
    \[\E{c(t_0+X)}\geq (1-\gamma_2)\left(\int_0^{t_0-t} \alpha r(t+s)  e^{-\alpha s} ds + e^{-\alpha(t_0-t)}\left(r(t_0)-c(t_0) + \frac{1}{1-\gamma_2}\frac{\ell}{\mu}\right)\right) - \Gamma(t,\alpha).\]

    Using \eqref{eq: equation2}, \eqref{eq: equation4}, \eqref{eq:gamma2} and reordering the terms, the inequality is further equivalent to 
    \begin{align}
        &\left(1-e^{-\alpha(t_0-t)}\right)\E{c(t_0+X)} + (1-\gamma_2)\left(-e^{-\alpha(t_0-t)}(r(t_0)-c(t_0)) + r(t)-c(t)\right)\nonumber\\&\geq (1-\gamma_2)\int_0^{t_0-t} \alpha r(t+s) e^{-\alpha s}ds .
        \label{eq:bellman 5}
    \end{align}

    Note that the right hand side of \eqref{eq:bellman 5} can be simplified to:
    \begin{align*}
        &(1-\gamma_2)\int_0^{t_0-t} \alpha r(t+s) e^{-\alpha s}ds \\
        &= (1-\gamma_2) \left( -r(t+s)e^{-\alpha s}\bigg|_0^{t_0-t} + \int_0^{t_0-t} r'(t+s)e^{-\alpha s} ds\right) \\
        &= (1-\gamma_2) \left( r(t) - e^{-\alpha (t_0-t)} r(t_0) + \int_0^{t_0-t} (c'(t+s)+\lambda c(t+s))e^{-\alpha s} ds\right) & Lemma~\ref{lemma: r'}\\
        &\leq (1-\gamma_2) \left( r(t) - e^{-\alpha (t_0-t)} r(t_0) + \int_0^{t_0-t} c'(t+s) ds + \int_0^{t_0-t} \lambda c(t_0)e^{-\alpha s} ds\right)\\
        &= (1-\gamma_2) \left( r(t) - e^{-\alpha (t_0-t)} r(t_0) +c(t_0)-c(t) + \frac{\lambda}{\alpha} (1-e^{-\alpha (t_0-t)}) c(t_0)\right).
    \end{align*}

    Thus to prove \eqref{eq:bellman 5}, we only need to prove that 
    \begin{align}
        &\frac{1}{1-\gamma_2}\left(1-e^{-\alpha(t_0-t)}\right)\E{c(t_0+X)} + \left(-e^{-\alpha(t_0-t)}(r(t_0)-c(t_0)) + r(t)-c(t)\right)\nonumber\\&\geq
        r(t) - e^{-\alpha (t_0-t)} r(t_0) +c(t_0)-c(t) + \frac{\lambda}{\alpha} (1-e^{-\alpha (t_0-t)}) c(t_0),\nonumber
    \end{align}

    which is equivalent to 
    \[\E{c(t_0+X)}\geq c(t_0).\]

    This holds by monotonocity of $c$.
    
    \textbf{Case 2: $t>t_0$.} In this case, we want to prove
    \begin{equation}
        \ell\leq \mu\left(V(t)-\E{V(t-T_2)} \right).
        \label{eq:obj in case2}
    \end{equation}
    Note that starting from state $t$, the policy stays active until the state drops back to $t$, during which time a $\costbar(t,\alpha)$ cost is incurred. After the state drops below $t$, since each time the state drops by an exponential amount, the state is $t-T_2$. Thus we have that 
    \begin{equation}
        \label{eq:Vt and Vt-T2}
        V(t) = \costbar(t,\alpha) + \gamma_1 \E{V(t-T_2)}.
    \end{equation}

    Substituting \eqref{eq:Vt and Vt-T2} into \eqref{eq:obj in case2} and using \eqref{eq:guess W}, we only need to prove that 
    \begin{equation}
        \E{c(t_0+X)}\leq  -\frac{1-\gamma_1}{\gamma_1}V(t)+\frac{1}{\gamma_1}\costbar(t,\alpha).
        \label{eq:obj simp}
    \end{equation}

    Now we characterize $V(t)$. Starting from state $t$, the policy stays active until the state drops below $t$. During this time ($T_1$, which follows the Busy period in M/M/1 distribution), a cost of $\costbar(t,\alpha)$ is incurred. 
    After the state drops below $t$, the amount it is below $t$ follows on exponential distribution with rate $\lambda$. Suppose it is $y_1\sim t-\Exp(\lambda)$. If $y_1$ is still larger than $t_0$, the policy stays active until the state drops below $y_1$, incurring another $\costbar(y_1,\alpha)$ cost. This process continues until the state drops below $t_0$. Note that each time a $\costbar(y_i,\alpha)$ is incurred, the state drops by an exponential amount, thus the total number of iterations follow a Poisson distribution $Pois(\lambda(t-t_0))$. After the final iteration, the state is $\Exp(\lambda)$ below $t_0$, and the policy stays passive until the state grows back to $t_0$.
    
    Thus we have that 
    \begin{equation}
        V(t) = \costbar(t,\alpha) + \gamma_1 \E{\sum_{i=1}^M \gamma_1^{i-1}\costbar(y_i,\alpha)} + \E{\gamma_1^{M+1}}\left(\Gamma(t_0,\alpha)-\frac{\ell}{\alpha}(1-\gamma_2)+\gamma_2 V(t_0)\right),
        \label{eq:Vt1}
    \end{equation}
    where $y_1=t-\Exp(\lambda),y_{i+1}=y_i-\Exp(\lambda)$, and $M$ is the random variable such that $y_M\geq t_0,y_{M+1}<t_0$, with $M\sim Pois(\lambda(t-t_0))$. By the probability generating function of a Poisson variable, we have that 
    \begin{equation}
        \E{\gamma_1^{M+1}}=\gamma_1 PGF(\gamma_1)=\gamma_1 e^{-\lambda(t-t_0)(1-\gamma_1)}.
        \label{eq:PGF}
    \end{equation}

    Define $\PoisCost(t,t_0):= \E{\sum_{i=1}^M \gamma_1^{i-1}\costbar(y_i,\alpha)}$. We treat $t_0$ as constant and vary $t$ by $\delta$. Note that $\{y_i\}$ is distributed as Poisson arrivals in time $(t_0,t)$, we discuss different cases based on the number of $y_i$ in the interval $(t,t+\delta)$, denoted by $N$. Note that $N$ follows the distribution $Pois(\lambda \delta)$. 
    
    If $N=0$ (with probability $e^{-\lambda\delta}$), $\PoisCost(t+\delta,t_0)=\PoisCost(t,t_0)$. 
    
    If $N=1$ (with probability $\lambda\delta e^{-\lambda\delta}$), we have that 
    \[\PoisCost(t+\delta,t_0) = \costbar(t+\Delta,t_0) + \gamma_1 \PoisCost(t,t_0),\]
    where $0<\Delta<\delta$.

    The probability that  $N>1$ is $O(\delta^2)$.

    Thus we have that 
    \begin{equation}
        \PoisCost(t+\delta,t_0) = e^{-\lambda\delta}\PoisCost(t,t_0) + \lambda\delta e^{-\lambda\delta}\left(\costbar(t+\Delta,\alpha) + \gamma_1 \PoisCost(t,t_0)\right) + O(\delta^2).
    \end{equation}

    Therefore,
    \begin{align}
        \PoisCost'(t,t_0) &:= \lim_{\delta\to 0}\frac{\PoisCost(t+\delta,t_0) - \PoisCost(t,t_0)}{\delta}\nonumber\\
        &=\lim_{\delta\to 0} \frac{e^{-\lambda\delta} + \lambda\delta e^{-\lambda\delta}\gamma_1 - 1}{\delta}\PoisCost(t,t_0) + \lambda e^{-\lambda\delta}\costbar(t+\Delta,\alpha)\nonumber\\
        &= (\lambda\gamma_1-\lambda)\PoisCost(t,t_0)+\lambda \costbar(t,\alpha).
    \end{align}

    Note that this is a first-order ODE with respect to $t$. Using Lemma~\ref{lemma:ODE} and treating $y=\PoisCost(t,t_0),P(t) = \lambda(1-\gamma_1),Q(t)=\lambda \costbar(t,\alpha)$, we have that 
    \[\PoisCost(t,t_0) = e^{-\lambda(1-\gamma_1)t}\left( 
     \int e^{\lambda(1-\gamma_1)t}\lambda \costbar(t,\alpha) dt + C\right).\]
     Noting that $\PoisCost(t_0,t_0) = 0$, we have that 
     \begin{align}
         \PoisCost(t,t_0) &= e^{-\lambda(1-\gamma_1)t}\int_{t_0}^t e^{\lambda(1-\gamma_1)s}\lambda \costbar(s,\alpha) ds \nonumber\\
         &= \int_{t_0}^t e^{-\lambda(1-\gamma_1)(t-s)}\lambda \costbar(s,\alpha) ds.
         \label{eq:PoisCost}
     \end{align}

    Now substituting \eqref{eq:PoisCost} and \eqref{eq:PGF} into \eqref{eq:Vt1}, we have that 
    \begin{align}
        V(t) =& \costbar(t,\alpha) + \gamma_1 \PoisCost(t,t_0) + \E{\gamma_1^{M+1}}\left(\Gamma(t_0,\alpha)-\frac{\ell}{\alpha}(1-\gamma_2)+\gamma_2 V(t_0)\right)\nonumber\\
        =&\costbar(t,\alpha) + \gamma_1 \int_{t_0}^t e^{-\lambda(1-\gamma_1)(t-s)}\lambda \costbar(s,\alpha) ds\nonumber \\
        &+ 
        \gamma_1 e^{-\lambda(t-t_0)(1-\gamma_1)}\left(\Gamma(t_0,\alpha)-\frac{\ell}{\alpha}(1-\gamma_2)+\gamma_2 V(t_0)\right).
        \label{eq:Vt2}
    \end{align}    

    Moreover, we have that 
    \begin{align}
        \int_{t_0}^t e^{-\lambda(1-\gamma_1)(t-s)}\lambda \costbar(s,\alpha) ds =& \frac{1}{1-\gamma_1}e^{-\lambda(1-\gamma_1)(t-s)} \costbar(s,\alpha) \bigg |_{t_0}^t\nonumber \\
        &- \int_{t_0}^t \frac{1}{1-\gamma_1}e^{-\lambda(1-\gamma_1)(t-s)} \costbar'(s,\alpha)ds\nonumber \\
        &= \frac{1}{1-\gamma_1}\left(\costbar(t,\alpha) - e^{-\lambda(1-\gamma_1)(t-t_0)}\costbar(t_0,\alpha)\right)\nonumber\\
        &- \int_{t_0}^t \frac{1}{1-\gamma_1}e^{-\lambda(1-\gamma_1)(t-s)} \costbar'(s,\alpha)ds. 
        \label{eq:int term}
    \end{align}

    By \eqref{eq:gamma2}, \eqref{eq: equation2}, \eqref{eq: equation4} and \eqref{eq:V(t0)}, we have that 
    \begin{align}
        &\Gamma(t_0,\alpha)-\frac{\ell}{\alpha}(1-\gamma_2)+\gamma_2 V(t_0)\nonumber \\
        &= (1-\gamma_2)(r(t_0)-c(t_0)) - \frac{\ell}{\alpha}(1-\gamma_2) + \gamma_2\left( r(t_0)-c(t_0) -\frac{\ell}{\alpha} +\frac{\lambda+\alpha}{\mu}\frac{\ell}{\alpha}\right) \nonumber\\
        &=r(t_0)-c(t_0)-\frac{\ell}{\alpha} + \frac{\lambda}{\alpha}\frac{\ell}{\mu}. & \text{by }\eqref{eq:gamma2}
        \label{eq:other term}
        \end{align}

    Thus substituting \eqref{eq:int term} and \eqref{eq:other term} into \eqref{eq:Vt2}, we have that 
    \begin{align}
        V(t) 
        =&\costbar(t,\alpha) + \frac{\gamma_1}{1-\gamma_1}\left(\costbar(t,\alpha) - e^{-\lambda(1-\gamma_1)(t-t_0)}\costbar(t_0,\alpha)\right)\nonumber\\
        &- \int_{t_0}^t \frac{\gamma_1}{1-\gamma_1}e^{-\lambda(1-\gamma_1)(t-s)} \costbar'(s,\alpha)ds\nonumber \\
        &+\gamma_1 e^{-\lambda(t-t_0)(1-\gamma_1)}\left(r(t_0)-c(t_0)-\frac{\ell}{\alpha} + \frac{\lambda}{\alpha}\frac{\ell}{\mu}\right) 
    \end{align}

    Note that by \eqref{eq: equation1}, \eqref{eq: equation6} and \eqref{eq:guess W}, \[\costbar(t_0,\alpha) = (1-\gamma_1)(r(t_0)-c(t_0)+\frac{1-\gamma_1}{\gamma_1}\frac{\ell}{\alpha}).\]

        Thus, we have that 

    \begin{align}
        V(t) 
        =&\frac{1}{1-\gamma_1}\costbar(t,\alpha) - \frac{\gamma_1}{1-\gamma_1}e^{-\lambda(1-\gamma_1)(t-t_0)}\costbar(t_0,\alpha)\nonumber\\
        &- \int_{t_0}^t \frac{\gamma_1}{1-\gamma_1}e^{-\lambda(1-\gamma_1)(t-s)} \costbar'(s,\alpha)ds\nonumber \\
        &+\gamma_1 e^{-\lambda(t-t_0)(1-\gamma_1)}\left(r(t_0)-c(t_0)-\frac{\ell}{\alpha} + \frac{\lambda}{\alpha}\frac{\ell}{\mu}\right)\nonumber\\
        =&\frac{1}{1-\gamma_1}\costbar(t,\alpha) - \gamma_1e^{-\lambda(1-\gamma_1)(t-t_0)}\left(r(t_0)-c(t_0)+\frac{1-\gamma_1}{\gamma_1}\frac{\ell}{\alpha}\right)\nonumber\\
        &- \int_{t_0}^t \frac{\gamma_1}{1-\gamma_1}e^{-\lambda(1-\gamma_1)(t-s)} \costbar'(s,\alpha)ds\nonumber \\
        &+\gamma_1 e^{-\lambda(t-t_0)(1-\gamma_1)}\left(r(t_0)-c(t_0)-\frac{\ell}{\alpha} + \frac{\lambda}{\alpha}\frac{\ell}{\mu}\right)\nonumber\\
        =& \frac{1}{1-\gamma_1}\costbar(t,\alpha) + \gamma_1e^{-\lambda(1-\gamma_1)(t-t_0)}\left(\frac{\lambda}{\mu}-\frac{1}{\gamma_1}\right)\frac{\ell}{\alpha} \nonumber\\
        &- \int_{t_0}^t \frac{\gamma_1}{1-\gamma_1}e^{-\lambda(1-\gamma_1)(t-s)} \costbar'(s,\alpha)ds 
        \label{eq:Vt3}
    \end{align}

    Now look back to our goal, which is proving \eqref{eq:obj simp}:
    \[\E{c(t_0+X)}\leq  -\frac{1-\gamma_1}{\gamma_1}V(t)+\frac{1}{\gamma_1}\costbar(t,\alpha).\]
    This is equivalent to proving 
    \[V(t)-\frac{1}{1-\gamma_1}\costbar(t,\alpha)\leq -\frac{\gamma_1}{1-\gamma_1}\E{c(t_0+X)}.\]
    By \eqref{eq:Vt3}, our goal is equivalent to proving 
    \begin{equation*}
        \gamma_1e^{-\lambda(1-\gamma_1)(t-t_0)}\left(\frac{\lambda}{\mu}-\frac{1}{\gamma_1}\right)\frac{\ell}{\alpha} - \int_{t_0}^t \frac{\gamma_1}{1-\gamma_1}e^{-\lambda(1-\gamma_1)(t-s)} \costbar'(s,\alpha)ds\leq -\frac{\gamma_1}{1-\gamma_1}\E{c(t_0+X)}
    \end{equation*}
    Rearranging the terms and using \eqref{eq:guess W}, our goal is equivalent to proving that 
    \begin{equation}
        \E{c(t_0+X)}\left(\frac{\gamma_1}{1-\gamma_1}+ e^{-\lambda(1-\gamma_1)(t-t_0)} \left(\frac{\lambda \gamma_1}{\alpha} - \frac{\mu}{\alpha}\right)\right)\leq  \int_{t_0}^t \frac{\gamma_1}{1-\gamma_1}e^{-\lambda(1-\gamma_1)(t-s)} \costbar'(s,\alpha)ds.
        \label{eq:obj3}
    \end{equation}

    Note that by \eqref{eq: equation1} and \eqref{eq: equation6}, we have that 
    \[\costbar(t,\alpha) = (1-\gamma_1)(r(t)-c(t)+\frac{\mu(1-\gamma_1)}{\gamma_1\alpha}\E{c(t+X)}).\]

    Thus using Lemma~\ref{lemma: r'} and \eqref{eq:gamma2},
    \begin{align*}
    \costbar'(t,\alpha) &= (1-\gamma_1)\left(\lambda c(t)+\frac{\mu(1-\gamma_1)}{\gamma_1\alpha}\E{c'(t+X)}\right)    \\
    &= (1-\gamma_1)\left(\lambda c(t)+\frac{\lambda+\alpha-\lambda\gamma_1}{\alpha}\E{c'(t+X)}\right). 
    \end{align*}

    Moreover, by \eqref{eq:gamma2},  
    \[\frac{\lambda \gamma_1}{\alpha} - \frac{\mu}{\alpha} =\frac{1}{\alpha}(\lambda\gamma_1-\frac{(\lambda+\alpha-\lambda\gamma_1)\gamma_1}{1-\gamma_1}) =-\frac{\gamma_1}{1-\gamma_1},\]

    Thus we have that our goal, equivalent to \eqref{eq:obj3}, is equivalent to showing that
    \begin{equation}
        \E{c(t_0+X)}\frac{1}{1-\gamma_1}\left(1-e^{-\lambda(1-\gamma_1)(t-t_0)}\right)\leq \int_{t_0}^t e^{-\lambda(1-\gamma_1)(t-s)} \left(\lambda c(s)+\frac{\lambda+\alpha-\lambda\gamma_1}{\alpha} \E{c'(s+X)}\right)ds.
    \end{equation}

    Finally, we have that for any $s>t_0$,
    \begin{align*}
        &\lambda c(s)+\frac{\lambda+\alpha-\lambda\gamma_1}{\alpha} \E{c'(s+X)} \\
        &= \lambda c(s) + \frac{\lambda+\alpha-\lambda\gamma_1}{\alpha} \cdot \frac{\alpha}{1-\gamma_1}\left(\E{c(s+X)}-c(s)\right) & Lemma~\ref{lemma:exp formula} \\
        &=\left(\lambda - \frac{\lambda+\alpha-\lambda\gamma_1}{1-\gamma_1}\right) c(s) + \frac{\lambda+\alpha-\lambda\gamma_1}{1-\gamma_1} \E{c(s+X)} \\
        &= - \frac{\alpha}{1-\gamma_1} c(s) + \frac{\lambda+\alpha-\lambda\gamma_1}{1-\gamma_1} \E{c(s+X)} \\
        &\geq \left(- \frac{\alpha}{1-\gamma_1} + \frac{\lambda+\alpha-\lambda\gamma_1}{1-\gamma_1}\right) \E{c(s+X)} \\
        &= \lambda \E{c(s+X)}\\
        &\geq \lambda \E{c(t_0+X)}.
    \end{align*}

    Thus 
    \begin{align*}
        &\int_{t_0}^t e^{-\lambda(1-\gamma_1)(t-s)} \left(\lambda c(s)+\frac{\lambda+\alpha-\lambda\gamma_1}{\alpha} \E{c'(s+X)}\right)ds \\
        &\geq \int_{t_0}^t e^{-\lambda(1-\gamma_1)(t-s)} \lambda \E{c(t_0+X)}ds\\
        &= \E{c(t_0+X)}\frac{1}{1-\gamma_1}\left(1-e^{-\lambda(1-\gamma_1)(t-t_0)}\right).
    \end{align*}

\end{proof}

To prove indexability, there are actually two other corner cases to prove. 

\begin{lemma}
    \begin{enumerate}
        \item If the compensation $\ell\leq \mu\E{c(X)}$, always active (except at state 0) is optimal.
        \item If $\lim_{t\to \infty} \mu\E{c(t+X)}=M$ exists  and
        the compensation $\ell\geq \mu M$, always passive is optimal.
    \end{enumerate}
    \label{lemma: corner}
\end{lemma}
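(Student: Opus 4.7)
Both parts will be proved by verifying the Hamilton--Jacobi--Bellman (HJB) inequality for a candidate optimal policy, following the template of Lemma~\ref{lemma: main in index}. In each case the approach is to write down the value function $V$ of the proposed policy (always-active for Part 1, always-passive for Part 2) and then check that at every relevant state, the Bellman optimality condition for the chosen action holds.

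Part 2 is the cleaner case. Under always-passive the state evolves deterministically, so
\[
V(t)\;=\;\int_0^\infty\bigl(\alpha r(t+s)-\ell\bigr)\,e^{-\alpha s}\,ds,
\]
and the HJB condition at $t>0$ reduces to $\ell\geq\mu\bigl(V(t)-\E{V(t-T_2)}\bigr)$ with $T_2\sim\Exp(\lambda)$. Combining the interchange identity of Lemma~\ref{lemma:exp formula} with $r'(v)=c'(v)+\lambda c(v)$ from Lemma~\ref{lemma: r'} (using the same style of telescoping as in Lemma~\ref{lemma: four eq}), the difference $V(t)-\E{V(t-T_2)}$ collapses cleanly to $\E{c(t+Y)}$ for $Y\sim\Exp(\alpha)$. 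Since $c$ is non-decreasing and the existence of $M$ forces $c$ to be bounded with $c(\infty)=M/\mu$, we conclude $\mu\E{c(t+Y)}\leq M\leq\ell$, closing the inequality.

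For Part 1, I would compute $V$ via the busy-period decomposition used in Case 2 of Lemma~\ref{lemma: main in index}, specialized to threshold $t_0=0$. Starting at $t>0$, the state undergoes successive M/M/1 busy-period-like active phases at levels $y_0=t,\;y_{i+1}=y_i-\Exp(\lambda)$, each contributing a $\costbar(y_i,\alpha)$ term; after a (Poisson-distributed) finite number of such phases the state drops below $0$, a forced passive phase of length $\Exp(\lambda)$ returns it to $0^+$, and the process restarts with value $V(0^+)$. Solving the resulting fixed-point equation for $V(0^+)$ yields a closed form for $V(t;\ell)$ in terms of $\costbar(\cdot,\alpha)$, $\gamma_1$, $\gamma_2$ and $\ell$, which is affine in $\ell$. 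The HJB inequality $\ell\leq\mu\bigl(V(t;\ell)-\E{V(t-T_2;\ell)}\bigr)$ is then verified by direct manipulation, invoking the identities of Section~\ref{sec:simpli} (the $\costbar$ formula from Lemma~\ref{lemma:eq1} and the expectation reductions of Lemma~\ref{lemma: four eq}) to reduce the right-hand side to a form where the assumption $\ell\leq\mu\E{c(X)}$ together with monotonicity of $c$ makes the inequality transparent. The endpoint $\ell=\mu\E{c(X)}$ itself is inherited from Lemma~\ref{lemma: main in index} by letting $t_0\to 0^+$ (since $\guess(t_0,\alpha)\to\mu\E{c(X)}$ and $Threshold(t_0)\to$ always-active); the extension to $\ell<\mu\E{c(X)}$ follows from the affine-in-$\ell$ structure of $V$ by case-splitting on the sign of the coefficient of $\ell$ in the resulting linear inequality (when the coefficient is non-negative, monotonicity from the endpoint case suffices; when it is negative, the left-hand side is non-positive while the right-hand side is non-negative, so the inequality is automatic).

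\textbf{The main obstacle} is the algebra in Part 1: it parallels the lengthy Case 2 computation of Lemma~\ref{lemma: main in index}, with extra bookkeeping at the boundary $t=0$ handled through $V(0^+)$. What keeps it tractable is the observation that $V(t;\ell)$ is affine in $\ell$, so after establishing the HJB at the endpoint $\ell=\mu\E{c(X)}$ the extension reduces to checking the sign of a simple coefficient; the busy-period decomposition itself brings the same family of closed-form identities (Lemmas~\ref{lemma:eq1}--\ref{lemma: four eq}) to bear that already powered the derivation of the guessed Whittle Index.
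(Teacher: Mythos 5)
Your proposal follows essentially the same route as the paper's proof: for Part 2 you write down the always-passive value function, reduce the HJB condition to $\ell\geq\mu\bigl(V(t)-\E{V(t-T_2)}\bigr)$, and close it with Lemma~\ref{lemma: r'} and the exponential identities (your observation that the difference collapses exactly to $\E{c(t+Y)}$ with $Y\sim\Exp(\alpha)$, via $\E{r(u-T_2)}=r(u)-c(u)$, is a slightly cleaner version of the paper's term-by-term bound); for Part 1 you redo Case 2 of Lemma~\ref{lemma: main in index} with threshold $0$, which is exactly what the paper does (its proof of Part 1 is literally a pointer to that case). Two caveats. First, in Part 1 the negative-coefficient branch of your affine-in-$\ell$ case split is not justified as written: the claim that ``the left-hand side is non-positive'' fails, since the compensation $\ell$ can be any value in $(0,\mu\E{c(X)}]$. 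That branch is in fact vacuous — writing $D(u)$ for the expected discounted passive time of the threshold-$0$ policy from state $u$, one has $\partial_\ell\bigl(V_\ell(t)-\E{V_\ell(t-T_2)}\bigr)=(1-\gamma_1)\E{D(t-T_2)}$ and $\mu(1-\gamma_1)\E{D(t-T_2)}\leq\frac{\mu(1-\gamma_1)(1-\gamma_2)}{\alpha(1-\gamma_1\gamma_2)}=\gamma_1<1$, so the HJB gap is increasing in $\ell$ and only your favorable case occurs — but you would need to supply such an argument (or simply carry a general $\ell\leq\mu\E{c(X)}$ through the Case 2 algebra, as your ``direct manipulation'' route proposes and as the paper implicitly intends), rather than rely on the stated case split. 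Second, your closing chain in Part 2, $\mu\E{c(t+Y)}\leq M\leq\ell$, uses $\ell\geq M$ whereas the hypothesis reads $\ell\geq\mu M$; this factor-of-$\mu$ mismatch is inherited from the statement itself (the paper's own proof asserts $\lim_{t\to\infty}c(t)=M$, though its definition of $M$ gives $M/\mu$), so it is a bookkeeping issue shared with the paper rather than a flaw in your approach, but it is worth reconciling the constants explicitly.
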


\begin{proof}
    To prove the first argument, the argument is the same as what we have in Case 2 of the proof of Lemma~\ref{lemma: main in index}. Now we prove the second argument.

    Since $\lim_{t\to \infty} \mu\E{c(t+X)}=M$ exists and $c$ is monotone, we have that $\lim_{t\to \infty} c(t) = M$.

    For the policy ``always passive", we can derive the value function:
    \begin{align}
        V(t) &= \int_0^\infty (\alpha r(t+s) - \ell) e^{-\alpha s}ds\nonumber\\ 
        &= \int_0^\infty \alpha r(t+s)e^{-\alpha s}ds - \frac{\ell}{\alpha} \nonumber\\
        &= -r(t+s) e^{-\alpha s}\bigg|_0^{\infty} - \int_0^\infty -r'(t+s)e^{-\alpha s} ds - \frac{\ell}{\alpha}\nonumber\\
        &= r(t) + \int_0^\infty (c'(t+s)+\lambda c(t+s))e^{-\alpha s} ds - \frac{\ell}{\alpha}. & Lemma~\ref{lemma: r'}
        \label{eq:passive value}
    \end{align}

    Now we verify that this value function satisfies the Hamilton–Jacobi–Bellman equation, which is equivalent to showing:
    \[\lim_{\delta\to 0} \frac{passive(t)-active(t)}{\delta} \leq 0.\]
    By the same argument in \eqref{eq:passive} and \eqref{eq:active}, our goal is equivalent to showing
    \begin{equation}
        \ell \geq \mu(V(t) - \E{V(t-T_2)}).
    \end{equation}

    Note that \eqref{eq:Evt-T2} still holds. Thus it suffices to prove that 
    \[\ell\geq \mu \left((1-\gamma_2) (V(t)+\frac{\ell}{\alpha}) - \Gamma(t,\alpha)\right).\]
    Using \eqref{eq:passive value}, \eqref{eq: equation2} and \eqref{eq: equation4}, we only need to prove that 
    \[\frac{\ell}{\mu}\geq (1-\gamma_2)\left(r(t) + \int_0^\infty (c'(t+s)+\lambda c(t+s))e^{-\alpha s} ds -(r(t)-c(t))\right).\]

    Note that the right hand side can be bounded as follows:
    \begin{align*}
        &(1-\gamma_2)\left(r(t) + \int_0^\infty (c'(t+s)+\lambda c(t+s))e^{-\alpha s} ds -(r(t)-c(t))\right) \\
        &\leq (1-\gamma_2)\left(c(t) + \int_0^\infty c'(t+s) ds + \int_0^\infty \lambda M e^{-\alpha s} ds \right)\\
        &= (1-\gamma_2)(\lim_{t\to \infty} c(t) + \frac{\lambda}{\alpha} M )\\
        &= (1-\gamma_2) \frac{\lambda+\alpha}{\alpha} M\\
        &= M.
    \end{align*}
    Since $\ell\geq \mu M$, we have the proof.
\end{proof}

Finally we can prove indexability. 

\begin{theorem}[Indexability]
\label{thm:index app}
    The discounted single-arm bandit problem is indexable with the Whittle Index $W(t_0,\alpha)=\guess(t_0,\alpha)$.
\end{theorem}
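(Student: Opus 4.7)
The plan is to leverage Lemma~\ref{lemma: main in index} together with the monotonicity of $\guess(\cdot,\alpha)$ to produce a clean characterization of $\Pi(\ell)$. First, I would record the key structural fact that $\guess(t_0,\alpha)=\mu\E{c(t_0+X)}$ is continuous and non-decreasing in $t_0$ (since $c$ is non-decreasing and smooth), with range $[\guess(0,\alpha),\mu M)$ where $M=\lim_{t\to\infty}c(t)\in[0,\infty]$. Thus the map $t_0\mapsto\guess(t_0,\alpha)$ is an order-preserving parametrization of compensations by thresholds, which is the structural backbone of the argument.

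Using this, I would characterize $\Pi(\ell)$ for each value of $\ell$. The two corner cases of Lemma~\ref{lemma: corner} handle the extremes directly: $\Pi(\ell)=(-\infty,0]$ when $\ell\leq\guess(0,\alpha)$, and $\Pi(\ell)=\mathbb{R}$ when $\ell\geq\mu M$. For intermediate $\ell$, I pick via the intermediate value theorem a threshold $t_\ell$ (taking the largest such value if $\guess$ has flat stretches) with $\guess(t_\ell,\alpha)=\ell$. Lemma~\ref{lemma: main in index} yields that $Threshold(t_\ell)$ is optimal under compensation $\ell$, so $(-\infty,t_\ell]\subseteq\Pi(\ell)$. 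For the reverse inclusion, I would argue that at any state $s>t_\ell$ passive is strictly suboptimal: apply Lemma~\ref{lemma: main in index} at a nearby $t'_0\in(t_\ell,s]$, which makes $Threshold(t'_0)$ optimal at a strictly larger compensation $\guess(t'_0,\alpha)>\ell$, and use concavity of the optimal value function in $\ell$ (standard from Lagrangian duality on the relaxed R-MAB) together with continuity in the compensation parameter to rule out passive optimality at $s$ under the strictly smaller $\ell$. Combining the inclusions gives $\Pi(\ell)=(-\infty,t_\ell]$.

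Indexability then follows immediately: for $\ell_1<\ell_2$, monotonicity gives $t_{\ell_1}\leq t_{\ell_2}$, hence $\Pi(\ell_1)\subseteq\Pi(\ell_2)$. The Whittle Index is then computed directly from the definition:
\[W(t_0,\alpha)=\inf\{\ell:t_0\in\Pi(\ell)\}=\inf\{\ell:t_\ell\geq t_0\}=\guess(t_0,\alpha),\]
where the last equality uses continuity and monotonicity of $\guess$ in $t_0$. The main obstacle I anticipate is the reverse inclusion $\Pi(\ell)\subseteq(-\infty,t_\ell]$: Lemma~\ref{lemma: main in index} only asserts that $Threshold(t_0)$ is \emph{an} optimal policy, not that it is the unique optimal one, so ruling out passive optimality at $s>t_\ell$ requires sharpening the HJB inequalities of Case~2 in that lemma to strict inequalities, with a small perturbation-and-limit argument to accommodate holding cost functions whose $c$ has locally flat regions.
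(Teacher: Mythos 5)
Your proposal is correct and follows essentially the same route as the paper: the corner cases of Lemma~\ref{lemma: corner}, the threshold-optimality result of Lemma~\ref{lemma: main in index}, and monotonicity (with continuity) of $\guess(\cdot,\alpha)$ to order the passive sets and then read off the index as $\inf_\ell\{t_0\in\Pi(\ell)\}$. The only difference is one of care rather than of method: the paper simply asserts $\Pi(\ell_1)=\{t\mid t\le t_1\}$ directly from Lemma~\ref{lemma: main in index}, whereas you correctly flag the reverse inclusion (ruling out passive optimality above the threshold, e.g.\ via strict HJB inequalities or a perturbation argument) as the step needing extra justification.
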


\begin{proof}
We first prove indexability.
Recall that \[\Pi(\ell):=\{t_0 \mid \text{passive action at state $t_0$ is optimal}\}.\]
    Now we prove that for any $\ell_1<\ell_2$, we have that $\Pi(\ell_1)\subseteq \Pi(\ell_2)$.

    If $\ell_1\leq \mu\E{c(X)}$, by Lemma~\ref{lemma: corner}, $\Pi(\ell_1)=\{0\}$, and the statement holds immediately. 

    If $\ell_2 \geq \lim_{t\to \infty} \mu\E{c(t+X)}$, by Lemma~\ref{lemma: corner}, $\Pi(\ell_2)=\mathbb{R}$, and the statement holds immediately. 

    Otherwise, we have that for some $t_1,t_2$, $\ell_1 = \guess(t_1,\alpha)$, $\ell_2 = \guess(t_2,\alpha)$. Since $\guess(t,\alpha)$ is non-decreasing with $t$, we have that $t_1<t_2$. By Lemma~\ref{lemma: main in index}, $\Pi(\ell_1) = \{t\mid t\leq t_1\}$, and $\Pi(\ell_2) = \{t\mid t\leq t_2\}$. Thus we prove the statement. 

    The proof that the Whittle Index is $\guess(t,\alpha)$ is straightforward. Mathematically, we want to show that 
    \[\guess(t_0,\alpha) = \inf_\ell\{t\in \Pi(\ell)\}.\]
    This follows immediately by Lemma~\ref{lemma: corner} and Lemma~\ref{lemma: main in index}.
\end{proof}

\end{document}